\definecolor{DarkGreen}{rgb}{0.2,0.6,0.2}
\definecolor{purple}{rgb}{0.6,0.3,0.8}
\def\d{\mathrm{d}}
\newcommand{\E}{\mathbb{E}}
\newcommand{\R}{\mathbb{R}}
\newcommand{\N}{\mathbb{N}}
\newcommand{\dsquare}{\mathop{  \square} \displaylimits}
\newcommand{\p}{\mathbb{P}}
\newcommand{\X}{\mathcal{X}}
\newcommand{\cR}{\mathcal{R}}
\newcommand{\id}{\mathds{1}}
\renewcommand{\ge}{\geqslant}
\renewcommand{\le}{\leqslant}
\renewcommand{\geq}{\geqslant}
\renewcommand{\leq}{\leqslant}
\renewcommand{\epsilon}{\varepsilon}
\renewcommand{\cdots}{\dots}
\theoremstyle{plain}
\newtheorem{theorem}{Theorem}
\newtheorem{corollary}{Corollary}
\newtheorem{proposition}{Proposition}
\theoremstyle{definition}
\newtheorem{definition}{Definition}
\newtheorem{example}{Example}
\theoremstyle{remark}
\newtheorem{remark}{Remark}
\DeclareMathOperator*{\argmin}{arg\,min}
\newcommand{\VaR}{\mathrm{VaR}}
\newcommand{\ES}{\mathrm{ES}}
\title{Quantiles under ambiguity  and risk sharing}
\author{Peng Liu\thanks{School of Mathematics, Statistics and Actuarial Science, University of Essex, UK (\href{mailto:peng.liu@essex.ac.uk}{peng.liu@essex.ac.uk}).}
\and 
Tiantian Mao\thanks{School of Management, University of Science and Technology of China, China (\href{mailto:tmao@ustc.edu.cn} {tmao@ustc.edu.cn}).}
\and
 Ruodu Wang\thanks{Department of Statistics and Actuarial Science,
  University of Waterloo,  Canada
(\href{mailto:wang@uwaterloo.ca}{wang@uwaterloo.ca}).}
}
\begin{document}

\maketitle

\begin{abstract}
Choquet capacities and integrals are central concepts in decision making under ambiguity or model uncertainty, pioneered by Schmeidler. 
Motivated by risk optimization problems  for quantiles under ambiguity, we study the subclass of  Choquet integrals, called Choquet quantiles, which generalizes the usual (probabilistic) quantiles, also known as Value-at-Risk in finance, from probabilities to capacities. 
Choquet quantiles share many features with probabilistic quantiles, in terms of axiomatic representation, optimization formulas, and risk sharing. We characterize   Choquet quantiles  via  only one axiom, called ordinality. We   prove that the inf-convolution of Choquet quantiles is again a Choquet quantile, leading to explicit optimal allocations in risk sharing problems for  quantile agents under ambiguity. A new class of risk measures,  Choquet Expected Shortfall, is introduced, which enjoys most properties of the coherent risk measure Expected Shortfall. Our theory is complemented by optimization algorithms, numerical examples, and a stylized illustration  with financial data.

\smallskip
\noindent\textbf{Keywords}: Choquet integrals, inf-convolution, optimal allocation, Value-at-Risk, capital requirement
\end{abstract}

\section{Introduction}

Choquet integrals are  a central concept in  decision making  under ambiguity, pioneered by \cite{S89}, and they are also fundamental tools for decision under risk, as in the celebrated theories of \cite{Y87} and \cite{TK92}.
We   study a special class of Choquet integrals, called Choquet quantiles, and their risk sharing problems. The Choquet quantiles are natural generalizations of the usual quantiles, with the interpretation of quantiles under ambiguity.  
The motivation for studying Choquet quantiles and their relevance in the risk sharing context can be illustrated by the following simple example in finance.

Suppose that a financial institution has an investment in   assets with future (after one period) aggregate random loss $X$, and the  institution has several subsidiaries, separated by their locations, business types, or other factors. 
Each subsidiary is allocated a random loss $X_i$ and 
is regulated by a risk measure, assumed to be a Value-at-Risk (VaR),\footnote{Although for evaluating market risk, VaR has been replaced by the Expected Shortfall (ES) in the recent Basel Accords(\cite{BASEL19}), it is still widely used for other types of risks, such as operational risk and credit risk.}  denoted by $\VaR_{\alpha_i}^\p$, which is the  $\alpha_i$-level   quantile of their financial loss distribution evaluated under an objectively specified probability measure $\p$.
Thus, $\VaR_{\alpha_i}^\p(X_i)$ is the risk exposure, also the capital requirement, of subsidiary $i$.
The total capital requirement (risk exposure)  of the financial institution,    is 
$$
\sum_{i=1}^n \VaR^\p_{\alpha_i}(X_i),
$$ 
and the institution can choose $(X_1,\dots,X_n)$ subject to $ X_1+\dots+X_n=X$ to minimize the total risk exposure. This is the risk sharing problem studied by \cite{ELW18}. 

In the above formulation, the probability $\p $ needs to be specified and unified across all subsidiaries. 
Arguably, an objectively truthful probability in  financial practice, even if it exists,  is never known for certain. The banks and insurance companies often do not know  or agree on the probability of a certain loss event. This concern is increasingly relevant in today's economy,  full of various kinds of uncertainty. 
In the Basel regulatory framework (\cite{BASEL19}), it is explicitly required that risks need to be quantified under many stressed probabilities.
This leads to ambiguity, also often called model uncertainty, in risk evaluation and decision making. 

To formulate risk evaluation under ambiguity, a common practice, instructed by \cite{BASEL19}, is to statistically specify  a collection of possible probabilities, and then use the worst-case risk measure. 
In our setting, this means to use
$\sup_{\mathbb Q\in \mathcal P_i} \VaR_{\alpha_i}^{\mathbb Q}(X_i)$ to evaluate the risk exposure for each subsidiary, where $\mathcal P_i$ 
is the set of possible probabilities, also called an ambiguity set or a set of priors. Using the worst-case risk measures under multiple probabilities is consistent with the   maxmin decision model of \cite{GS89} and it is a common approach in risk optimization under uncertainty. The total risk exposure becomes
\begin{align}\label{eq:main}
\sum_{i=1}^n\sup_{\mathbb Q\in \mathcal P_i} \VaR_{\alpha_i}^{\mathbb Q}(X_i).
\end{align}
The above problem can be formulated in a   more general context of risk sharing among $n$ agents, whose preferences are modelled by  worst-case quantiles. In such a general setting, an allocation   minimizes \eqref{eq:main} if and only if it is Pareto optimal (\cite{ELW18}). Quantile-based agents are also relevant outside finance, as axiomatizated by \cite{R10} in decision theory. 

The minimization of  \eqref{eq:main} subject to $X_1+\dots+X_n=X$ is one of the  main objectives of the paper.
The functional that maps $X$ to the minimum value of \eqref{eq:main}  is called an inf-convolution, and the corresponding minimizer is called an optimal allocation. 
To solve for the inf-convolution and optimal allocation requires us to 
understand better the worst-case quantiles as risk measures.

It turns out that the worst-case quantiles belong to a large class of mappings, which we call Choquet quantiles.
A Choquet quantile is a quantile with respect to a capacity, representing non-additive priors in decision theory,
and it belongs to the class of Choquet integrals. 
Choquet quantiles are not well explored in the literature, with the notable exception of \cite{C07} in the context of preference aggregation. The first part of our paper is  dedicated to a systemic study of Choquet quantiles.   Section  \ref{sec:def} contains the definitions of Choquet integrals, Choquet quantiles, and probabilistic quantiles, as well as their basic relations. Section \ref{sec:main} contains our main characterization result in Theorem \ref{th:main}, which shows that Choquet quantiles, as mappings $\cR$ from the set of bounded random variables to real numbers, can be characterized by only one axiom of ordinality, meaning  that $\cR(\phi(X))=\phi(\cR(X))$ for all random variables $X$ and  increasing and continuous real functions $\phi$. This result is similar to a characterization in \cite{C07} where two axioms are used to characterize Choquet quantiles.

Section \ref{sec:inf-convolution}   addresses the optimal risk sharing problem for Choquet quantiles, 
including those  for 
 worst-case quantiles under ambiguity in \eqref{eq:main}, as well as quantiles under heterogeneous beliefs, studied by \cite{ELMW20}. 
The main finding in Theorem \ref{thm:inf} is that the inf-convolution of Choquet quantiles is again a Choquet quantile, and its optimal allocation has an explicit formula. However, the inf-convolution of worst-case quantiles  
 is not necessarily a worst-case quantile. Hence, we  arrive at Choquet quantiles  even if we are  interested only in the subclass of worst-case quantiles or the subclass of probabilistic quantiles.
As a consequence of the obtained optimal allocation, 
using the worst-case quantiles to incorporate ambiguity cannot capture the tail risk, in a way similar to probabilistic quantiles, as discussed by \cite{ELW18}. 
 
 The Expected Shortfall (ES) is  another popular regulatory risk measure in banking and insurance,  as a counterpart to VaR.  
 To broaden the scope of our study, and  for a better understanding of Choquet quantiles, we 
 explore ES under ambiguity in Section \ref{sec:ES}. 
 The  Choquet  ES  is introduced, which is defined via integrating Choquet quantiles. 
  The Choquet  ES admits   three equivalent alternative formulations enjoyed by  ES: as a Choquet integral,  as the minimum of an optimization problem, and as a robust expectation.
 In particular, a generalization of the  formula of \cite{RU02} in Theorem \ref{th:min} connects Choquet ES and Choquet quantiles by a simple optimization problem.  
 Moreover, Choquet ES is a coherent risk measure for some classes of capacities. All those results suggest that Choquet  ES is a natural generalization of ES to incorporate ambiguity, maintaining the main features of ES.  The inf-convolution of Choquet $\ES$ is also studied. 
 As a notable remark, the supremum of ES over a set of probabilities generally cannot be written as a Choquet ES, in sharp contrast to the case of Choquet quantiles.
 
  After the theoretical development in Sections \ref{sec:main}--\ref{sec:ES},    we turn to solving the risk sharing problem for worst-case quantiles under ambiguity in \eqref{eq:main}  by introducing some numerical algorithms in Section \ref{sec:numerical}.  
  Finally, Section \ref{sec:realdata} presents a stylized financial data analysis for sharing financial losses, where we use some standard approach to build the ambiguity sets used by agents. 
  Section \ref{sec:conclusion} concludes the paper. All proofs are relegated to Appendix \ref{app:A}, and  some additional numerical details and figures are provided in Appendix \ref{app:numerical}.

 \subsection{Related literature}
 \label{sec:11}

Quantiles, called VaR in finance,   are one of the dominating regulatory risk measures to quantify the regulatory capital; see  \cite{MFE15} for a general treatment. In decision theory, the decision maker uses quantiles to rank the interested prospects, and  preferences induced by quantiles are studied by \cite{R10}, \cite{CG19, CG22} and \cite{CGNQ22}.  
Quantiles work as either objectives or constraints in the  portfolio optimization problem; see e.g., \cite{BS01} and \cite{HZ11}. 
Conditional quantiles are used to quantify systemic risks by \cite{AB16}.

Ambiguity is an active area of research in decision theory. 
Modeling decision under ambiguity by a worst-case risk measure or utility over probabilities is a classic approach, initially axiomatized by \cite{GS89} (maxmin expected utilities) and further explored and extended by \cite{HS01} (multiplier preferences), \cite{GMM04} (the $\alpha$-maxmin model), 
\cite{MMR06} (variational preferences), 
and \cite{CHMM21} (preferences under model misspecification). 
Such a formulation is also central to the theory of coherent risk measures of 
\cite{ADEH99}.
The worst-case risk measures under multiple probabilities are widely studied in the field of risk optimization under uncertainty; see 
 \cite{GOO03}, \cite{NPS08}, and \cite{ZF09} for the cases of VaR and ES.

This definition of Choquet quantiles, as a special class of Choquet integrals, falls in the realm of decision making with non-additive priors, as in \cite{Y87} (dual utility thoery), \cite{Qui82} (rank-dependent expected utility), and \cite{S89} (Choquet expected utility).
 \cite{MM04} offered a summary of capacities and Choquet integrals in decision making. 
 Choquet quantiles are closlely related to unanimity games in game theory (e.g., \cite{M95}). As mentioned above, the closest work on Choquet quantiles to our study is 
 \cite{C07}, although the motivation and interpretation there are quite different from ours, 
 as we treat  
Choquet quantiles as monetary risk measures  in the sense of \cite{FS16} and are mainly motivated by applications in risk management.

Risk sharing and inf-convolution for monetary risk measures have been studied by \cite{BE05}, \cite{JST08},  \cite{FS08}, and \cite{D12},   assuming convexity. Note that Choquet quantiles are not convex in general. Without convexity, risk sharing problems  for risk measures under ambiguity or heterogeneous probabilities among agents  are more sophisticated, and studied recently by
  \cite{ELMW20}, \cite{L24} and \cite{LMWW22, LTW24}.  
The issue of VaR being unable to capture tail risk is related to the non-robustness of VaR in risk optimization under ambiguity, as discussed by  \cite{ESW22}.

ES has replaced VaR as the standard risk measure for market risk in the recent Basel Accords because it better captures the tail risk (\cite{BASEL16, BASEL19}) and  diversification effect; see \cite{RU02} for optimization of ES, also called Conditional VaR (CVaR), and \cite{WZ21b} for an axiomatization of ES in the context of Basel Accords.
With the current regulatory frameworks, the two classes  of risk measures, VaR and ES, continue to coexist in financial and insurance regulation.  
\cite{EPRW14} and \cite{EKT15} discussed various issues of choosing risk measures in financial regulation.

\section{Choquet quantiles}\label{sec:def}
Let $\X$ be the set of bounded functions (also called random variables) on a measurable space $(\Omega,\mathcal F)$.
Throughout, terms like ``increasing" and ``decreasing" are in the non-strict sense. 
\begin{definition}
\label{def:0}
\begin{enumerate}[(i)]
\item {A function  $w:\mathcal F\to \R$ is \emph{increasing} if $w(A)\le w(B)$ whenever $A\subseteq B$.} 
\item 
    A \emph{capacity} is an increasing function $w:\mathcal F\to \R$
satisfying $w(\varnothing)=0$ and $w(\Omega)=1$. 
\item 
A capacity is \emph{binary} if it takes values in $\{0,1\}$.
\item A \emph{probability} is a capacity further satisfying countable additivity.
\item A \emph {sup-probability} 
is a capacity $\sup_{\mathbb Q\in \mathcal P} \mathbb Q$  for a set of probabilities $\mathcal P$.
\end{enumerate}
\end{definition}
All capacities  and probabilities are   defined on $(\Omega,\mathcal F)$ without explicitly mentioning it.   
A sup-probability is necessarily subadditive, that is $w(A\cup B)\le w(A)+w(B)$ for $A,B\in \mathcal F$.
For two constants $a,b$, we write $a \wedge b$ for their maximum  and $a \vee b$ for their minimum.
For two capacities $w,w'$,
their minimum $w\wedge w'$
and their maximum $w\vee w'$
are defined point-wise, and they are capacities. 
Similarly, the infimum $\bigwedge_{w\in W} w=\inf_{w\in W}w$ and  supremum $\bigvee_{w\in W} w=\sup_{w\in W}w $ over a set  $W$ of capacities are understood point-wise; this applies to Definition \ref{def:0} (v).


For $X\in \X$ and  a capacity   $w:\mathcal F\to \R$,  the \emph{Choquet integral},
 denoted by $ \int X \d w$, is defined as\footnote{In the definition of $I_w$,  using $w(X\ge x)$  is equivalent to  using $w(X>x)$; see Proposition 4.8 of \cite{MM04}. This also applies to \eqref{eq:rep1} and \eqref{eq:rep2}.}
$$\int X \d w= \int_{-\infty}^0  \left( w(X\ge x) - 1 \right)\d x + \int_{0}^\infty w (X\ge x) \d x.
$$
We also denote by  $I_w:\X\to \R$ the mapping  $$I_{w}(X)=\int X \d w,~~~~X\in \X.$$
Two random variables $X$ and $Y$ are \emph{comonotonic} if $(X(\omega)-X(\omega'))(Y(\omega)-Y(\omega'))\ge 0$ for all $\omega,\omega'\in\Omega$. 
Choquet integrals satisfy the crucial property of comonotonic additivity, i.e., $I_w(X+Y)=I_w(X)+I_w(Y)$ if $X$ and $Y$ are comonotonic, which is one of the characterizing property of Choquet integrals (\cite{S86}).

We now introduce the core concept of the paper, the Choquet quantiles.

\begin{definition}\label{def:main}
\begin{enumerate}[(i)]
    \item A \emph{Choquet quantile} is the mapping $  I_v$  for some binary capacity $v$.
    \item The \emph{left $\alpha$-Choquet quantile for a capacity $w$}  and $\alpha \in (0,1]$ is defined by
\begin{equation}
\label{eq:rep1} 
Q^w_\alpha(X)=   \inf \{ x\in \R : w(X\ge x) \le 1-\alpha\},~~X\in \X.
\end{equation}
    \item  The  \emph{right $\alpha$-Choquet quantile for a capacity $w$}  and $\alpha \in [0,1)$ is defined by
\begin{equation}
\label{eq:rep2} \overline{Q}^w_\alpha(X)=   \inf \{ x\in \R : w(X\ge x) < 1-\alpha\},~~X\in \X.\end{equation}
\end{enumerate} 
\end{definition}
If $w=\p$ is a probability, then $Q^w_\alpha$ or $\overline{Q}^w_\alpha$ is called a \emph{probabilistic quantile}, which is usually how quantiles are defined in statistics and probability.

It may not be obvious from the definition, but the three classes (i)--(iii) are equivalent, as observed by \cite{C07}, which we formally state in the following proposition.

\begin{proposition}
\label{prop:basic-def}
 The three classes (i)--(iii) in Definition \ref{def:main} are equivalent. 
\end{proposition}


The equivalence in Proposition \ref{prop:basic-def} can be explained by the following correspondence between $(w,\alpha)$ and $v$. 
For $I_v$ with a binary capacity $v$, define $w=v$ and $\alpha =1/2$, and then we have
$$
 I_v = Q^w_\alpha = \overline{Q}^w_\alpha.
$$
For  $Q^w_\alpha$ with $\alpha\in (0,1]$ (resp. $\overline{Q}^w_\alpha$ with $\alpha\in [0,1)$) and a capacity $w$, define a binary capacity $v$  as $v(A)=\id_{\{w(A) >1-\alpha\}}$ (resp. $v(A)=\id_{\{w(A) \ge 1-\alpha\}}$), $A\in \mathcal F$, and then we have
$$
Q^w_\alpha = I_{v} ~~({\rm resp.} ~  \overline{Q}^w_\alpha= I_{v}).
$$ 

For a binary capacity $v$, we define the {\em null set} of $v$, denoted by  $\mathcal C_v$, as 
$$
\mathcal C_v=\{A\in \mathcal F: v(A)=0\}.
$$
A binary capacity  $v$ is determined by its null set.  The Choquet integral $I_v$ can be obtained from $\mathcal C_v$ as follows
$$
 I_v(X)=\inf \{x\in \R: \{X\ge x\}\in \mathcal C_v\}=\inf \{x\in \R: \{X> x\}\in \mathcal C_v\},~~X\in\X.
$$
Moreover, one can easily check that for binary capacities $v_1$ and $v_2$, it holds that $\mathcal C_{v_1\vee v_2} = \mathcal C_{v_1}\cap \mathcal C_{v_2}$ and $\mathcal C_{v_1\wedge v_2} = \mathcal C_{v_1}\cup \mathcal C_{v_2}$.
In particular, if $I_v$ is a probabilistic quantile, that is, $I_v=Q^\p_\alpha$ for some probability $\p$, then 
$$
\mathcal C_v=\{A\in \mathcal F: \p(A)\le 1- \alpha\}. 
$$

\section{Characterization and properties}\label{sec:main}

In this section, we   give an axiomatic characterization of Choquet quantiles based on a   property of ordinality,
and   study several properties of Choquet quantiles that are useful for risk sharing problems with ambiguity. 

\subsection{Axiomatic characterization} 

We first introduce a key property, ordinality.
We say that a mapping $\cR$ satisfies \emph{ordinality} if  $\cR\circ \phi =\phi\circ\cR$ for all  increasing and continuous functions $\phi:\R\to \R$.
Here, $\cR\circ \phi$ is understood as $X\mapsto \cR( \phi \circ X)$.

To make the   characterization result more general and applicable in different fields, 
we consider an algebra $\mathcal A\subseteq \mathcal F$, and 
$\cR$ as a mapping from $ \X_{\mathcal A}$, where $ \X_{\mathcal A}$ is the set of bounded functions on 
$\Omega$ measurable with respect to an algebra $\mathcal A$.\footnote{Recall that  $X$ is measurable with respect to an algebra $\mathcal A$  if $\{X>x\}\in \mathcal A$ and $\{X\geq x\}\in \mathcal A$ for all $x\in\R$.}
It is common to work with algebras instead of $\sigma$-algebras in the study of  Choquet integrals in game theory; see e.g.,  \cite{MM04}.
Since we mainly interpret functions in $\X$ as random variables, we stick to $\sigma$-algebra $\mathcal F$ in other parts of the paper. 
 The following result   also holds for $\cR:\X\to \R$ because a $\sigma$-algebra is an algebra.

\begin{theorem}
\label{th:main}
Let $\mathcal A$ be an algebra on $\Omega$. 
For    $\cR:\mathcal  X_{\mathcal A}\to \R$, the following are equivalent:
\begin{enumerate}[(i)]
\item $\cR$ satisfies ordinality;
\item  $\cR$ is a Choquet quantile.
\end{enumerate}
\end{theorem}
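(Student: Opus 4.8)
The plan is to prove the two implications separately, treating (ii)$\Rightarrow$(i) as a direct computation and (i)$\Rightarrow$(ii) as the substantive direction.

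For (ii)$\Rightarrow$(i) I would start from the representation $\cR(X)=\inf\{x\in\R: v(X\ge x)=0\}$ supplied by \eqref{eq:CQ}, where $v$ is the underlying binary capacity. Fix an increasing continuous $\phi$ and put $c=\cR(X)$. Since $\phi$ is continuous and nondecreasing, every upper level set of $\phi\circ X$ is one of $X$: $\{\phi(X)\ge y\}=\{X\ge s(y)\}$ with $s(y)=\inf\{t:\phi(t)\ge y\}$. A one-line check using continuity of $\phi$ gives $s(y)<c$ for $y<\phi(c)$ and $s(y)>c$ for $y>\phi(c)$, so $v(\phi(X)\ge y)=v(X\ge s(y))$ equals $1$ for $y<\phi(c)$ and $0$ for $y>\phi(c)$; taking the infimum yields $\cR(\phi(X))=\phi(c)=\phi(\cR(X))$, and the argument for the right-quantile form is identical.

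For (i)$\Rightarrow$(ii) I would first record the easy structural consequences of ordinality. Applying it to a constant $X\equiv c_0$ with any increasing continuous $\phi$ fixing $c_0$ gives $\cR(c_0)=\phi(\cR(c_0))$; since the only point fixed by every such map is $c_0$ itself, $\cR$ preserves constants. Next define $v(A)=\cR(\id_A)$ for $A\in\mathcal A$. Any increasing continuous $\phi$ with $\phi(0)=0$, $\phi(1)=1$ leaves $\id_A$ unchanged, so $v(A)=\phi(v(A))$ for all such $\phi$, which forces $v(A)\in\{0,1\}$; together with $v(\varnothing)=\cR(0)=0$ and $v(\Omega)=\cR(1)=1$ this makes $v$ binary. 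For monotonicity, given $A\subseteq B$ I would feed the single function $W=\id_A+\id_B$ through ordinality: since $\min(W,1)=\id_B$ and $(W-1)^+=\id_A$, writing $r=\cR(W)$ gives $v(B)=\min(r,1)$ and $v(A)=(r-1)^+$, so $v(A)=1$ forces $r=2$ and hence $v(B)=1$. Thus $v$ is a capacity.

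The heart of the argument is to show $\cR(X)=I_v(X)$ for every bounded $X$, i.e. $\cR(X)\ge a$ whenever $v(X\ge a)=1$ and $\cR(X)\le a$ whenever $v(X\ge a)=0$. The step I expect to be the main obstacle is precisely this: ordinality offers only \emph{continuous} transforms, while the indicator $\id_{\{X\ge a\}}$ is a discontinuous threshold of $X$, and $\cR$ is not assumed continuous, so no naive approximation of $X$ by indicators is available. I would bypass this by a gap-separation device. Translating so that $a=0$ and setting $A=\{X\ge0\}$, consider $W=X+M\id_A$ for any $M>0$: then $W<0$ on $A^c$ and $W\ge M$ on $A$, so $W$ omits the open interval $(0,M)$. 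Exploiting this gap I can pick two increasing continuous maps acting exactly on the range of $W$, namely $\phi_1$ equal to $0$ on $(-\infty,0]$ and to $1$ on $[M,\infty)$, for which $\phi_1(W)=\id_A$, and $\phi_2$ equal to the identity on $(-\infty,0]$, to $0$ on $[0,M]$, and to $t\mapsto t-M$ on $[M,\infty)$, for which $\phi_2(W)=X$. Ordinality then gives $v(A)=\phi_1(\cR(W))$ and $\cR(X)=\phi_2(\cR(W))$. If $v(A)=0$ then $\phi_1(\cR(W))=0$ forces $\cR(W)\le0$, whence $\cR(X)=\cR(W)\le0$; if $v(A)=1$ then $\cR(W)\ge M$, whence $\cR(X)=\cR(W)-M\ge0$. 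Finally, monotonicity of $v$ makes $a\mapsto v(X\ge a)$ a nonincreasing $\{0,1\}$-valued function with a single threshold $c^*=\inf\{a:v(X\ge a)=0\}$, and the two bounds give $\cR(X)\ge a$ for $a<c^*$ and $\cR(X)\le a$ for $a>c^*$, so $\cR(X)=c^*=I_v(X)$ by \eqref{eq:CQ}. The delicate point throughout is the gap construction: one must check that $W$ really omits an interval and that $\phi_1,\phi_2$ reproduce $\id_A$ and $X$ exactly, which is what allows the proof to avoid any continuity or limiting argument for $\cR$.
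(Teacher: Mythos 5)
Your proposal is correct, and in the substantive direction (i)$\Rightarrow$(ii) it takes a genuinely different route from the paper. The paper's proof goes through general Choquet-integral machinery: it first derives from ordinality that $\cR(X)$ lies in the closure of the range of $X$, that $\cR$ is comonotonic additive (via Denneberg's Lemma), and that $\cR$ is positively homogeneous; it then invokes Schmeidler's representation theorem to write $\cR$ as a Choquet integral on simple functions, checks that the capacity $v(A)=\cR(\id_A)$ is binary and monotone, and finally extends to all of $\X_{\mathcal A}$ by the dyadic approximation $X_n=\lfloor nX\rfloor/n$ together with the $1$-Lipschitz continuity of Choquet integrals. You bypass all of this: binaryness of $v$ comes from a fixed-point argument, monotonicity from reading both $v(A)=(r-1)^+$ and $v(B)=\min(r,1)$ off the single value $r=\cR(\id_A+\id_B)$, and --- the key novelty --- you pin down $\cR(X)$ directly as the threshold $\inf\{a: v(X\ge a)=0\}$ via the gap construction $W=X+M\id_{\{X\ge a\}}$, whose range omits $[0,M)$, so that $\id_{\{X\ge a\}}$ and $X$ are both \emph{exact} increasing continuous transforms of $W$. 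This exploits the fact that the target capacity is binary (only the switching point matters), so no representation theorem, no comonotonic additivity, and no limiting argument are needed --- a real advantage, since $\cR$ carries no continuity assumption and the paper must work to control the approximation error. What the paper's route buys in exchange is intermediate structure of independent interest (comonotonic additivity, positive homogeneity, the range property) and a template that extends to non-binary capacities. Your direction (ii)$\Rightarrow$(i) is essentially the paper's computation from \eqref{eq:CQ}, just phrased with the generalized inverse $s(y)$.

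Two small points you should make explicit to be fully rigorous. First, since $\mathcal A$ is only an algebra, sums of $\mathcal A$-measurable functions need not be $\mathcal A$-measurable in general, so you must verify that $\id_A+\id_B$ and $X+M\id_A$ lie in $\X_{\mathcal A}$; they do, because their level sets can be written out explicitly (of the form $\Omega$, $B$, $A$, $\varnothing$, resp.\ $\{X\ge y\}$, $A$, or $\{X\ge y-M\}$), but this needs to be said. Second, the interpolation of $\phi_1$ across the gap $[0,M]$ must be chosen strictly increasing there, so that $\phi_1(r)=0$ indeed forces $r\le 0$ and $\phi_1(r)=1$ forces $r\ge M$; with a linear interpolation this is immediate.
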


We briefly sketch the main idea of the proof of Theorem \ref{th:main}.
The direction (ii)$\Rightarrow$(i) can be shown by straightforward arguments. 
To prove the direction (i)$\Rightarrow$(ii), we first check three facts: (a) $\cR(X)$ takes values only in the closure of $\{X(\omega): \omega \in \Omega\}$;  (b) $\cR$ is comonotonic additive; (c) $\cR$ is positively homogeneous (defined in Section \ref{sec:property}). 
Fact (a) follows from ordinality. Fact (b) follows from (a), ordinality, and applying Denneberg's lemma (Proposition 4.5 of \cite{D94}).
Fact (c) is a direct consequence of ordinality. 
Next, we consider the set $\mathcal X_0$ of all simple functions in $\mathcal X_A$, i.e., those which take finitely many values. By Proposition 1 of \cite{S86},
facts
(b) and (c) imply that $\cR$ is a Choquet integral $\cR(X)=\int X\d v$ on $\mathcal X_0$ for some $v:\mathcal F \to \R$ with $v(\varnothing)=0$.
Then  fact (a)  implies that $v$ can only take values in $\{0,1\}$ and $v(\Omega)=1$. Using (b) and (c) together with some simple manipulations shows that $v$ is increasing, and hence $v$ is a capacity. 
This shows that $\cR$ is a Choquet quantile on $\mathcal X_0$. The remaining task  is to show that the same representation holds on $\mathcal X_A$, and this is achieved by using some continuity arguments.

As discussed by \cite{C07}, the interpretation of ordinality in decision making is that a possibly increasing and continuous scale change on measuring random outcomes does not affect their relative desirability.
Putting this into a decision-theoretic framework, let the preference $\preceq$ be a total preorder on $\X_{\mathcal A}$\footnote{A total preorder is a binary relation on $\X_{\mathcal A}$ satisfying, for all $X,Y,Z\in \X_{\mathcal A}$, (i) $X\preceq Y$ and $Y\preceq Z$ imply $X\preceq Z$, and (ii) $X\preceq Y$ or $Y\preceq X$ holds.}. Its equivalence relation and strict relation are denoted by $\sim$ and $\prec $, respectively.
The natural formulation of  ordinality in this context is
\begin{align}\label{eq:preference0}
X\preceq Y ~\Longrightarrow~ \phi(X) \preceq \phi(Y)
\end{align}
for all increasing and continuous functions $\phi$.   We say the preference $\preceq$ is \emph{increasing} if $0\prec 1$.
If  $\preceq$ can be \emph{numerically represented}, that is, there exists $\mathcal R: \X_{\mathcal A}\to\R$ such that $X\preceq Y$ if and only if $\mathcal R(X)\le \mathcal R(Y)$, then the property \eqref{eq:preference0} is equivalent to the existence of some numerical representation $\mathcal R$ satisfying ordinality.
\begin{proposition}  \label{prop:4} Let the preference $\preceq$ be an increasing total preorder on $\X_{\mathcal A}$ and suppose it can be numerically represented. Then the following two statements are equivalent.
\begin{enumerate}
\item[(i)] If $X\preceq Y,$ then $  \phi(X) \preceq \phi(Y)$ holds  for all increasing and continuous functions $\phi$;
\item[(ii)] There exists a numerical representation $\mathcal R$ satisfying ordinality.
\end{enumerate}
\end{proposition}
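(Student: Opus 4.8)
The direction (ii)$\Rightarrow$(i) is immediate and I would dispatch it first: if $\mathcal{R}$ is a numerical representation satisfying ordinality and $X\preceq Y$, then $\mathcal{R}(X)\le \mathcal{R}(Y)$, so $\phi(\mathcal{R}(X))\le\phi(\mathcal{R}(Y))$ for every increasing continuous $\phi$, and ordinality rewrites this as $\mathcal{R}(\phi(X))\le\mathcal{R}(\phi(Y))$, i.e.\ $\phi(X)\preceq\phi(Y)$. For the substantive direction (i)$\Rightarrow$(ii) the plan is to construct $\mathcal{R}$ as a \emph{certainty equivalent}: define $\mathcal{R}(X)$ to be the constant $c$ with $c\sim X$. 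Two preliminary facts feed this. First, constants are strictly ordered: for $a<b$, if one had $b\preceq a$ then choosing an increasing continuous $\psi$ with $\psi(a)=0,\psi(b)=1$ would give $1\preceq 0$ by (i), contradicting $0\prec1$; hence $a\prec b$. Consequently, any representing $\mathcal{R}$ satisfies $\mathcal{R}(c)=c$ on constants, and the CE, if it exists, is \emph{unique}. Second, strict preference is preserved by increasing continuous \emph{bijections} $\phi$ (apply (i) to $\phi$ and to $\phi^{-1}$), a fact I will use to transport structure.

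The whole difficulty collapses to showing the CE exists for every $X$. I would first record a boundedness fact obtained by \emph{flattening}: for $c>\sup X$ one has $X\prec c$. Indeed, take $\phi(t)=c+(t-\sup X)^{+}$, which is increasing, continuous, and constant equal to $c$ on $(-\infty,\sup X]$; then $\phi(X)=c$ while $\phi(c)=2c-\sup X>c$, so $c\preceq X$ would give $2c-\sup X=\phi(c)\preceq\phi(X)=c$ by (i), contradicting that constants are strictly ordered. Symmetrically $c\prec X$ for $c<\inf X$. Thus the sets $L=\{c:c\prec X\}$ and $U=\{c:X\prec c\}$ are both nonempty, $L$ is a down-set, $U$ an up-set, and by totality $\mathbb{R}=L\sqcup U\sqcup\{c:c\sim X\}$; if no CE exists then $\mathbb{R}=L\sqcup U$ with a finite cut $s:=\sup L\in[\inf X,\sup X]$.

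The \textbf{main obstacle} is ruling out such a gap, and this is where numerical representability must be used; (i) alone cannot see a single gap. Fix a representation $\mathcal{R}_0$ and let $u=\mathcal{R}_0$ restricted to constants, a strictly increasing function with therefore \emph{at most countably many} jump discontinuities. The key observation is that a gap of $X$ at a location $s'$ forces $u$ to jump at $s'$: writing $\rho=\mathcal{R}_0(X)$, the cut conditions give $u(s'^{-})\le\rho\le u(s'^{+})$, so if $u$ were continuous at $s'$ then $\rho=u(s')$, meaning $X\sim s'$ and no gap. Now I would transport the supposed gap of $X$ at $s$ to an arbitrary location: for any $s'\in\mathbb{R}$ pick an increasing continuous bijection $\phi$ with $\phi(s)=s'$; using preservation of strict preference, $\phi(X)$ has a gap at $s'$, which forces $u$ to jump at $s'$. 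Since $s'$ was arbitrary, $u$ would have a jump at every real number, contradicting that a monotone function has only countably many jumps. Hence no gap exists and the CE is well defined.

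Finally I would verify that $\mathcal{R}(X)=$ (the unique constant $\sim X$) does the job. It represents $\preceq$ because $X\preceq Y\iff \mathcal{R}(X)\preceq\mathcal{R}(Y)\iff\mathcal{R}(X)\le\mathcal{R}(Y)$, the last step by the strict ordering of constants. It satisfies ordinality because $X\sim\mathcal{R}(X)$ gives, via (i) applied in both directions, $\phi(X)\sim\phi(\mathcal{R}(X))$ for every increasing continuous $\phi$; since $\phi(\mathcal{R}(X))$ is a constant, it is exactly the CE of $\phi(X)$, i.e.\ $\mathcal{R}(\phi(X))=\phi(\mathcal{R}(X))$. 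The existence of the certainty equivalent is the only hard step; the representation and ordinality properties are then routine.
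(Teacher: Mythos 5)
Your proof is correct, and it reaches the same object as the paper --- the certainty-equivalent representation --- but by a genuinely different, self-contained route at the one step that matters. The paper starts from an arbitrary representation $\mathcal R_0$, sets $h(c)=\mathcal R_0(c)$ on constants, and cites the argument of Lemma 2 of \cite{FLW22} for the two key facts that $h$ is strictly increasing and that $\mathcal R_0(\X_{\mathcal A})=\mathcal R_0(\R)$; the normalized map $h^{-1}\circ\mathcal R_0$ is then shown to fix constants and to be ordinal. The range identity $\mathcal R_0(\X_{\mathcal A})=\mathcal R_0(\R)$ is exactly your statement that every $X$ has a certainty equivalent, and this is precisely what you prove instead of citing: the flattening functions $t\mapsto c+(t-\sup X)^{+}$ confine the cut to $[\inf X,\sup X]$, and the gap-transport argument (translate a putative gap to an arbitrary location by an increasing continuous bijection, observe that each gap forces a jump of the strictly increasing function $u=\mathcal R_0$ restricted to constants, then invoke countability of the jumps of a monotone function) rules gaps out. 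Your version makes transparent exactly where numerical representability is indispensable --- axiom (i) alone cannot exclude a single gap --- whereas the paper's proof hides this inside the citation; the price is length, since once the cited facts are granted the paper obtains the identity on constants and ordinality of $h^{-1}\circ\mathcal R_0$ in a few lines. One slip to fix: your sentence ``any representing $\mathcal R$ satisfies $\mathcal R(c)=c$ on constants'' is false as stated (a representation is only strictly increasing on constants, as your own $u=\mathcal R_0$ illustrates); what you actually need, and use, is that the certainty equivalent is unique and that the certainty-equivalent map itself fixes constants, both of which follow from the strict ordering of constants, so nothing downstream of the slip is affected.
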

 \cite{C07} characterized Choquet quantiles using two axioms: ordinality (only for strictly increasing and continuous functions) and monotonicity. For  our result in  Theorem 1, we show that only one axiom, ordinality, is sufficient, which implies a non-trivial fact that monotonicity is implied by ordinality already.
Although ordinality requires $\cR\circ \phi =\phi\circ\cR$ to hold for all increasing continuous $\phi$, 
for the characterization in Theorem \ref{th:main}, 
it suffices to only require it to hold for two-piece linear functions, assuming that $\cR$ is continuous with respect to the sup-norm. 
This is because compositions of two-piece linear functions can uniformly approximate any continuous increasing function on a bounded interval.
 
\begin{remark}  
\cite{FLW22} showed that
 $\cR$ is   law-based with respect to $\p$ (i.e., $\cR(X)=\cR(Y)$ for all $X$ and $Y$ that are identically distributed under $\p$) and ordinal if and only if
   $\cR $ is a  probabilistic quantile for $\p$.
  Our Theorem \ref{th:main}  is  more general than that result in two aspects. First, Theorem \ref{th:main}  requires no assumption on the law-basedness of the functional, whereas this is an essential assumption in \cite{FLW22}, resulting in completely different techniques for the proofs. Second, Theorem \ref{th:main}  is valid for more general measurable spaces $(\Omega,\mathcal A)$ with $\mathcal A$ being an algebra, whereas the other result requires a $\sigma$-algebra. By imposing the  law-basedness in Theorem \ref{th:main}, we immediately  recover the result on probabilistic quantile. 
\end{remark}

\subsection{Properties}\label{sec:property}

This section provides some properties of Choquet quantiles that are useful in studying risk sharing problems with Choquet quantiles. 
For a class  $W$ of capacities and a class $V$ of binary capacities, we write 
\begin{equation}
\label{eq:max-w} \overline w =\sup_{w\in W} w;~~\underline{w} =\inf_{w\in W} w;~~ \overline{v} =\inf_{v\in V} v~~{\rm and }~~\underline{v} =\inf_{v\in V} v.
\end{equation}
We first give the closure property of the Choquet quantiles  under the supermum and the infimum of the capacities.
\begin{proposition}\label{pro:supclose}
Let $W$ be a set of capacities,  $V$ be a set of binary capacities, and $\alpha \in(0,1)$.
Using the notation in \eqref{eq:max-w}, we have
 $$\sup_{w\in W}   Q^w_\alpha = Q^{\overline w }_\alpha
 \mbox{~~~and~~~}
  \sup_{v\in V}   I_v = I_{{\overline v}};$$
  $$\inf_{w\in W}   \overline{Q}^w_\alpha = \overline{Q}^{\underline{w} }_\alpha
 \mbox{~~~and~~~}
  \inf_{v\in V}   I_v = I_{\underline{v}}.$$
\end{proposition}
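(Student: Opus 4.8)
The plan is to first establish that both quantile maps are monotone in the capacity argument, then reduce the two statements about binary capacities to the two statements about general capacities, and finally prove each remaining identity by two opposing inequalities, of which one is immediate from monotonicity and the other needs a short $\epsilon$-argument exploiting the definition of the supremum (resp.\ infimum) of a family of capacities.

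First I would record the monotonicity. If $w_1\le w_2$ pointwise on $\mathcal F$, then $w_1(X\ge x)\le w_2(X\ge x)$ for every $x$, so $\{x: w_2(X\ge x)\le 1-\alpha\}\subseteq\{x: w_1(X\ge x)\le 1-\alpha\}$; taking infima gives $Q^{w_1}_\alpha(X)\le Q^{w_2}_\alpha(X)$, and the same comparison holds for $\overline{Q}$. Since each $w\in W$ satisfies $w\le\overline w$, this yields $\sup_{w\in W}Q^w_\alpha\le Q^{\overline w}_\alpha$ at once, and symmetrically $\overline{Q}^{\underline{w}}_\alpha\le\inf_{w\in W}\overline{Q}^w_\alpha$ from $\underline{w}\le w$.

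Next I would prove the reverse inequalities. Fix $X$ and write $q=Q^{\overline w}_\alpha(X)$, which is finite as $X$ is bounded. For any $\epsilon>0$ the point $q-\epsilon$ fails the defining condition, so $\overline w(X\ge q-\epsilon)=\sup_{w\in W}w(X\ge q-\epsilon)>1-\alpha$; hence some $w\in W$ already satisfies $w(X\ge q-\epsilon)>1-\alpha$. Because $x\mapsto w(X\ge x)$ is nonincreasing, every $x\le q-\epsilon$ then fails $w(X\ge x)\le 1-\alpha$, forcing $Q^w_\alpha(X)\ge q-\epsilon$ and thus $\sup_{w\in W}Q^w_\alpha(X)\ge q-\epsilon$; letting $\epsilon\downarrow 0$ completes the first identity. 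For the infimum identity I would argue symmetrically: with $\bar q=\overline{Q}^{\underline{w}}_\alpha(X)$, the set $\{x:\underline{w}(X\ge x)<1-\alpha\}$ is an up-ray by monotonicity, so $\underline{w}(X\ge\bar q+\epsilon)=\inf_{w\in W}w(X\ge\bar q+\epsilon)<1-\alpha$, whence some $w\in W$ has $w(X\ge\bar q+\epsilon)<1-\alpha$; this gives $\overline{Q}^w_\alpha(X)\le\bar q+\epsilon$ and $\inf_{w\in W}\overline{Q}^w_\alpha(X)\le\bar q+\epsilon$, and $\epsilon\downarrow 0$ closes the argument.

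Finally, the two identities for binary capacities follow by specialization. One checks that $\overline v$ and $\underline{v}$ are again binary capacities, since a pointwise sup or inf of values in $\{0,1\}$ lies in $\{0,1\}$, monotonicity is preserved, and $\overline v(\varnothing)=\underline{v}(\varnothing)=0$, $\overline v(\Omega)=\underline{v}(\Omega)=1$. Invoking the correspondence $I_v=Q^v_{1/2}=\overline{Q}^v_{1/2}$ from the Remark and applying the already-proved identities with $\alpha=1/2$ and $W=V$ then gives $\sup_{v\in V}I_v=Q^{\overline v}_{1/2}=I_{\overline v}$ and $\inf_{v\in V}I_v=\overline{Q}^{\underline{v}}_{1/2}=I_{\underline{v}}$. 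The main obstacle is the pair of reverse inequalities in the third paragraph: everything hinges on the fact that the supremum (resp.\ infimum) defining $\overline w$ (resp.\ $\underline{w}$) at the single level $q-\epsilon$ (resp.\ $\bar q+\epsilon$) is witnessed by some individual member of the family, and that the monotonicity of $x\mapsto w(X\ge x)$ then propagates this witness to all smaller (resp.\ larger) levels.
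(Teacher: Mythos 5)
Your proof is correct and takes essentially the same route as the paper's: both arguments rest on the facts that the weak inequality defining $Q^w_\alpha$ is preserved under suprema of capacities while the strict inequality defining $\overline{Q}^w_\alpha$ is witnessed by some member of the family under infima, combined with the monotonicity of $x\mapsto w(X\ge x)$ and an $\epsilon$-argument; your two-inequality decomposition is just a more explicit rendering of the paper's chain of set identities. Your reduction of the binary-capacity statements to the general ones via $I_v=Q^v_{1/2}=\overline{Q}^v_{1/2}$ is a clean way of carrying out what the paper dismisses as ``similar.''
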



In Proposition \ref{pro:supclose}, we note that the supremum is associated with $Q^w$ and the infimum is associated with $\overline{Q}^w$.
This distinction is intentional. 
Below, we give an example to show that 
$\sup_{w\in W}   \overline{Q}^w_\alpha = \overline{Q}^{\overline w }_\alpha$ does not hold in general. The case of $\inf_{w\in W}   Q^w_\alpha = Q^{\underline w }_\alpha$ is similar. 

\begin{example} Let $(\Omega,\mathcal F) = ([0,1], \mathcal B([0,1]))$, $X(x)=\id_{\{x\ge 1/2\}}$ for $x\in \Omega$, $W=\{w_n, n\in\N\}$ satisfying $w_n([1/2,1])=1/2-1/n$ for $n\in\N$, and $\alpha=1/2$. It holds that  $w_n(X\ge 1) = w_n([1/2,1]) = 1/2-1/n< 1 -\alpha$, and thus, 
$ 
\overline{Q}^{w_n}_\alpha(X) = 0
$ for $n\in\N.$ This implies $
\sup_{w\in W}\overline{Q}^{w}_\alpha(X) = 0.
$
One can verify that $\overline w :=\sup_{w\in W} w$ satisfies $\overline{w} ([1/2,1])=1/2 \not <1-\alpha$ and thus, $
\overline{Q}^{\overline{w} }_\alpha(X) = 1.
$
\end{example}

Note that  the supermum of  sup-probabilities is still a sup-probability. 
Therefore, the supremum of  quantiles for sup-probabilities is 
again a quantile for a sup-probability  by   Proposition~\ref{pro:supclose}. 

In particular, Proposition~\ref{pro:supclose} implies, for any set $\mathcal P$ of probabilities and $\alpha \in (0,1)$, 
$$
\sup_{\mathbb P\in \mathcal P}Q^{\mathbb P}_\alpha = Q^{\overline{\mathbb P}}_\alpha,
$$
where $\overline{\mathbb P}= \sup_{\mathbb P\in \mathcal P} \mathbb P$. 
In other words,  the worst-case quantiles mentioned in the Introduction are   quantiles for sup-probabilities,
and hence they are also Choquet quantiles. 




We briefly discuss Choquet quantiles as risk measures. Common properties of risk measures $\cR:\X\to \R$ include
\begin{enumerate}[(a)]
\item Monotonicity: $\cR(X)\le \cR(Y)$ whenever $X\le Y$;
\item Translation invariance: $\cR(X+c)= \cR(X)+c$ for all $X\in \X$ and $c\in\R$;
\item Positive homogeneity: $\cR(cX)= c\cR(X)$ for all $X\in \X$ and $c>0$;
\item Convexity: $\cR(\lambda X+(1-\lambda)Y)\leq\lambda \cR(X)+(1-\lambda)\cR(Y)$ for all $\lambda\in [0,1]$ and $X,Y\in\X$.
\end{enumerate}
A mapping $\cR:\X\to\R$ is called a coherent risk measure (\cite{ADEH99}) if it satisfies 
the above four properties, whereas it is called a  monetary risk measure if it satisfies the first two properties. 
It is well known (e.g.,  Proposition 4.11 of \cite{MM04}), and straightforward to check, that the first three properties are always satisfied by all Choquet integrals, including Choquet quantile.  

We next discuss convexity, which is important for optimization problems.  
A capacity $w$ is \emph{submodular} if $w(A\cup B)+w(A\cap B)\leq w(A)+w(B)$ for all $A,B\in \mathcal F$. A Choquet integral $I_w$ is convex if and only if $w$ is submodular (Theorem 4.6 of \cite{MM04}). 
For a binary capacity,  submodularity is quite restrictive. 
The general message is that convexity (thus coherence) is rarely satisfied by Choquet quantiles. 
Recall that   a probabilistic quantile is never convex unless it is an essential supremum.

\begin{proposition} \label{prop:7} For a binary capacity $v$, 
  a Choquet  quantile $I_v$ is convex
if and only if $\mathcal C_{v}$
is closed under union.
\end{proposition}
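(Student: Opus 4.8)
The plan is to reduce convexity to subadditivity and then argue directly from the null-set representation
$$I_v(X)=\inf\{x\in\R:\{X\ge x\}\in\mathcal C_v\}.$$
Since $I_v$ is a Choquet quantile it is ordinal, hence positively homogeneous by fact (c) in the proof of Theorem \ref{th:main} (this is also immediate from positive homogeneity of Choquet integrals). For a positively homogeneous functional, convexity is equivalent to subadditivity $I_v(X+Y)\le I_v(X)+I_v(Y)$, because $I_v(X+Y)=2I_v(\tfrac12 X+\tfrac12 Y)$; so the two properties are interchangeable. Throughout I would use two elementary consequences of the monotonicity of $v$: the family $\mathcal C_v$ is closed under taking subsets (if $A\subseteq B\in\mathcal C_v$ then $v(A)\le v(B)=0$), and, in one direction, it is closed under union by hypothesis.

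For the direction \emph{$\mathcal C_v$ closed under union $\Rightarrow I_v$ convex}, I would fix $X,Y$ and set $a=I_v(X)$ and $b=I_v(Y)$. The key observation is that $\{x:\{X\ge x\}\in\mathcal C_v\}$ is an up-set, because $x\mapsto v(X\ge x)$ is non-increasing; hence $\{X\ge x\}\in\mathcal C_v$ for every $x>a$, and similarly for $Y$. Then for any $\epsilon>0$ the pointwise estimate gives the inclusion $\{X+Y\ge a+b+2\epsilon\}\subseteq\{X\ge a+\epsilon\}\cup\{Y\ge b+\epsilon\}$. The right-hand side lies in $\mathcal C_v$ by closure under union, so the smaller left-hand set lies in $\mathcal C_v$ by closure under subsets; therefore $I_v(X+Y)\le a+b+2\epsilon$. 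Letting $\epsilon\downarrow 0$ gives subadditivity, and with positive homogeneity this yields convexity.

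For the converse \emph{$I_v$ convex $\Rightarrow\mathcal C_v$ closed under union}, I would test convexity on indicators. Given $A,B\in\mathcal C_v$, we have $I_v(\id_A)=v(A)=0$ and $I_v(\id_B)=v(B)=0$, so convexity forces $I_v(\tfrac12\id_A+\tfrac12\id_B)\le 0$. Now $\tfrac12(\id_A+\id_B)$ equals $1$ on $A\cap B$, $\tfrac12$ on $A\triangle B$, and $0$ elsewhere, and a short computation of its level sets shows that if $v(A\cup B)=1$ then $I_v(\tfrac12\id_A+\tfrac12\id_B)=\tfrac12$ (using $v(A\cap B)\le v(B)=0$). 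This contradicts the bound just obtained, so $v(A\cup B)=0$, i.e.\ $A\cup B\in\mathcal C_v$.

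The only delicate points are bookkeeping ones: checking that $\{x:\{X\ge x\}\in\mathcal C_v\}$ is an up-set so that the strict inequalities $x>a$ indeed land in $\mathcal C_v$, and evaluating the infimum for the indicator test function correctly at its jump at level $\tfrac12$. I expect these to be the main (minor) obstacle, since the conceptual content—subadditivity via a single level-set inclusion—becomes transparent once positive homogeneity removes the need to handle genuine convex combinations. An alternative route would invoke the classical equivalence between convexity of a Choquet integral and submodularity of its capacity, and then verify that for a binary $v$ the submodularity inequality $v(A\cup B)+v(A\cap B)\le v(A)+v(B)$ reduces, after the case analysis $v(A),v(B)\in\{0,1\}$, to exactly $v(A)=v(B)=0\Rightarrow v(A\cup B)=0$; I would nonetheless prefer the self-contained argument above.
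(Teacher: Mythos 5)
Your proposal is correct, and its main line of argument is genuinely different from the paper's. The paper disposes of the statement in two lines: it cites Theorem 4.94 of \cite{FS16}, which says that a Choquet integral $I_v$ is convex if and only if $v$ is submodular, and then notes that for a $\{0,1\}$-valued capacity the submodularity inequality $v(A\cup B)+v(A\cap B)\le v(A)+v(B)$ collapses (after the obvious case analysis) to the statement that $v(A)=v(B)=0$ implies $v(A\cup B)=0$, i.e.\ closure of $\mathcal C_v$ under union. This is exactly the ``alternative route'' you mention and discard at the end of your write-up. Your preferred argument instead proves both directions from scratch: the forward direction via positive homogeneity (reducing convexity to subadditivity) together with the level-set inclusion $\{X+Y\ge a+b+2\epsilon\}\subseteq\{X\ge a+\epsilon\}\cup\{Y\ge b+\epsilon\}$ and the fact that $\mathcal C_v$ is closed under subsets; the converse by testing convexity on $\tfrac12(\id_A+\id_B)$, whose Choquet integral equals $\tfrac12 v(A\cup B)+\tfrac12 v(A\cap B)=\tfrac12 v(A\cup B)$, forcing $v(A\cup B)=0$. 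I checked the details: the up-set property of $\{x:\{X\ge x\}\in\mathcal C_v\}$, the finiteness of $a$ and $b$ (from boundedness of $X,Y$), and the indicator computation are all sound. What your route buys is self-containedness and transparency: it avoids the nontrivial external machinery of the convexity--submodularity equivalence for general capacities, and it makes visible that only the lattice structure of the null collection matters. What the paper's route buys is brevity and a clean reduction of the binary case to a known general theorem; it also makes the connection to submodularity explicit, which the paper reuses immediately afterwards (e.g.\ in the continuity proposition that follows and in Proposition \ref{prop:ESC}).
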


A main example of convex Choquet quantiles $I_v$ is given below.
Let $\mathcal C_v=\{A\in\mathcal F: A\subseteq B\}$ for some $B\in\mathcal F$ with $B\neq \Omega$. Obviously, $\mathcal C_v$ is closed under union. It follows that $$I_v(X)=\inf\{x\in\R: \{X\geq x\}\subseteq B\} = \sup_{\omega\in \Omega\setminus B} X(\omega).$$
In particular, if $B=\varnothing$, then $I_v(X)=\sup X$ for all $X\in\mathcal X$.

Next we show that any convex Choquet quantile can be written as the above form in the commonly used measurable space.
We say a capacity $w$ is \emph{continuous} if $w(A_n)\to 0$ as $A_n\downarrow \varnothing$ as $n\to\infty$. Next, we will show that for continuous $v$, convex $I_v$ has a very special form. We denote  by $\mathcal B([0,1])$ the set of all Borel sets on $[0,1]$ and  by $[n]$ the set $\{1,\dots, n\}$.
\begin{proposition}  \label{prop:8}
  Let $(\Omega, \mathcal F)=([n], 2^{[n]})$, $(\mathbb N, 2^{\mathbb N})$, or $([0,1], \mathcal B([0,1]))$ and suppose  $v$ is a continuous binary capacity. Then  $I_v$ is convex if and only if there exists a finite set $A\subseteq \Omega$   such that $I_v(X)=\max_{\omega\in A} X(\omega)$ for all $X\in \X$.
\end{proposition}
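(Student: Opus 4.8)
The plan is to reason throughout via the representation $I_v(X)=\inf\{x\in\R:\{X\ge x\}\in\mathcal C_v\}$ together with the preceding proposition, which says $I_v$ is convex if and only if the null collection $\mathcal C_v=\{A\in\mathcal F:v(A)=0\}$ is closed under finite union. Writing $A=\{x\in\Omega:v(\{x\})=1\}$ for the set of atoms of $v$, I would aim to show that convexity plus continuity force $\mathcal C_v=\{B\in\mathcal F:B\cap A=\varnothing\}$ with $A$ finite. The target formula $I_v(X)=\max_{x\in A}X(x)$ is exactly the Choquet quantile attached to such a $v$ (since then $v(B)=\id_{\{B\cap A\ne\varnothing\}}$), so establishing this identification proves both directions at once.

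The sufficiency direction is routine: if $I_v(X)=\max_{x\in A}X(x)$ for a nonempty finite $A$, then $v(B)=\id_{\{B\cap A\ne\varnothing\}}$, so $\mathcal C_v=\{B:B\cap A=\varnothing\}$ is closed under union and $I_v$ is convex by the previous proposition; continuity holds because for $A_n\downarrow\varnothing$ the finiteness of $A$ yields $A_n\cap A=\varnothing$, hence $v(A_n)=0$, for all large $n$.

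For necessity I would first upgrade closure under finite union to closure under countable union, i.e. show $\mathcal C_v$ is a $\sigma$-ideal. Given $C_1,C_2,\dots\in\mathcal C_v$, the partial unions $D_k=\bigcup_{i\le k}C_i$ lie in $\mathcal C_v$ and $D_k\uparrow\bigcup_iC_i=:C$, so $C\setminus D_k\downarrow\varnothing$; continuity gives $v(C\setminus D_k)=0$ for large $k$, whence $C=D_k\cup(C\setminus D_k)\in\mathcal C_v$. Next I would show $A$ is finite: an infinite $A$ contains distinct $a_1,a_2,\dots$, and the tails $A_k=\{a_j:j\ge k\}$ satisfy $A_k\downarrow\varnothing$, so $v(A_k)\to0$ by continuity, contradicting $v(A_k)\ge v(\{a_k\})=1$. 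One inclusion, $\mathcal C_v\subseteq\{B:B\cap A=\varnothing\}$, is immediate from monotonicity, since $a\in B\cap A$ forces $v(B)\ge v(\{a\})=1$.

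The crux is the reverse inclusion $\{B:B\cap A=\varnothing\}\subseteq\mathcal C_v$ on $([0,1],\mathcal B([0,1]))$, where $B$ may be uncountable. Here I would exploit second countability. Let $U$ be the union of all rational-endpoint intervals lying in $\mathcal C_v$; by the $\sigma$-ideal property $U\in\mathcal C_v$, and $K=[0,1]\setminus U$ is closed. One checks $A\subseteq K$, since an atom cannot sit inside a null interval by downward closure. The decisive step is $K\subseteq A$: for $x\in K$ every rational interval containing $x$ has $v=1$; picking rational-endpoint $(p_m,q_m)\downarrow\{x\}$, the punctured sets $(p_m,q_m)\setminus\{x\}\downarrow\varnothing$ are eventually null by continuity, so if $x\notin A$ then $\{x\}$ is also null and union closure gives $v((p_m,q_m))=0$, a contradiction. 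Hence $K=A$, so $[0,1]\setminus A=U\in\mathcal C_v$, and any Borel $B$ with $B\cap A=\varnothing$ satisfies $B\subseteq U$, giving $v(B)=0$. The cases $([n],2^{[n]})$ and $(\N,2^{\N})$ are easier, since there every such $B$ is countable and hence a countable union of null singletons. This support argument — combining continuity (to annihilate a punctured neighbourhood) with union closure (to reabsorb the null singleton) — is the main obstacle; the remaining steps are bookkeeping.
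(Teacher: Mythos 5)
Your proof is correct, but it takes a genuinely different route from the paper's. The paper also starts from the equivalence of convexity and submodularity of $v$ (equivalently, closure of $\mathcal C_v$ under finite unions), but its core argument is a bisection: if $v(A)=1$, submodularity forces $v(B)=1$ or $v(A\setminus B)=1$ for any split $B\subset A$, and iterating this on each of the three spaces produces a decreasing sequence $A_n$ with $v(A_n)=1$ whose intersection, by continuity, must be a singleton atom; thus every set of capacity one contains an atom, which (applied implicitly to the complement of the finite atom set $D$) yields $v(D^c)=0$ and hence the max representation. You instead work dually with the null collection: continuity upgrades closure under finite unions to a $\sigma$-ideal, the countable spaces are then immediate, and for $[0,1]$ you identify the complement of the atom set as the countable union of all null rational-endpoint intervals, using the punctured-neighbourhood argument to show that the complement of that union consists exactly of atoms. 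Your route delivers slightly more than the paper's, namely the explicit description $\mathcal C_v=\{B\in\mathcal F: B\cap A=\varnothing\}$, i.e., a full identification of $v$, and it isolates where the structure of the space enters (second countability for $[0,1]$, countability for the other two cases); the paper's bisection, by contrast, treats the three spaces uniformly and never needs a countable base. Only trivial bookkeeping is missing on your side: at the endpoints of $[0,1]$ one should use relatively open rational intervals such as $[0,q)$ and $(p,1]$, and one should note that $A\neq\varnothing$ (otherwise $\Omega\in\mathcal C_v$, contradicting $v(\Omega)=1$), so that the maximum in the representation is well defined.
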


Proposition \ref{prop:8} suggests that convexity and continuity, both useful properties in optimization, are almost conflicting for Choquet quantiles, as they lead to very restrictive forms of Choquet quantiles. 
Therefore, in the risk sharing problems that we study in this paper, generally, no convexity can be expected, and specific analysis is needed.

 The next result shows that the Choquet quantile   for a capacity has the semi-continuity as the probabilistic quantile. 
\begin{proposition} \label{prop:2} For $X\in\X$ and a capacity $w$,  $\alpha\mapsto Q^w_\alpha(X)$ is left-continuous on $(0,1]$ and $\alpha\mapsto \overline{Q}^w_\alpha(X)$ is right-continuous on $[0,1)$.
\end{proposition}

The above result implies  that the value of $Q^w_\alpha(X)$ at $\alpha=1$  and that of $\overline{Q}^w_\alpha(X)$ at $\alpha=0$
are determined by their values on $(0,1)$, and hence we only need to consider the domain $(0,1)$ for the Choquet quantile functions, as for probabilistic quantile functions.

Finally, we make a simple observation that,   any Choquet integral can be represented as an integral of the Choquet quantiles, similarly to the fact that an expectation can be written as the integral over probabilistic quantiles. 
Formally, for $X\in\mathcal X$ and a capacity $w$, we have
   $$
\int X \d w = \int_0^1 Q^w_\alpha(X) \d \alpha =  \int_0^1 \overline{Q}^w_\alpha(X) \d \alpha.
$$
The first equality is given by Proposition 4.86 of \cite{FS16} and the second one follows from that the discontinuity points of a monotone function is at most countable, and thus, the points at which $ Q^w_\alpha(X) $ and   $\overline{Q}^w_\alpha(X) $  are not equal are at most countable.

\section{Risk sharing with Choquet quantiles}\label{sec:inf-convolution}

We now turn to the problem of risk sharing with $n$ agents  using  Choquet quantiles as their preference functionals to minimize. For instance, these agents may represent the subsidiaries in the Introduction, but they can also represent several participants in a risk sharing pool. 

\subsection{Setting}

We interpret a random variable $X$ as the total risk (with positively values representing losses) in a risk-exchange economy. 
Given    $X\in \mathcal{X}$, we define the set of \emph{allocations} of $X$ as
\begin{align}
\mathbb{A}_n(X)=\left\{(X_1,\ldots,X_n)\in \mathcal{X}^n: \sum_{i=1}^nX_i=X\right\}. \label{eq:intro1}
\end{align}
In an allocation $(X_1,\dots,X_n)$, $X_i$ is interpreted as the amount of risk allocated to agent $i$.
The  \emph{inf-convolution} of   functionals $\rho_1,\dots,\rho_n:\X\to \R$ is defined as
\begin{equation} \label{eq-Paretogeneral}
 \dsquare_{i=1}^n \rho_i (X)  = \inf\left\{\sum_{i=1}^n \rho_i(X_i): (X_1,\cdots,X_n)\in \mathbb A_n(X)  \right\},~~X\in \X.
\end{equation}
It is well-known that for agents modelled by {monetary risk measures}, including Choquet integrals, 
Pareto optimality is equivalent to minimality of the total risk, that is, solving \eqref{eq-Paretogeneral}; see Proposition 1 of \cite{ELW18}.
We will call such allocations as \emph{optimal allocations} of $X$.

For a capacity $w$,   we write
$$
\VaR^w_{\alpha}= Q^w_{1-\alpha}\mbox{~~~and~~~}\overline{\VaR}^w_{\alpha}= \overline{Q}^w_{1-\alpha}.
$$
This choice of notation helps us to explain some results in the literature in a concise form.
As shown by \citet[Corollary 2]{ELW18}, 
for an atomless probability measure $\p$, the inf-convolution of probabilistic quantiles has the following explicit formula
\begin{equation}
    \label{eq:ELW18}
   \dsquare_{i=1}^n\VaR_{\alpha_i}^{\p}(X)=\VaR_{\sum_{i=1}^n\alpha_i}^{\p}(X), \mbox{~~~for all $X\in \X$,}
\end{equation} 
where $\VaR_\alpha$ is defined as $-\infty$ for $\alpha\ge 1.$ 
Risk sharing for probabilistic quantiles with respect to different probabilities  is studied by \cite{ELMW20}, who obtained an implicit formula of the inf-convolution. 
We will show below that Choquet quantiles admit a similar result; in particular, 
 the inf-convolution of Choquet quantiles is again a Choquet quantile or $-\infty$. 
\subsection{Inf-convolution and optimal allocations}
In what follows, we say that a set $\mathcal C\subseteq \mathcal F$ is closed from below if  for any increasing chain\footnote{This means $A_j\subseteq A_{j+1}$ for all $j\in \N$.} $(A_j)_{j\in \N}$ in $\mathcal C$, we have  $\bigcup_{j\in \N}A_j \in\mathcal C$. Let $\Pi_n$ be the set of all measurable $n$-compositions of $\Omega$,\footnote{A composition is a partition with order.} that is, 
$$
\Pi_n=\left\{(A_1,\dots,A_n)\in \mathcal F^n: \bigcup_{i=1}^n A_i =\Omega\right\}.
$$
 We are now ready to present the inf-convolution and optimal allocations for Choquet quantiles. 
\begin{theorem}\label{thm:inf}
  Let $v_1,\dots,v_n$ be binary capacities.  The following conclusions hold.
  \begin{enumerate}
  \item[(i)] We have
$$
\dsquare_{i=1}^n I_{v_i}= I_v,
$$
where $v$ is a binary capacity   with null set given by
 \begin{align}\label{eq:vnull-2}\mathcal C_v= \left\{\bigcup_{i=1}^n  A_i: A_i\in \mathcal C_{v_i},~i\in [n]\right\}.
 \end{align}
 \item[(ii)] If $x^*=I_v(X)>-\infty$ and  $\mathcal C_v$ is closed from below, then an optimal allocation of $X$ exists and it is given by
    \begin{align} \label{eq:optimalsolution} X_i^*=(X-x^*)\id_{A_i^*}+\frac{x^*}{n},~i\in [n],
    \end{align}
    for some $(A_1^*,\dots, A_n^*)\in \Pi_n$ satisfying $A_i^*\cap \{X>x^*\}\in \mathcal C_{v_i}$ for $i\in [n]$. 

      \end{enumerate}
\end{theorem}

The next result shows that the inf-convolution and optimal allocation  in Theorem \ref{thm:inf} enjoy an invariance property under continuous and increasing 
 transformations, thanks to the ordinality of Choquet quantiles in Theorem \ref{th:main}.
\begin{proposition}
\label{prop:invariance}
Fix $X\in \X$ and use the same notation in Theorem \ref{thm:inf}.
 For  any continuous and increasing $\phi:\R\to\R$, it holds that  $$
\dsquare_{i=1}^n I_{v_i} ( \phi(X)) =\phi( I_v (X))=\phi\left(\dsquare_{i=1}^n I_{v_i} ( X)\right).
$$
Moreover,
 an optimal allocation of $\phi(X)$ is given by $(Y_1^*,\dots,Y^*_n)$, where 
$$
Y_i^*=(\phi(X)-\phi(x^*))\id_{A_i^*}+\frac{\phi(x^*)}{n},~~~~i\in [n].
$$   
\end{proposition}

Proposition \ref{prop:invariance} implies that if an optimal allocation for $X$ has been solved, 
then we can immediately obtain an optimal allocation for $\phi(X)$, because the composition $(A_1^*,\dots,A_n^*)$ and the value $x^*$ are identical across the two problems. This observation is convenient in the numerical examples in Section \ref{sec:numerical}.

Note that the continuity  of $\mathcal C_v$
in  part (ii) of Theorem \ref{thm:inf} cannot be removed, because one may have $\{X>x^*+\epsilon\}\in \mathcal C_v$ for all $\epsilon>0$ but $\{X>x^*\}\not \in \mathcal C_v$. In this case, an optimal allocation may not exist, because $(A_1^*,\dots,A_n^*)$ in \eqref{eq:optimalsolution} cannot be constructed. 

The result in part (i) of Theorem \ref{thm:inf} can be equivalently stated for $Q^{w_i}_{\alpha_i}$
instead of $I_{v_i}$. For
  capacities $w_1,\dots,w_n$ and $\alpha_1,\ldots,\alpha_n\in (0,1)$, 
  let $v_i=\id_{\{w_i>1-\alpha_i\}}$, $i\in [n]$. 
Using the equivalence between the two classes of $Q_\alpha^w $ and $I_v$, we immediately get
$$
\dsquare_{i=1}^n Q^{w_i}_{\alpha_i} = \dsquare_{i=1}^n I_{v_i}= I_v = Q^{v}_{1/2},
$$
 where $v$ is a binary capacity  with null set given by \eqref{eq:vnull-2}.  

\subsection{Discussions}

From Theorem \ref{thm:inf}, we can see that the tail risk of $X$ on the event $\{X>x^*\}$
is shared by all agents, but the severity of the loss $X$ is ignored as it does not contribute to the calculation of $I_v(X)$.
Therefore, using Choquet quantiles, including VaR under ambiguity,   ignores the tail risk in the risk sharing context, in a similar way to   VaR, as discussed by \cite{ELW18}. 
Incorporating ambiguity into the risk measures can only change the level $x^*$ (beyond which the tail risk is ignored), but not addressing the issue of ``not capturing tail risk'', which is the main drawback of VaR specified by \cite{BASEL16}.
Therefore, although incorporating ambiguity can provide a more conservative approach to risk management, its robustness properties are similar to VaR. 

In \eqref{eq:optimalsolution}, the specification of $A^*$ on $\{X\le x^*\}$
is arbitrary. For instance, we can take $A_1^*\supseteq \{X\le x^*\}$ and  $A_i^*\cap\{X\le x^*\}=\varnothing$ for all $i>1$. 

The value $x^*$ in \eqref{eq:optimalsolution} can be changed to $y<x^*$, leading to 
  \begin{align} \label{eq:optimalsolution2} X_i^*=(X-y)\id_{A_i^*}+\frac{y}{n},~~~~i\in [n].
    \end{align}
    We can verify that  \eqref{eq:optimalsolution2}  also gives an optimal allocation for the risk sharing problem. If $X$ is bounded and $y$ is small enough, then $X>y$, and we obtain a \emph{counter-monotonic} allocation (that is, each pair $(X_i^*,-X_j^*)$ is comonotonic for $i\ne j$).
Such allocations typically 
represent a gambling behaviour, and they are optimal for certain risk-seeking agents, as studied by \cite{LLW24}.
Therefore, the behaviour of agents using Choquet quantiles in risk sharing problems is similar to risk-seeking agents in the above sense. 

If we interpret the risk measures $\rho_1,\dots,\rho_n$ as capital requirement (which is the main interpretation of VaR) for each agent,
then $\dsquare_{i=1}^n\rho_i(X)$ represents the least amount of total capital required by jointly holding the loss $X$ among the agents.
Using Choquet quantiles as the risk measures,   the total capital requirement can typically be reduced significantly (e.g., 50\% for some common loss distributions), as seen from the numerical results in Section \ref{sec:numerical} and the real-data example in Section \ref{sec:realdata}.

\subsection{Some special cases}

We now discuss a few special cases of risk sharing with Choquet quantiles. 
When the capacities are sup-probabilities, 
the inf-convolution has a simple form, which follows by combining Proposition \ref{pro:supclose}  and Theorem~\ref{thm:inf}.
\begin{corollary}\label{co:6}
 For sets $\mathcal P_1,\dots,\mathcal P_n$ of probabilities, we have
 $$
 \dsquare_{i=1}^n \sup_{\mathbb Q\in \mathcal P_i} \VaR^{\mathbb Q}_{\alpha_i} (X)
 = \inf\{x\in \R: \{X>x\}\in \mathcal C_v\},
 $$
 where $$
 \mathcal C_v=\left\{\bigcup_{i=1}^n A_i: \sup_{\mathbb Q\in \mathcal P_i} \mathbb Q(A_i) \le \alpha_i,~i\in[n] \right\}.
 $$
 \end{corollary}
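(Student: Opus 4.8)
The plan is to reduce the statement to the preceding corollary by first rewriting each robust Value-at-Risk as a single Choquet quantile for a sup-probability, and then reading off the inf-convolution from the null-set formula.

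First I would recall that $\VaR_{\alpha_i}^{\mathbb Q}=Q_{1-\alpha_i}^{\mathbb Q}$ by definition, and set $\overline w_i:=\sup_{\mathbb Q\in\mathcal P_i}\mathbb Q$, which is a capacity (indeed a sup-probability). Applying the first identity in Proposition~\ref{pro:supclose} with quantile level $1-\alpha_i$ and family $W=\mathcal P_i$ yields
$$\sup_{\mathbb Q\in\mathcal P_i}\VaR_{\alpha_i}^{\mathbb Q}=\sup_{\mathbb Q\in\mathcal P_i}Q_{1-\alpha_i}^{\mathbb Q}=Q_{1-\alpha_i}^{\overline w_i},\qquad i\in[n].$$
This collapses the inner supremum over $\mathcal P_i$ into a single Choquet quantile, which is precisely the form on which the inf-convolution formula operates.

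Next I would apply the preceding corollary to the capacities $\overline w_1,\dots,\overline w_n$ at the levels $1-\alpha_1,\dots,1-\alpha_n$. This gives $\dsquare_{i=1}^n Q_{1-\alpha_i}^{\overline w_i}=Q^v_{1/2}=I_v$ for the binary capacity $v$ whose null set is
$$\mathcal C_v=\left\{\bigcup_{i=1}^n A_i:\overline w_i(A_i)\le 1-(1-\alpha_i),~i\in[n]\right\}=\left\{\bigcup_{i=1}^n A_i:\sup_{\mathbb Q\in\mathcal P_i}\mathbb Q(A_i)\le\alpha_i,~i\in[n]\right\},$$
where I use $\overline w_i(A_i)=\sup_{\mathbb Q\in\mathcal P_i}\mathbb Q(A_i)$ and $1-(1-\alpha_i)=\alpha_i$. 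Finally, invoking the null-set representation $I_v(X)=\inf\{x\in\R:\{X>x\}\in\mathcal C_v\}$ converts the right-hand side into the claimed infimum, completing the argument.

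I do not expect any deep obstacle, since each of the three steps is a direct citation of a result established earlier in the paper. The only point requiring care is the bookkeeping of levels: one must track that the VaR level $\alpha_i$ corresponds to the quantile level $1-\alpha_i$, so that the defining constraint on $A_i$ emerges as $\sup_{\mathbb Q\in\mathcal P_i}\mathbb Q(A_i)\le\alpha_i$ rather than its complement. Beyond this, I would only check the mild standing hypotheses, namely that $\alpha_i\in(0,1)$ so that $1-\alpha_i\in(0,1)$ lies in the range where Proposition~\ref{pro:supclose} applies, and that each $\overline w_i$ is a legitimate capacity, which holds since a supremum of probabilities is a sup-probability.
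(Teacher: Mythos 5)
Your proposal is correct and follows essentially the same route as the paper: both reduce each robust VaR to a single Choquet quantile via Proposition~\ref{pro:supclose} and then invoke the inf-convolution result, the only cosmetic difference being that you pass through the corollary $\dsquare_{i=1}^n Q^{w_i}_{\alpha_i}=Q^v_{1/2}$ while the paper converts directly to the binary capacities $v_i$ with $\mathcal C_{v_i}=\{A:\sup_{\mathbb Q\in\mathcal P_i}\mathbb Q(A)\le\alpha_i\}$ and applies Theorem~\ref{thm:inf}. Your bookkeeping of the level change $\alpha_i\leftrightarrow 1-\alpha_i$ and the standing hypothesis $\alpha_i\in(0,1)$ matches what the paper's argument implicitly uses.
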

 In particular, when $\mathcal P_1,\dots,\mathcal P_n$ are the same singleton $\{\p\}$ for an atomless $\p$, Corollary \ref{co:6} yields 
\eqref{eq:ELW18}, by noting that 
\begin{equation}
   \label{eq:ELW18-3}
 \mathcal C_v=\left\{\bigcup_{i=1}^n A_i:  \p(A_i) \le \alpha_i,~i\in[n] \right\}= \left\{A\in \mathcal F: \p(A)\le \sum_{i=1}^n\alpha_i\right\}.
\end{equation}
 Note that the non-atomicity assumption of $\p$ is needed for \eqref{eq:ELW18-3}, because to write $A$  with $\p(A)=\sum_{i=1}^n\alpha_i$
 as the union of $A_1,\dots,A_n$ with $\p(A_i)\le \alpha_i$ for each $i$,
 we need these probability constraints to be binding. Nevertheless, for the results in Theorem \ref{thm:inf} and Corollary 
\ref{co:6}, we do not need the non-atomicity assumption.

We can further notice that when $\mathcal P_1,\dots,\mathcal P_n$ are finite sets, the set $\mathcal C_v$ is closed from below, and an optimal allocation exists in the form of \eqref{eq:optimalsolution} in  Theorem \ref{thm:inf}.

    One may wonder whether \eqref{eq:ELW18} also holds when $\p$ is replaced by capacity, that is, 
 for $\alpha_i\in (0,1)$ and $\sum_{i=1}^n\alpha_i<1$, whether the equality
  \begin{equation}
    \label{eq:ELW18-2}  \dsquare_{i=1}^n \VaR_{\alpha_i}^{w}  (X) =\VaR_{\sum_{i=1}^n\alpha_i}^{w}(X)
    \end{equation}
    holds.
    This is generally not true, as discussed below. 
    A  formula similar to \eqref{eq:ELW18-2} holds true in the special case that $w$ is  a distorted probability.
\begin{corollary} \label{co:4} For $\alpha_1,\dots,\alpha_n\in (0,1)$  and $X\in\X$, let $g_i:[0,1]\to [0,1]$ be strictly increasing and continuous functions satisfying $g_i(0)=0$ and $g_i(1)=1$ for $i\in [n]$, and $\mathbb Q$ be an atomless probability. We have
\begin{align*}
    \dsquare_{i=1}^n\VaR_{\alpha_i}^{g_i\circ \mathbb Q}(X)=\VaR_{\sum_{i=1}^n g_i^{-1}(\alpha_i)}^\mathbb Q(X).
 \end{align*}
 \end{corollary}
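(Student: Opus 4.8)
The plan is to reduce the statement to the preceding corollary on capacities and then identify the resulting null set explicitly. First I would rewrite each summand as $\VaR_{\alpha_i}^{g_i\circ\mathbb Q}=Q^{g_i\circ\mathbb Q}_{1-\alpha_i}$ (using $\VaR_\alpha^w=Q^w_{1-\alpha}$, valid since $1-\alpha_i\in(0,1)$) and apply the corollary with capacity $w_i=g_i\circ\mathbb Q$ and level $1-\alpha_i$ in place of $\alpha_i$. This gives $\dsquare_{i=1}^n\VaR_{\alpha_i}^{g_i\circ\mathbb Q}=I_v$ with null set condition $w_i(A_i)\le 1-(1-\alpha_i)=\alpha_i$, that is,
$$\mathcal C_v=\left\{\bigcup_{i=1}^n A_i: g_i(\mathbb Q(A_i))\le\alpha_i,\ i\in[n]\right\}.$$
Since each $g_i$ is a strictly increasing continuous bijection of $[0,1]$ with continuous inverse $g_i^{-1}$, the condition $g_i(\mathbb Q(A_i))\le\alpha_i$ is equivalent to $\mathbb Q(A_i)\le g_i^{-1}(\alpha_i)$. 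Writing $\beta_i=g_i^{-1}(\alpha_i)\in(0,1)$, the set $\mathcal C_v$ becomes the collection of sets coverable by pieces of $\mathbb Q$-measure at most $\beta_i$.

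The heart of the proof is then the identity $\mathcal C_v=\{B\in\mathcal F:\mathbb Q(B)\le \sum_{i=1}^n\beta_i\}$. The inclusion ``$\subseteq$'' is immediate from finite subadditivity of $\mathbb Q$: if $B=\bigcup_i A_i$ with $\mathbb Q(A_i)\le\beta_i$, then $\mathbb Q(B)\le\sum_i\mathbb Q(A_i)\le\sum_i\beta_i$. For ``$\supseteq$'' I would take $B$ with $\mathbb Q(B)\le\sum_i\beta_i$ and build a disjoint partition $B=\bigsqcup_{i=1}^n A_i$ with $\mathbb Q(A_i)\le\beta_i$ by greedy splitting: peel off $A_1\subseteq B$ with $\mathbb Q(A_1)=\min\{\beta_1,\mathbb Q(B)\}$, then $A_2\subseteq B\setminus A_1$ with $\mathbb Q(A_2)=\min\{\beta_2,\mathbb Q(B\setminus A_1)\}$, and so on; since $\sum_i\beta_i\ge\mathbb Q(B)$, the residual measure is exhausted within $n$ steps, so the $A_i$ cover $B$.

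Finally, recalling $I_v(X)=\inf\{x\in\R:\{X>x\}\in\mathcal C_v\}$, the identity for $\mathcal C_v$ turns $\{X>x\}\in\mathcal C_v$ into $\mathbb Q(X>x)\le\sum_i\beta_i$, whence
$$I_v(X)=\inf\left\{x\in\R:\mathbb Q(X>x)\le\sum_{i=1}^n\beta_i\right\}=\VaR^{\mathbb Q}_{\sum_{i=1}^n g_i^{-1}(\alpha_i)}(X),$$
which is the claim. The degenerate case $\sum_i\beta_i\ge1$ is consistent: then $\mathbb Q(\Omega)=1\le\sum_i\beta_i$ forces $\Omega\in\mathcal C_v$, so $\mathcal C_v=\mathcal F$ and Theorem \ref{thm:inf}(i) gives $-\infty$ on the left, matching $\VaR^{\mathbb Q}_{\sum_i\beta_i}(X)=-\infty$ on the right. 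The main obstacle is the greedy-partition step: exact peeling of subsets of prescribed $\mathbb Q$-measure requires $\mathbb Q$ to be nonatomic (a Lyapunov/Sierpi\'nski-type argument). This is genuinely needed---on a two-point space with equal atoms and $g_i=\mathrm{id}$ the identity already fails---so I would carry out the argument under the standing assumption that $\mathbb Q$ is atomless, which is precisely the setting in which the corresponding probabilistic $\VaR$ inf-convolution formula holds.
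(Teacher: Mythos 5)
Your proof is correct, and it takes a genuinely different route from the paper's. The paper's proof is two lines: it converts each distorted quantile into a probabilistic one, $\VaR_{\alpha_i}^{g_i\circ \mathbb Q}=\VaR_{g_i^{-1}(\alpha_i)}^{\mathbb Q}$ (the same $g_i$-inversion you perform at the level of null sets), and then quotes Corollary 2 of \cite{ELW18} for the identity $\dsquare_{i=1}^n\VaR_{\beta_i}^{\mathbb Q}=\VaR_{\sum_{i=1}^n\beta_i}^{\mathbb Q}$, writing $\beta_i=g_i^{-1}(\alpha_i)$. You instead stay inside the paper's own machinery: Theorem \ref{thm:inf}, via the corollary on $Q^{w_i}_{\alpha_i}$ that follows it, expresses the inf-convolution as $I_v$ with an explicit null set, and you then prove directly that $\mathcal C_v=\{B\in\mathcal F:\mathbb Q(B)\le\sum_{i=1}^n\beta_i\}$, with finite subadditivity giving one inclusion and a greedy Sierpi\'nski-type splitting giving the other; in effect you re-derive the cited ELW18 formula rather than quote it. This buys self-containedness and, more importantly, makes explicit a hypothesis the paper leaves implicit: the splitting step requires $\mathbb Q$ to be atomless, and your two-point counterexample shows the corollary as stated is genuinely false without that assumption (the paper inherits atomlessness silently, since it is a standing assumption in \cite{ELW18}). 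You also handle the degenerate case $\sum_{i=1}^n g_i^{-1}(\alpha_i)\ge 1$, which the stated hypothesis $\sum_{i=1}^n\alpha_i<1$ does not rule out and which the paper's citation glosses over. One small repair to your greedy step: Sierpi\'nski's theorem only yields a subset of prescribed measure, so as written your pieces could miss a $\mathbb Q$-null part of $B$, while membership in $\mathcal C_v$ requires $B$ to be exactly a union of the $A_i$; take $A_k$ to be the entire residual once its measure falls below $\beta_k$ (or lump the null remainder into $A_n$, which does not raise $\mathbb Q(A_n)$ above $\beta_n$). With that, the argument is complete: the paper's route buys brevity, yours buys a self-contained proof with an honest accounting of the hypotheses.
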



   If $g_1=\dots=g_n=g$, by writing  $\alpha =\sum_{i=1}^n\alpha_i$, we have  
   $$\dsquare_{i=1}^n\VaR_{\alpha_i}^{g\circ \mathbb Q}(X)=\VaR_{\sum_{i=1}^n g^{-1}(\alpha_i)}^{\mathbb Q}(X) 
 \mbox{~~and~~}\VaR_{\alpha}^{g\circ \mathbb Q}(X)=\VaR_{g^{-1}(\alpha )}^{\mathbb Q}(X).   
   $$   
Generally, because $\sum_{i=1}^n g^{-1}(\alpha_i)= g^{-1}(\alpha)$ does not always hold, we have $\dsquare_{i=1}^n\VaR_{\alpha_i}^{g\circ \mathbb Q}(X)\neq \VaR_{\alpha}^{g\circ \mathbb Q}(X)$.
This implies that $\dsquare_{i=1}^n\VaR_{\alpha_i}^{w}(X)$ and $ \VaR_{\alpha}^{w}(X)$ are not equal for a capacity $w$, different from the case of probabilistic quantiles.

Before ending this section, we discuss the structure of the  composition $(A_1^*,\dots, A_n^*)\in \Pi_n$ in the optimal allocation \eqref{eq:optimalsolution} for a special case
of agents using quantiles with sup-probabilities; that is, each agent $i\in [n]$ uses the risk measure 
 $\sup_{\mathbb Q\in \mathcal P_i} \VaR^{\mathbb Q}_{\alpha_i} 
$ 
for some set $\mathcal P_i$ of probabilities   and $\alpha_i\in (0,1)$.
We consider the following two standard probability spaces: 
$(\Omega,\mathcal F,\p)$ is either an atomless space or an equiprobable space $([m],2^{[m]},\p)$ with $\p(\{k\}) =1/m$ for $k\in [m]$. We assume
that all probability measures in $\mathcal P_1,\dots,\mathcal P_n$ are  absolutely continuous with respect to $\mathbb P$.


\begin{proposition} \label{prop:9}  In the above setting, fix $X\in\mathcal X$. Suppose
that  ${\rm d}\mathbb Q/{\rm d}\mathbb \p $ is decreasing in $X$ if $\mathbb Q\in\mathcal P_1$ and increasing in $X$ if $\mathbb Q\in\mathcal P_i$, $i=2,\ldots,n$ and the optimal allocation  \eqref{eq:optimalsolution} exists. Then the optimal composition $(A_1^*,\dots, A_n^*)\in \Pi_n$ can be chosen such that ${A_1^*}$ is a tail event of $X$, meaning $X(\omega)\ge X(\omega')$ for $\omega\in A_1^*$ and $\omega'\not\in A_1^*$.
\end{proposition}

The monotone likelihood ratio condition  on $\d \mathbb Q/\d \p$ is important for this result. 
Intuitively, Proposition \ref{prop:9} means that 
if agent $1$ thinks that some certain tail event $A$ of $X$ is the least likely compared to other events under each $\mathbb Q$ in her belief set $\mathcal P_1$  (implied by the monotone likelihood ratio condition), and other agents think that the tail event $A$ is more likely compared to other events, then this tail event is allocated to agent $1$.

\section{Choquet Expected Shortfall}\label{sec:ES}

As one of the most important risk measures, the standard risk measure in banking regulation is the class of Expected Shortfall (ES),  closely related to probabilistic quantiles.   We refer to  \cite{MFE15} and \cite{FS16} for more discussions on ES as the standard regulatory risk measure.  
Since Choquet quantiles share many properties as probabilistic quantiles, it is natural to consider the corresponding ES as a new tool for risk management. We propose such a class of risk measures, study their properties, and discuss  their inf-convolution. 

\subsection{Definition and properties}
 
Define the \emph{Choquet Expected Shortfall (Choquet ES)} under a capacity $w$ by $$\ES_{\alpha}^{w}(X)=\frac{1}{\alpha}\int_0^{\alpha}\VaR^w_{t}(X)\d t,~~~X\in\X,~~~ \alpha\in (0,1].$$ If $w=\mathbb Q$ is a probability, $\ES_{\alpha}^{\mathbb Q}$ is the classic probabilistic ES.  
It is straightforward to verify that 
a Choquet ES satisfies comonotonic additivity (defined in Section \ref{sec:def}), monotonicity, translation invariance,  and positive homogeneity (defined in Section \ref{sec:property}), based on the corresponding properties of Choquet quantiles. 

It is well-known that a probabilistic ES is a coherent risk measure.   
We give an integral representation of Choquet ES, and show that it is a coherent risk measure under an additional condition. 
\begin{proposition}\label{prop:ESC}
For a capacity $w$ and $\alpha \in (0,1]$,
 $$\ES_\alpha^w=I_{w_\alpha}, \mbox{~where~} w_\alpha= (w /\alpha )\wedge 1.$$ Moreover,
 $\ES_{\alpha}^{w} $ is a coherent risk measure if and only if $w_{\alpha}$ is submodular.
 \end{proposition}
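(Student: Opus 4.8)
The plan is to prove the identity $\ES_\alpha^w = I_{w_\alpha}$ by writing both sides as an integral of left quantiles over the same range and matching them through a change of variables; the coherence statement will then follow from the standard properties of Choquet integrals. First I would unfold the left-hand side: from the definition $\ES_\alpha^w(X) = \frac1\alpha\int_0^\alpha \VaR_t^w(X)\,\d t = \frac1\alpha\int_0^\alpha Q^w_{1-t}(X)\,\d t$, the substitution $s = 1-t$ gives $\ES_\alpha^w(X) = \frac1\alpha\int_{1-\alpha}^1 Q^w_s(X)\,\d s$.

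For the right-hand side, I would invoke the quantile representation of the Choquet integral established earlier, namely $I_{w_\alpha}(X) = \int_0^1 Q^{w_\alpha}_\beta(X)\,\d\beta$, and then compute $Q^{w_\alpha}_\beta$ explicitly. Since $w_\alpha(A) = (w(A)/\alpha)\wedge 1$ and $1-\beta < 1$ for $\beta\in(0,1)$, the cap at $1$ is inactive, so the condition $w_\alpha(X\ge x)\le 1-\beta$ reduces to $w(X\ge x)\le\alpha(1-\beta)$; hence $Q^{w_\alpha}_\beta(X) = Q^w_{1-\alpha(1-\beta)}(X)$. Substituting $s = 1-\alpha(1-\beta)$ (so that $\d s = \alpha\,\d\beta$ and the limits $\beta\in(0,1)$ map to $s\in(1-\alpha,1)$) transforms $\int_0^1 Q^{w_\alpha}_\beta(X)\,\d\beta$ into $\frac1\alpha\int_{1-\alpha}^1 Q^w_s(X)\,\d s$, which is exactly the expression obtained for $\ES_\alpha^w(X)$. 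This establishes the first claim.

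For the consequence, I would note that $I_{w_\alpha}$ is a Choquet integral with respect to a capacity (here $w_\alpha$ is a capacity precisely because $\alpha\le 1$ forces $w_\alpha(\Omega)=1$), and every such Choquet integral is automatically monotone, translation invariant, and positively homogeneous. Thus coherence is equivalent to subadditivity alone. By Theorem 4.94 of \cite{FS16}, used already above, $I_{w_\alpha}$ is convex if and only if $w_\alpha$ is submodular, and positive homogeneity makes convexity equivalent to subadditivity; therefore $\ES_\alpha^w$ is coherent if and only if $w_\alpha$ is submodular.

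I expect the main obstacle to be the explicit evaluation of $Q^{w_\alpha}_\beta$: one must verify that the truncation $\wedge 1$ plays no role on the relevant range of $\beta$ and then get the change-of-variables limits to align so that the two quantile integrals agree on the interval $(1-\alpha,1)$. Once this identity is secured, the remaining steps are routine applications of known properties of Choquet integrals.
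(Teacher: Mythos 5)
Your proof is correct, and it is actually more complete than the paper's own. The paper's proof of this proposition consists of a single sentence: it cites Theorem 4.94 of \cite{FS16} for the equivalence between coherence of $I_{w_\alpha}$ and submodularity of $w_\alpha$, and it never writes out a proof of the identity $\ES_\alpha^w = I_{w_\alpha}$, treating it as ``similar to the case of $\ES$.'' You supply exactly that missing argument: you invoke the paper's quantile representation $I_{w_\alpha}(X)=\int_0^1 Q^{w_\alpha}_\beta(X)\,\d\beta$ (which holds for arbitrary capacities, and $w_\alpha$ is indeed a capacity precisely because $\alpha\le 1$), compute $Q^{w_\alpha}_\beta = Q^w_{1-\alpha(1-\beta)}$ on $\beta\in(0,1)$ --- your observation that the truncation $\wedge\,1$ is inactive is right, since $(a\wedge 1)\le c$ with $c=1-\beta<1$ forces $a\le c$, and the two endpoints are a null set for the integral --- and then the affine substitution $s=1-\alpha(1-\beta)$ lands exactly on $\frac1\alpha\int_{1-\alpha}^1 Q^w_s(X)\,\d s=\ES_\alpha^w(X)$. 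Your treatment of the coherence claim coincides with the paper's: monotonicity, translation invariance and positive homogeneity are automatic for Choquet integrals of capacities, so coherence reduces to subadditivity, which under positive homogeneity is the same as the convexity characterized by Theorem 4.94 of \cite{FS16}; this is the correct unpacking of the paper's one-line citation. As a point of comparison, an alternative route to the identity would be a direct Fubini-type computation on $\int_0^\infty w_\alpha(X\ge x)\,\d x$, in the spirit of what the paper later does in the proof of Theorem \ref{th:min}; your change-of-variables argument stays entirely at the level of quantile functions and leans on the already-established representation, which keeps the proof shorter and self-contained within the paper's toolkit.
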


Using the fact that $w_{\alpha }$ is submodular implies that $w_{\beta}$ is  submodular  for $\beta \le \alpha$, we immediately arrive at the following conclusion.
\begin{corollary}\label{cor:ES} For a capacity $w$,  $\{\ES_{\alpha}^{w}\}_{\alpha\in (0,1]}$ is a decreasing family of coherent risk measures if and only if $w$ is submodular.
\end{corollary}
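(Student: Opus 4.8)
The plan is to reduce everything to Proposition \ref{prop:ESC}, which already identifies coherence of $\ES_\alpha^w = I_{w_\alpha}$ with submodularity of $w_\alpha$. First I would observe that the ``decreasing family'' part of the statement is automatic and carries no submodularity content: since $\VaR_t^w(X) = Q_{1-t}^w(X)$ is decreasing in $t$ for each fixed $X$ (the set $\{x : w(X\ge x)\le t\}$ grows with $t$, so its infimum decreases), the average $\ES_\alpha^w(X) = \frac1\alpha\int_0^\alpha \VaR_t^w(X)\,\d t$ is decreasing in $\alpha$. Hence the genuine content of the corollary is that \emph{every} member of the family is a coherent risk measure, and by Proposition \ref{prop:ESC} this is equivalent to $w_\alpha$ being submodular for every $\alpha\in(0,1]$.

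The ``only if'' direction is then immediate: if all $\ES_\alpha^w$ are coherent, specialize to $\alpha=1$. Since $w$ is a capacity we have $w\le w(\Omega)=1$, so $w_1 = (w/1)\wedge 1 = w$, and Proposition \ref{prop:ESC} forces $w=w_1$ to be submodular.

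For the ``if'' direction I need the auxiliary fact quoted before the statement: submodularity of $w_\alpha$ propagates down to all smaller indices, so that $w=w_1$ submodular yields $w_\beta$ submodular for every $\beta\in(0,1]$. The clean way to establish this is to note that for $\beta\le\alpha$ one has $w_\beta = (c\,w_\alpha)\wedge 1$ with $c=\alpha/\beta\ge 1$: indeed, for a set $A$ with $w(A)\le\alpha$ both sides equal $(w(A)/\beta)\wedge 1$, while for $w(A)>\alpha$ both sides equal $1$. Thus $w_\beta = \phi\circ w_\alpha$ where $\phi(t)=(ct)\wedge 1$ is increasing, concave, and satisfies $\phi(0)=0$. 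It therefore suffices to show that composing a submodular capacity with an increasing concave function fixing $0$ preserves submodularity; applied with $\alpha=1$ this delivers all $w_\beta$ submodular at once.

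The main obstacle is precisely this last preservation claim. Writing $a=w_\alpha(A\cap B)$, $b_1=w_\alpha(A)$, $b_2=w_\alpha(B)$, $d=w_\alpha(A\cup B)$, monotonicity gives $a\le b_1,b_2\le d$ and submodularity of $w_\alpha$ gives $a+d\le b_1+b_2$; I must deduce $\phi(a)+\phi(d)\le\phi(b_1)+\phi(b_2)$. Taking $b_1\le b_2$ without loss and setting $h=d-b_2\ge 0$, the constraint reads $h\le b_1-a$. The decreasing-increments property of the concave $\phi$ gives $\phi(b_2+h)-\phi(b_2)\le\phi(a+h)-\phi(a)$, and monotonicity together with $a+h\le b_1$ gives $\phi(a+h)-\phi(a)\le\phi(b_1)-\phi(a)$; chaining these yields $\phi(d)-\phi(b_2)\le\phi(b_1)-\phi(a)$, which rearranges to the desired inequality. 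This concavity argument is the only non-bookkeeping step; everything else is a direct appeal to Proposition \ref{prop:ESC}.
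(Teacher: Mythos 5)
Your proof is correct and takes essentially the same route as the paper: reduce everything to Proposition \ref{prop:ESC} and use the fact that submodularity propagates downward from $w$ to $w_\beta$ for every $\beta\in(0,1]$. Your write-up is actually more complete than the paper's, which asserts this propagation fact without proof; your identity $w_\beta=\phi\circ w_\alpha$ with $\phi(t)=(ct)\wedge 1$ concave and increasing, combined with the standard argument that composing a submodular capacity with such a $\phi$ preserves submodularity, supplies the missing justification, and you also verify the (easy) monotonicity of $\alpha\mapsto\ES^w_\alpha$ that the paper leaves implicit.
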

Note that if $\alpha=1$, $\ES_{1}^{w}$ coincides with  Choquet integral $I_w$.  
Moreover, if $v$ is a probability $\p$, then $\p_{\alpha} = (\p/\alpha )\wedge 1$ is submodular, which implies that  $\ES_{\alpha}^{\p}$ is a coherent risk measure, a well-known result in the literature.

For quantiles and ES under a probability $\p$, \cite{RU00, RU02} showed that they admit  an important  relation that a quantile and a corresponding ES can be respectively represented as the minimizers and the optimal value of a minimization problem whose objective function is a convex function in $\R$ and  linear in probability.
We next show that similar result holds for Choquet quantiles and Choquet $\ES$. 
\begin{theorem}\label{th:min} For a capacity $w$, $X\in\X$ and $\alpha\in (0,1)$, we have
\begin{align*}\left[\VaR_{\alpha}^w(X), \overline{\VaR}_{\alpha}^w(X)\right]&=\argmin_{x\in\R}\left\{x+\frac{1}{ \alpha}\int(X-x)_+\d w\right\},\\
\ES_{\alpha}^{w}(X)&=\min_{x\in\R}\left\{x+\frac{1}{ \alpha}\int(X-x)_+\d w \right\}.
\end{align*}
\end{theorem}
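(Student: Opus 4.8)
The plan is to prove the two identities simultaneously by analyzing the function
$$
H(x) := x + \frac{1}{\alpha}\int (X-x)_+\,\d w, \qquad x\in\R,
$$
and showing that $H$ is convex, that its minimizers form exactly the interval $\left[\VaR_\alpha^w(X),\overline{\VaR}_\alpha^w(X)\right]$, and that its minimum value equals $\ES_\alpha^w(X)$. The first step is to record the regularity of $H$. Using the integral representation $\int (X-x)_+\,\d w = \int_0^\infty w((X-x)_+ > t)\,\d t = \int_0^\infty w(X > x+t)\,\d t$, I would rewrite $H(x) = x + \frac{1}{\alpha}\int_x^\infty w(X>s)\,\d s$. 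Since $s\mapsto w(X>s)$ is decreasing and takes values in $[0,1]$, the map $x\mapsto \int_x^\infty w(X>s)\,\d s$ is convex and $1$-Lipschitz, so $H$ is convex and continuous on $\R$; moreover $H$ is affine with slope $1$ for $x\ge \esssup X$ and has slope $1-1/\alpha<0$ for $x$ below the lower end of the support, so the minimum is attained on a compact interval.

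Next I would compute a one-sided derivative to locate the minimizers. Because $H$ is convex, $x^*$ is a minimizer if and only if $0$ lies between the left and right derivatives of $H$ at $x^*$. Writing $G(x) = \int_x^\infty w(X>s)\,\d s$, the right derivative is $-w(X>x)$ (using right-continuity of $s\mapsto w(X>s)$ in the appropriate sense) and the left derivative is $-w(X\ge x)$, so the subdifferential condition $H'_-(x)\le 0\le H'_+(x)$ becomes
$$
w(X>x) \le \alpha \le w(X\ge x).
$$
The plan is then to match these two inequalities to the quantile definitions: recalling $\VaR_\alpha^w = Q^w_{1-\alpha}$ and $\overline{\VaR}_\alpha^w = \overline{Q}^w_{1-\alpha}$, the condition $w(X\ge x)\ge\alpha$ together with the inf-definition $\VaR_\alpha^w(X)=\inf\{x: w(X\ge x)\le\alpha\}$ should pin down the left endpoint, while $w(X>x)\le\alpha$ pins down the right endpoint $\overline{\VaR}_\alpha^w(X)=\inf\{x: w(X>x)<\alpha\}$. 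The main obstacle here is bookkeeping with the strict-versus-weak inequalities and the half-open quantile conventions: I must be careful that $w$ is merely monotone (not additive), so the two quantities $w(X>x)$ and $w(X\ge x)$ can differ on a whole interval of $x$, and I need to verify that the set where $w(X>x)\le\alpha\le w(X\ge x)$ is precisely the closed interval between the two VaR values rather than something larger or smaller.

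Finally, to identify the minimum value with $\ES_\alpha^w(X)$, I would evaluate $H$ at the left endpoint $x^*=\VaR_\alpha^w(X)=Q^w_{1-\alpha}(X)$ and use the integral representation of Choquet ES established in Proposition \ref{prop:ESC}, namely $\ES_\alpha^w = I_{w_\alpha}$ with $w_\alpha=(w/\alpha)\wedge 1$, together with $\ES_\alpha^w(X)=\frac1\alpha\int_0^\alpha \VaR_t^w(X)\,\d t$. The cleanest route is to show $H(x^*) = x^* + \frac1\alpha\int_0^\infty w(X>x^*+t)\,\d t$ equals $\frac1\alpha\int_0^\alpha Q^w_{1-t}(X)\,\d t$ by a change of variables linking the level $t$ in the tail integral to the confidence level, using the layer-cake identity and the fact that $w(X>x^*)\le\alpha\le w(X\ge x^*)$ to handle the atom of $w$ at the threshold. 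I expect this last matching of the minimum value to ES to be routine once the minimizer is correctly located, so the genuine difficulty remains the careful treatment of the strict/weak inequalities at the threshold $x^*$ where $w(X>x^*)$ and $w(X\ge x^*)$ straddle $\alpha$.
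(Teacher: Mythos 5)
Your overall strategy is the same as the paper's: analyze the convex, coercive function $H(x)=x+\frac1\alpha\int_x^\infty w(X>s)\,\d s$, characterize its minimizers by a one-sided-derivative condition, and then evaluate $H$ at $\VaR_\alpha^w(X)$ via a layer-cake/Fubini computation to get the $\ES$ identity. The gap is in the derivative step, and it is precisely the issue you flagged as ``bookkeeping'' but then did not resolve: for a general (merely monotone) capacity, the one-sided derivatives of $G(x)=\int_x^\infty w(X>s)\,\d s$ are \emph{not} $-w(X>x)$ and $-w(X\ge x)$. They are $G_+'(x)=-h^+(x)$ and $G_-'(x)=-h^-(x)$, where $h^+(x)=\lim_{s\downarrow x}w(X>s)$ and $h^-(x)=\lim_{s\uparrow x}w(X>s)$ are the one-sided limits of the decreasing tail function. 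Your identifications $h^+(x)=w(X>x)$ and $h^-(x)=w(X\ge x)$ would require continuity of $w$ from below and from above, respectively, along the relevant monotone families of events; neither is available here. This is exactly why the paper's proof introduces the right-/left-continuous versions $h^+,h^-$ and then proves the equivalences $h^+(x)\le\alpha\iff x\ge\VaR_\alpha^w(X)$ and $h^-(x)\ge\alpha\iff x\le\overline{\VaR}_\alpha^w(X)$, which hold because the VaRs are defined as infima over sublevel sets of the monotone tail function and hence do not distinguish it from its regularization.

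The gap is not cosmetic: your first-order condition $w(X>x)\le\alpha\le w(X\ge x)$ does not characterize the minimizers and can even be empty. Take $\Omega=(0,1]$, $X(\omega)=\omega$, fix $c\in(0,1)$, and let $w(A)=\id_{\{(c,1]\subseteq A\}}$, which is a capacity. Then $w(X>s)=w(X\ge s)=\id_{\{s\le c\}}$, so $H(x)=x+\frac1\alpha(c-x)_+$ is minimized exactly at $x=c$, and indeed $\VaR_\alpha^w(X)=\overline{\VaR}_\alpha^w(X)=c$, consistent with the theorem. But the true right derivative of $H$ at $c$ is $1$, whereas your formula gives $1-w(X>c)/\alpha=1-1/\alpha<0$; consequently your condition rejects $x=c$, and your candidate minimizer set $\{x:w(X>x)\le\alpha\le w(X\ge x)\}=(c,\infty)\cap(-\infty,c]$ is empty. (One can check in general that your set is always contained in $[\VaR_\alpha^w(X),\overline{\VaR}_\alpha^w(X)]$, but the inclusion can be strict.) To repair the argument you must state the subdifferential condition with $h^\pm$ rather than with $w(X>x)$ and $w(X\ge x)$, and then prove the two equivalences above; this is what the paper does. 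Your minor variation in the definition of $\overline{\VaR}_\alpha^w$ (using $\{X>x\}$ instead of $\{X\ge x\}$ inside the infimum) is harmless, since the two infima coincide by interlacing of the tail functions; and your plan for the $\ES$ part (evaluating $H$ at the left endpoint via the layer-cake identity) matches the paper's computation and is fine in outline.
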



Next, we discuss the dual representation of Choquet ES.  Let $\mathcal P_0$ be the set of all probabilities on $(\Omega,\mathcal F)$.
For two functions $w,w'$ on $\mathcal F$, 
the inequality $w\le w'$ 
is understood point-wise. 
The classic dual representation of ES (see \citet[Theorem 4.52]{FS16}) is given by
$$
\ES_\alpha^\p (X) =\sup_{\mathbb Q\in\mathcal P_0}\left\{\E^{\mathbb Q}[X]: \mathbb Q\le \p /\alpha  \right\}.
$$
This representation also holds for   coherent Choquet ES.

\begin{proposition}\label{prop:ESS}
For a continuous capacity $w$,   $\alpha\in (0,1)$,
if $\ES_\alpha^w$ is coherent, then
$$
\ES_\alpha^w(X) =\sup_{ \mathbb Q\in\mathcal P_0 }\left\{\E^{\mathbb Q}[X]: \mathbb Q\le w/\alpha \right\},~~X\in \X.
$$
\end{proposition}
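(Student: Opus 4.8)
The plan is to establish the representation
$$
\ES_\alpha^w(X) =\sup_{ \mathbb Q\in\mathcal P_0 }\left\{\E^{\mathbb Q}[X]: \mathbb Q\le w/\alpha \right\}
$$
by invoking the classical Choquet-integral duality for submodular capacities. By Proposition \ref{prop:ESC} we have $\ES_\alpha^w = I_{w_\alpha}$ with $w_\alpha = (w/\alpha)\wedge 1$, and coherence of $\ES_\alpha^w$ is equivalent to submodularity of $w_\alpha$. The key tool is the well-known result (e.g., Theorem 4.94 / Proposition 4.96 of \cite{FS16}, or the Schmeidler--Delbaen theory) that for a submodular capacity $u$ the Choquet integral admits the dual representation $I_u(X) = \sup\{\E^{\mathbb Q}[X]: \mathbb Q\in \mathrm{core}(u)\}$, where the core is the set of finitely additive (and, under continuity, countably additive) probabilities dominated by $u$. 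Thus the first step is to write $I_{w_\alpha}(X)=\sup\{\E^{\mathbb Q}[X]:\mathbb Q\le w_\alpha,~\mathbb Q\in\mathcal P_0\}$.

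The second step is to identify the constraint set. Since $w_\alpha=(w/\alpha)\wedge 1$, a probability $\mathbb Q$ satisfies $\mathbb Q\le w_\alpha$ if and only if $\mathbb Q\le w/\alpha$ and $\mathbb Q\le 1$; the latter is automatic for a probability, so the core of $w_\alpha$ coincides with $\{\mathbb Q\in\mathcal P_0:\mathbb Q\le w/\alpha\}$. This immediately turns the sup over $\mathrm{core}(w_\alpha)$ into the sup over $\{\mathbb Q\in\mathcal P_0:\mathbb Q\le w/\alpha\}$ appearing in the statement, giving the desired identity.

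The main obstacle, and the reason continuity of $w$ is hypothesized, lies in ensuring that the dual representation is attained over \emph{countably additive} probabilities $\mathcal P_0$ rather than merely finitely additive charges. The general Choquet duality for submodular capacities produces a core of finitely additive set functions, and one must upgrade this to countable additivity. The plan here is to use the continuity of $w$ (hence of $w_\alpha$): continuity from above at $\varnothing$ forces every charge in the core to be continuous as well, and a continuous finitely additive probability on a $\sigma$-algebra is countably additive. I would cite the relevant continuity-transfers-to-the-core result (again from the submodular-capacity theory in \cite{MM04} or \cite{FS16}) and verify that $A_n\downarrow\varnothing$ with $w_\alpha(A_n)\to 0$ forces $\mathbb Q(A_n)\le w_\alpha(A_n)\to 0$ for each core element $\mathbb Q$. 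This confirms $\mathbb Q\in\mathcal P_0$ and closes the argument; the boundedness of $X\in\X$ ensures all the integrals and the Choquet integral are finite, so no integrability subtlety arises.
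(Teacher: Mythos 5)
Your proof is correct, and it reaches the only nontrivial step --- upgrading the dual representation from finitely additive charges to countably additive probabilities --- by a genuinely different route than the paper. Both arguments open identically: coherence plus Proposition \ref{prop:ESC} yields submodularity of $w_\alpha=(w/\alpha)\wedge 1$, Theorem 4.94 of \cite{FS16} gives $\ES_\alpha^w(X)=I_{w_\alpha}(X)=\sup\left\{\E^{\mathbb Q}[X]:\mathbb Q\le w_\alpha\right\}$ with the supremum over \emph{finitely additive} normalized set functions, and for probabilities the constraint $\mathbb Q\le w_\alpha$ is equivalent to $\mathbb Q\le w/\alpha$ (a point you make explicit and the paper leaves implicit). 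The paper then performs the upgrade indirectly: it verifies that the functional itself is continuous from above, i.e.\ $\ES_\alpha^w(X_n)\downarrow \ES_\alpha^w(X)$ whenever $X_n\downarrow X$ pointwise, using set-continuity of $w_\alpha$ (via Theorem 4.2 of \cite{MM04}) together with monotone convergence, and then invokes Theorem 4.22 of \cite{FS16}, which converts this continuity of a coherent risk measure into a representation over countably additive probabilities. You instead work directly on the dual set: any finitely additive $\mathbb Q\le w_\alpha$ satisfies $\mathbb Q(A_n)\le w_\alpha(A_n)\to 0$ for $A_n\downarrow\varnothing$, and a finitely additive probability on a $\sigma$-algebra that is continuous at $\varnothing$ is countably additive, so the finitely additive core is already contained in $\mathcal P_0$ and the two suprema coincide trivially. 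Your route is more elementary --- no functional continuity criterion and no convergence computation --- and it proves slightly more, namely that the core of $w_\alpha$ contains no merely finitely additive elements; the paper's route stays within the standard risk-measure machinery, where continuity from above is the reusable currency. One small repair: rather than citing a ``continuity transfers to the core'' result (which you will not find stated in that form in \cite{MM04} or \cite{FS16}), simply include the two-line verification you already sketch, plus the standard fact that continuity at $\varnothing$ implies $\sigma$-additivity (for disjoint $B_k$ with union $B$, apply continuity to $B\setminus\bigcup_{k=1}^{n}B_k\downarrow\varnothing$); that makes the step self-contained.
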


Denote by $\mathcal W$ the set of all capacities on $(\Omega,\mathcal F)$.
We note that 
$$
\ES_\alpha^w(X) =\sup_{ w'\in\mathcal W }\left\{I_{w'}(X) : w'\le w/\alpha \right\},~~X\in \X
$$ 
holds true regardless of any property of $w$ because $w_\alpha=(w/\alpha)\wedge 1$ is the maximum of the set $\{w'\in \mathcal W: w'\le w/\alpha \}$. Proposition \ref{prop:ESS} explains that the set $\mathcal W$ can be chosen as the set $\mathcal P_0$ of probabilities if $w_\alpha$ is submodular.  
  If the continuity of $w$ is not assumed, then the representation in Proposition \ref{prop:ESS} still holds by replacing $\mathcal P_0$ with  the collection of all finitely additive capacities on $(\Omega,\mathcal F)$.




\subsection{Inf-convolution}
The inf-convolution of several coherent Choquet integrals is obtained in the theory of risk measures; see \cite{BE05}. It has an explicit form summarized in the next proposition. 
We denote by $\mathcal S$ the set of all   submodular capacities.
\begin{proposition}\label{prop:coherentchoquet} For submodular capacities $w_1,\dots,w_n\in \mathcal S$, we have
$$
\dsquare_{i=1}^n I_{w_i} (X) =\sup_{ w\in\mathcal S }\left\{I_{w}(X) : w \le  \bigwedge_{i=1}^nw_i  \right\},~~X\in \X,
$$
where $\sup\varnothing=-\infty$.
\end{proposition}
 
Combining Propositions \ref{prop:ESC} and \ref{prop:coherentchoquet}, we immediately obtain an expression for the inf-convolution of Choquet $\ES$ with different capacities. 
\begin{corollary} 
\label{coro:ES-inf}
Let $\alpha_1,\dots,\alpha_n\in (0,1]$.
For submodular capacities $w_1,\dots, w_n \in \mathcal S$, we have
$$\dsquare_{i=1}^n \ES_{\alpha_i}^{w_i}(X)=\sup_{w\in \mathcal S}\left\{I_{w}(X):w\leq \bigwedge_{i=1}^n\frac{w_i}{\alpha_i}\right\},~~X\in \X,$$
where $\sup\varnothing=-\infty$.
In particular, 
for a submodular capacity $w$, we have
$$\dsquare_{i=1}^n \ES_{\alpha_i}^{w}=\ES_{\bigvee_{i=1}^n\alpha_i}^{w}.$$
\end{corollary}

We note that 
$
\sup_{w\in W} \ES_\alpha^w
$
is not a Choquet ES. As shown by \cite{WZ21}, the supremum of ES for different probabilities is not comonotonic additive; recall that a Choquet ES needs to be comonotonic additive by Proposition \ref{prop:ESC}. Generally, $\sup_{w\in W} \ES_\alpha^w$ is smaller than  the Choquet ES given by $\ES_\alpha^{\overline w}$, where $\overline w=\sup_{w\in W} w$. 
Therefore, the Choquet ES does not have the direct interpretation as ES under ambiguity, in contrast to Choquet quantiles, which include worst-case probabilistic quantiles under ambiguity.  

Proposition \ref{prop:coherentchoquet} and Corollary \ref{coro:ES-inf} require that the capacities $w_1,\dots,w_n$ to be submodular. 
When these capacities are not submodular, as in the case of general sup-probabilities, computing the inf-convolution of Choquet ES and the corresponding optimal allocation 
seems to be quite difficult.
In the next two sections, we focus on computation of risk sharing for Choquet quantiles with  sup-probabilities.

\section{Computation of risk sharing with sup-probabilities}\label{sec:numerical}

 We now numerically solve the problem of    risk sharing among agents using probabilistic quantiles with ambiguity on the underlying probability.  
 Suppose that agent $i\in [n]$ has  an ambiguity set $\mathcal P_i$, consisting of some probabilities on $(\Omega,\mathcal F)$, and 
 uses a probabilistic quantile with level $\alpha_i\in (0,1)$.
The risk measure for agent $i$ is   $\sup_{\mathbb Q\in \mathcal P_i} \VaR_{\alpha_i}^{\mathbb Q}$.  
 The risk sharing  problem  is to find optimal allocations for the problem
\begin{align}\label{eq:app1}
\dsquare_{i=1}^n \sup_{\mathbb Q\in \mathcal P_i} \VaR_{\alpha_i}^{\mathbb Q} (X)=\inf\left\{ \sum_{i=1}^n\sup_{\mathbb Q\in \mathcal P_i} \VaR_{\alpha_i}^{\mathbb Q}(X_i): (X_1,\cdots,X_n)\in \mathbb A_n(X)  \right\},
\end{align}
where $\mathbb{A}_n(X)$ is the set of all   allocations of $X$ given by \eqref{eq:intro1}.
As we explained in Section \ref{sec:inf-convolution}, solving \eqref{eq:app1} is equivalent to finding all Pareto-optimal allocations.

\subsection{Atomless setting}
We consider the setting in which $(\Omega,\mathcal F) = ([0,1],\mathcal B([0,1]))$ and $\p$ is the Lebesgue measure.  
For computational feasibility, suppose that    $\mathcal P_i$ is a finite set of   probability measures on $([0,1],\mathcal B([0,1]))$ absolutely continuous with respect $\p$ such that 
$\d \mathbb Q/\d \p$ takes finitely many values for $\mathbb Q \in \mathcal P_i$.
Denote by $N_i$ the cardinality of $\mathcal P_i$.
Moreover, the total risk $X$ also takes finitely many values.
Under this setting, we can find  an $m$-partition $\{B_j\}_{j\in[m]}\subseteq \mathcal B([0,1])$  of $\Omega$, generated jointly by all elements of $\mathcal P_1,\dots,\mathcal P_n$ and $X$, 
such that we can write
$$X= \sum_{j=1}^m x_j \id_{B_j}$$
for some constants $x_1,\dots,x_m$ and 
$$
\mathcal P_i =\left\{ \mathbb Q_{k}^{(i)}: \frac{{\rm d} \mathbb Q_{k}^{(i)}}{{\rm d} \mathbb \p}=\sum_{j=1}^m p_{kj}^{(i)} \id_{B_j},~ k\in [N_i]  \right\},~~i\in [n],
$$
where $p_{kj}^{(i)}\ge 0$ satisfies $\sum_{j=1}^m p_{kj}^{(i)} \p(B_j)=1$ for $k\in [N_i]$ and $i\in [n]$.  
By  Corollary \ref{co:6},  problem \eqref{eq:app1} is equivalent to
 $$
 \dsquare_{i=1}^n \sup_{\mathbb Q\in \mathcal P_i} \VaR^{\mathbb Q}_{\alpha_i} (X)
 = \inf\{x\in \R: \{X> x\}\in \mathcal C_v\},
 $$
 where $$
 \mathcal C_v=\left\{\bigcup_{i=1}^n A_i: \sup_{\mathbb Q\in \mathcal P_i} \mathbb Q(A_i) \le  \alpha_i,~i\in[n] \right\}.
 $$ Therefore,  solving problem \eqref{eq:app1} is equivalent to finding the smallest $x\in\R$ such that 
\begin{align}\label{eq:app2}
\{X > x\} =\bigcup_{i=1}^n A_i~~{\rm with}~~\sup_{\mathbb Q\in \mathcal P_i} \mathbb Q(A_i)\le  \alpha_i,~~i\in [n].\end{align}
By straightforward manipulation, the above problem is equivalent to
\begin{align*}
    \min  ~~~& x\\
{\rm subject~to}~~ & ( x -x_j) \id_{\{\sum_{i=1}^n s_{j}^{i} <1\}}\ge 0,~~j\in [m],\\
  & \sum_{j=1}^m p_{kj}^{(i)} \p(B_j)s_{j}^{i} \le  \alpha_i,\,k\in [N_i],\,i\in [n],\\
  & s_{j}^{i} \in [0,1], \,j\in [m], \,  i\in [n].
\end{align*}
The optimal value satisfies  $x^*= \max_{j\in [m]}x_j\id_{\{\sum_{i=1}^n(s_j^{i})^*<1\}}$,
where 
$(x ^*, (s_j^{i})^*_{i\in[n], j\in [m]})$ is a corresponding minimizer.
One optimal allocation is given by
$$
X_i^* = (X-x^*)\id_{A_i^*} +\frac{x^*}n,~~i\in [n],
$$
where  $A_i^* = \bigcup_{j=1}^m \left(B_j\cap {\{U\in [\sum_{k=0}^{i-1}(s_{j}^k)^*,\sum_{k=0}^{i}(s_{j}^k)^*]\} }\right)$, $i\in [n-1],$ and $A_n^*=\Omega\setminus (\bigcup_{i=1}^{n-1}A_i^*)$ with  $(s_{j}^0)^*=0$ for $j\in [m]$ and  $U\sim {\rm U}[0,1]$ being independent of $(B_j)_{j\in [m]}$.

Although we provided a specific form of the allocation,  on the event $\{X<x^*\}$  the value of $X_i^*$ is not important, which  can be arbitrary as long as it is not larger than $x^*/n$ due to the definition of Choquet quantiles.

We provide a few numerical examples. 
In each example, we provide the optimal allocation visually. 
We also compare the optimal total risk (post-sharing), represented by the inf-convolution in \eqref{eq:app1}, and the total   risk before sharing, represented by $\sum_{i=1}^n \sup_{\mathbb Q\in \mathcal P_i} \VaR_{\alpha_i}^{\mathbb Q} (X/n)$, assuming that all agents have an initial position $X/n$.
The post-sharing risk, as an optimum, is smaller than or equal to the latter value, and we are interested in how much risk reduction is achieved by the optimal risk sharing.
The reduction is computed as one minus the ratio of the  (optimal) total  post-sharing risk to the initial risk.


\begin{table}[t]
    \begin{center}
    \caption{Risk comparison before and after risk sharing in the atomless setting}
    \label{tab:1} \renewcommand{\arraystretch}{1.5} 
    \begin{tabular}{c|c|cccc} 
        $n$  & $(\alpha_1,\dots,\alpha_n)$ & model  & Initial risk & Post-sharing risk  &   Reduction  \\ \hline
        \multirow{6}{*}{{2}} & \multirow{2}{*}{(0.05, 0.1)} & DU  & 0.95 & 0.85 & 11\%  \\ 
         &  & DP & 10.00 &  5.00 & 50\% \\ \cline{2-6}
         & \multirow{2}{*}{(0.1, 0.1)} & DU  & 0.95 & 0.80 & 16\%  \\ 
         &  & DP & 10.00 &  4.00 & 60\% \\ \cline{2-6}
          & \multirow{2}{*}{(0.3, 0.1)} & DU  &  0.85& 0.70 & 18\%  \\ 
         &  &DP &6.67   & 2.86 & 57\%  \\ \hline
        \multirow{4}{*}{{5}} & \multirow{2}{*}{(0.1, 0.1, 0.05, 0.05, 0.05)} & DU   &  0.98 & 0.40  &  59\% \\ 
         &  & DP & $16.00$ & 1.54 & 90\%  \\  \cline{2-6} 
         & \multirow{2}{*}{(0.05, 0.05, 0.025, 0.025, 0.025)} &  DU   &  0.98 & 0.75  &  24\% \\
         &  &DP & $16.00$ & 3.33 & 79\% \\ \hline
    \end{tabular}\\
    \label{tab:example1}
    \end{center}
    \small {\it Notes.} DU and DP represent   discrete uniform and discrete power models, respectively.
    Let 
    $Y$  have a uniform distribution on the set $\{j/20: j=1,\dots,20\}$ under $\p$.
    In the DU model,   $X$ is identically distributed to $Y$. 
    In the DP model, 
       $X$ is distributed as $1/Y$.       
       The initial risk and the post-sharing risk are defined as $\sum_{i=1}^n \sup_{\mathbb Q\in \mathcal P_i} \VaR^{\mathbb Q}_{\alpha_i} (X/n)$ and $\dsquare_{i=1}^n \sup_{\mathbb Q\in \mathcal P_i} \VaR^{\mathbb Q}_{\alpha_i} (X)$, respectively.
       The probability measures in $\mathcal P_1,\dots,\mathcal P_n$ are randomly generated. The reduction is computed  as one minus the ratio of the post-sharing risk to the initial risk.
\end{table}


    \begin{figure}[t]
            \centering
              \begin{subfigure}[b]{0.99\textwidth}
            \centering
\includegraphics[width=12.5cm]{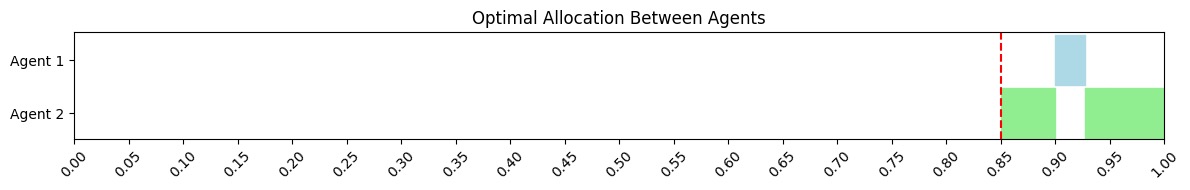} 
    \caption{$(\alpha_1,\alpha_2)=(0.05,\,0.1)$} 
    \end{subfigure}\\~\\
    \begin{subfigure}[b]{0.99\textwidth}
            \centering
\includegraphics[width=12.5cm]{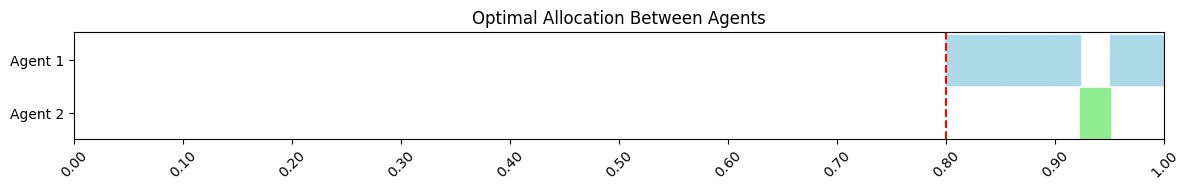} 
    \caption{$(\alpha_1,\alpha_2)=(0.1,\,0.1)$} 
    \end{subfigure}\\~\\
     \begin{subfigure}[b]{0.99\textwidth}    \centering
    \includegraphics[width=12.5cm]{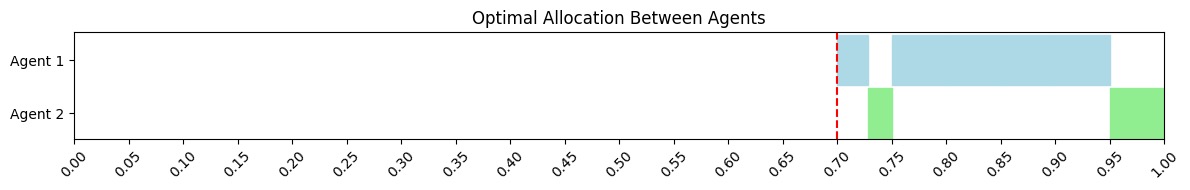}     
    \caption{$(\alpha_1,\alpha_2)=(0.3,\,0.1)$} 
    \end{subfigure} 
    \caption{Optimal allocation between two agents in the atomless setting in Example \ref{ex:3}
}\label{fig:11-atomless} 
\end{figure}

 Let 
    $Y$  have a uniform distribution on the set $\{j/m: j=1,\dots,m\}$ with $m=20$ under $\p$.
    In the discrete uniform model, the random variable $X$ is identically distributed to $Y$ under $\p$. 
    In the discrete power model,
       $X$ is distributed as $1/Y$.
       To better present the plots, we assume that $X(\omega)$ is increasing in $\omega$ without loss of generality. 
       By Proposition \ref{prop:invariance}, the optimal allocations for the above two models are very similar, because   the risks $X$ in both models can be written as increasing transforms of each other. Especially, they share the same composition $(A_1^*,\dots,A_n^*)$, which we present in the figures.

\begin{example} \label{ex:3} Let $n=2$ and $m=20$, $B_j=[(j-1)/m,j/m)$, $j\in [m]$, and   $N_i=3$ for $i=1,2$  and $X=\sum_{j=1}^m j/m\id_{B_j}$,  where $p_{kj}^{(i)}$ are generated randomly and uniformly such that $\sum_{j=1}^m p_{kj}^{(i)}/m=1$. 
Rigorously speaking, $\{B_j\}_{j\in [m]}$ is a partition of $[0,1)$ instead of $[0,1]$, but this creates no difference under the Lebesgue measure. 
The optimal value (red line) and allocation (rectangles) are presented in Figure  \ref{fig:11-atomless} for two different choices of $(\alpha_1,\alpha_2)$.
Note that although the  set $B_j$ in the partition has the length $1/m$, the optimal allocation has events that are of smaller length, meaning that each event $B_j$ may still be split among the two agents.  
The case of $n=5$ agents is presented in Figure \ref{fig:22} in Section \ref{app:numerical} for two different choices of $(\alpha_1,\dots,\alpha_5)$.
Table \ref{tab:1} illustrates the risk measure values before and after risk sharing under two models. 
We can make two intuitive observations from 
 Table \ref{tab:1}. 
 First, the risk reduction is larger in percentage for the discrete power model than the discrete uniform model. The discrete power model has a more skewed distribution, thus more ``risky'' in the usual sense (it can be seen as a discrete version of the Pareto distribution), and risk sharing is more efficient for such losses. 
 Second, the risk reduction is larger in percentage   when the number of agents is larger. This confirms the intuition that   the risk sharing mechanism becomes more efficient when more agents join the pool. 
\end{example}
  
\begin{example}
\label{ex:3-2}
Under the assumption of Example \ref{ex:3}, but now let $n=3$ and  $\{p_{kj}^{(1)}: j\in [m]\}$ be  sorted  in the descending order 
for each $k\in [N_1]$
and $\{p_{kj}^{(i)}: j\in [m]\}$  be  sorted  in the ascending order  for each $k\in [N_i]$ and $i\ge 2$.  The optimal value (red line) and allocation (rectangles) are presented in Figure  \ref{fig:32-atomless} for two different choices of $(\alpha_1,\alpha_2,\alpha_3)$. We can see that in the  optimal allocation,  $A_1^*$ takes the form of $[\omega_1,1]$ for some $\omega_1\in [0,1]$, as theoretically justified by  Proposition \ref{prop:9}. 
\end{example}

   \begin{figure}[t]
    \centering
      \begin{subfigure}[b]{0.99\textwidth}    \centering
\includegraphics[width=12.5cm]{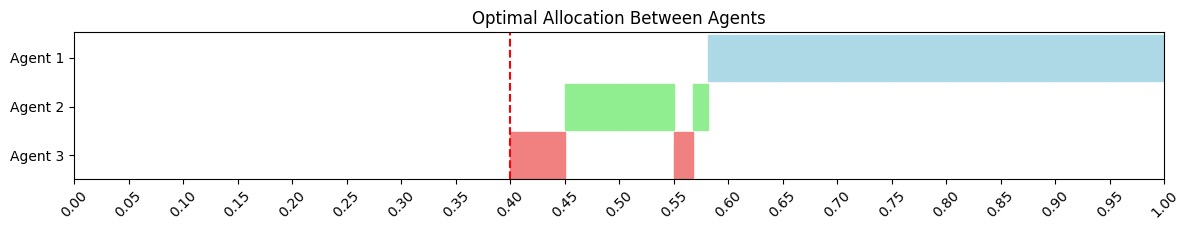} 
    \caption{$(\alpha_1,\alpha_2,\alpha_3)=(0.1, 0.1, 0.05)$} 
    \end{subfigure}  \\~\\
      \begin{subfigure}[b]{0.99\textwidth}
    \centering
    \includegraphics[width=12.5cm]{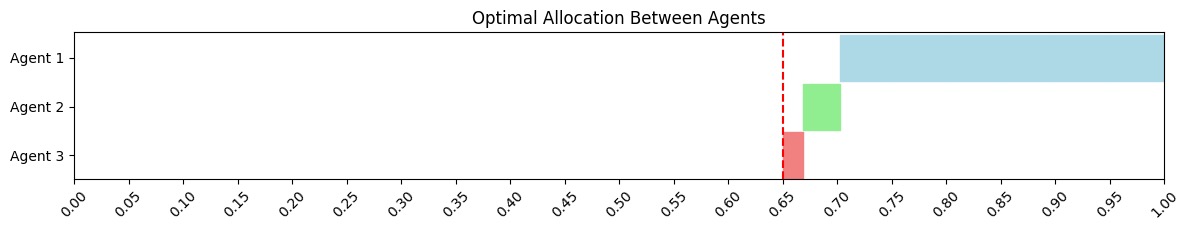}
          \caption{$(\alpha_1,\alpha_2,\alpha_3)=(0.05, 0.05, 0.025)$}\end{subfigure}
    \caption{Optimal allocation among three agents in the atomless setting in Example \ref{ex:3-2}}
    \label{fig:32-atomless}
\end{figure} 


\subsection{Discrete setting}
\label{sec:62}

Next, we consider the discrete setting, where
$\Omega=\{\omega_1,\dots,\omega_m\}$ and $\mathcal F$ is the set of all subsets of $\Omega$.
Suppose that $\mathcal P_i$ is a finite set of discrete probability measures on $(\Omega,\mathcal F)$, that is,  
$$
\mathcal P_i =\left\{\mathbb Q_k^{(i)}=\sum_{j=1}^m p_{kj}^{(i)} \delta_{\omega_j}: k\in [N_i]  \right\},~~i\in [n],
$$
where $p_{kj}^{(i)}\ge 0$ satisfies $\sum_{j=1}^m p_{kj}^{(i)}=1$ for $k\in [N_i]$ and $i\in [n]$, and $\omega_j\in\Omega$ for $j\in [m]$, and $\delta_\omega$ is the Dirac measure at $\omega$. 
For a given random variable $X\in\mathcal X$, 
 we assume 
$$
X(\omega_{j_1})< X(\omega_{j_2})< \cdots< X(\omega_{j_m}).
$$ 
Define  
$$ q_{k\ell }^{(i)} = p_{kj_\ell}^{(i)},~~\ell\in [m].$$ 
Based on \eqref{eq:app2}, we have that problem \eqref{eq:app1} is equivalent to the following optimization problem:
\begin{align*}
    \min_{\ell, \, {\bf s}}~&~x\\
    {\rm s.t.}~
    &~\ell\in [m],~x\ge X(\omega_{j_{\ell-1}}),\\
    &~q_{k\ell }^{(i)} s_{i\ell}+\cdots+ q_{km}^{(i)}  s_{im}\le  \alpha_i,~~k\in [N_i],~i\in [n],\\
    & ~s_{ij}\in \{0,1\},\,i\in[n],\,  ~s_{1j}+\dots+s_{nj}=1,\,j\in [m],
\end{align*}
where $X(\omega_{j_0})=-\infty$.
If the minimizer to the above problem is $(\ell^*, (s_{i\ell}^*)_{i\in[n], j\in [m]})$, then the optimal value is $x^*=X(w_{j_{\ell^*-1}})$ and one optimal allocation is given by
$$
X_i^* = (X-x^*)\id_{A_i^*} +\frac{x^*}n,~~i\in [n],
$$
where $(A_1^*,\dots, A_n^*)\in \Pi_n$  and  $ \{\omega_\ell: s_{i\ell}^*=1,~\ell=\ell^*,\ldots,m\}\subseteq A_i^* $ for $i\in [n].$

The main technical difference between the discrete setting
and the atomless setting is that each atom now can only be allocated to one agent, and there is no split between agents on a single atom.


\begin{table}[t]
    \begin{center}
    \caption{Risk comparison before and after risk sharing in the atomless setting; for details, see the notes of Table \ref{tab:example1}}  \renewcommand{\arraystretch}{1.5} 
    \begin{tabular}{c|c|cccc} 
        $n$  & $(\alpha_1,\dots,\alpha_n)$ & model  & Initial risk & Post-sharing risk  &   Reduction  \\ \hline
        \multirow{4}{*}{{2}} & \multirow{2}{*}{(0.1, 0.1)} & DU & 1.00 & 0.85 & 15\%  \\ 
         &  & DP & 20.00 &  5.00 & 75\% \\  \cline{2-6} 
          & \multirow{2}{*}{(0.3, 0.1)} & DU &  0.90 & 0.75 & 17\%  \\ 
         &  & DP & 12.00  & 3.33 & 72\%\\ \hline
          \multirow{2}{*}{{5}} & \multirow{2}{*}{(0.1, 0.1, 0.05, 0.05, 0.05)} & DU &  0.79 & 0.70  &  24\% \\ 
         &  & DP & 6.71 & 2.86 & 57\% \\ \hline
    \end{tabular}\\
    \label{tab:example2}
    \end{center} 
\end{table}

\begin{example}
\label{ex:4-1}
Let $n=2$ and $m=20$ and $\omega_j=j/m$ for $j\in [m]$, and we set $\mathcal P_i=\{\mathbb Q_{i1},\mathbb Q_{i2}, \mathbb Q_{i3}\}$, $i=1,2$ with $\mathbb Q_{ik}= \sum_{j=1}^m p_{kj}^{(i)} \delta_{j/m}$ for $i=1,2$ and $k=1,2,3$,  where $p_{kj}^{(i)}$ are generated randomly such that $\sum_{j=1}^m p_{kj}^{(i)}=1$.  For  $X$ with $X(\omega_j)=\omega_j$, $j\in [m]$, the optimal value (red line)  and allocation (rectangles) are presented in Figure \ref{fig:12} for two different choices of $(\alpha_1,\alpha_2)$.
The case $n=5$ with 
$(\alpha_1,\dots,\alpha_5) = (0.1, 0.1, 0.05, 0.05, 0.05)$ is depicted in Figure \ref{fig:2} in Section \ref{app:numerical}.
\end{example}
    \begin{figure}[t]
    \centering      \begin{subfigure}[b]{0.99\textwidth}    \centering   
    \includegraphics[width=12.5cm]{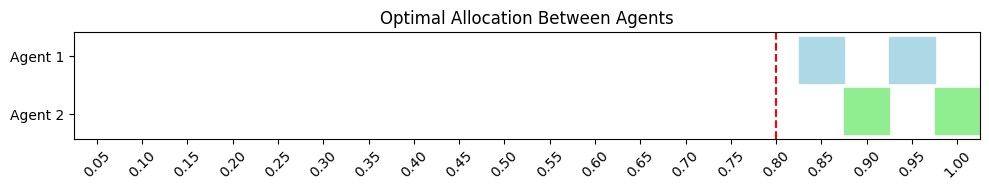} 
    \caption{$(\alpha_1,\alpha_2)=(0.1,0.1)$} 
     \end{subfigure}  \\~\\
      \begin{subfigure}[b]{0.99\textwidth}
    \centering
    \includegraphics[width=12.5cm]{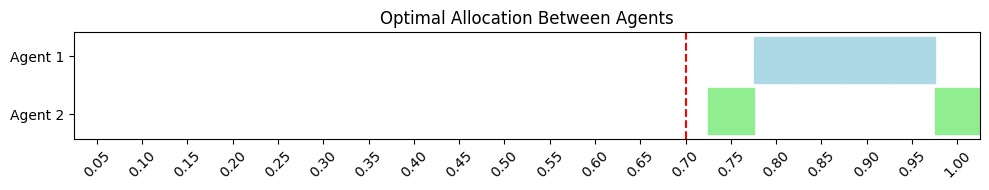}
   \caption{$(\alpha_1,\alpha_2)=(0.3,0.1)$}
         \end{subfigure} 
         \caption{Optimal allocation between two agents in the discrete setting in Example \ref{ex:4-1}; each block represents a discrete value (for instance, in the top panel, the green blocks represent two discrete points at $0.9$ and $1.0$ allocated to agent $2$)}
    \label{fig:12}
\end{figure}

\section{A financial data implementation}\label{sec:realdata}

In this section, we apply the our result on risk sharing with quantiles under ambiguity to the loss data of S\&P 500 index.  We collect   S\&P 500 index daily return data from January 1, 2014  to December 31, 2023. The daily log losses, given by $-\log r_t$, where $r_t$ is the daily return at day $t$, are used to capture the relative changes between consecutive trading days. Using the log loss to compute risk measures is a standard practice in risk management; see \cite{MFE15}.  We then divide  the data into different periods of five-year, two-year, and one-year intervals to estimate the distribution of the future (next day) log loss denoted by $L$. We compute the empirical distributions within each period and also fit two normal and t distributions. 
The probability measures  in the ambiguity sets used by the four agents    are represented by the estimated distributions of $L$, formulated in the following way. 
\begin{enumerate}
    \item The ambiguity set $\mathcal P_1$ of  agent 1 contains two probabilities, each computed based on five consecutive years of data, shown in Figure \ref{fig:RD-01}.  
    \item The ambiguity set $\mathcal P_2$ of  agent 2 contains five probabilities, each computed based on two consecutive years of data, shown in Figure \ref{fig:RD-02}.
        \item The ambiguity set $\mathcal P_3$ of  agent 3 contains $10$ probabilities, each computed based on one year  of data, shown in Figure \ref{fig:RD-SAA-1year} in Appendix \ref{app:numerical}.
            \item The ambiguity set $\mathcal P_4$ of  agent 4 contains 20 probabilities, which are fitted to normal and t distributions (by maximum likelihood estimation), each  based on one  year of data, shown partially in Figure \ref{fig:RD-normalt}. The full details are in
            Figure \ref{fig:RD-normalt-all} in Appendix \ref{app:numerical}.
\end{enumerate}

The approach of computing the loss distributions based on yearly data and then take the worst-case risk evaluation is specified in the Basel Accords (\cite{BASEL19}); see \citet[Section 1.1]{WZ21} for a detailed explanation.
The four agents in our setting follow such an approach, although using different windows of the data. 
The different window sizes reflect the trade-off between the stability of the financial loss data across different years 
and the error of statistical estimation.

 \begin{figure}[t]
    \centering
    \includegraphics[width=6cm]{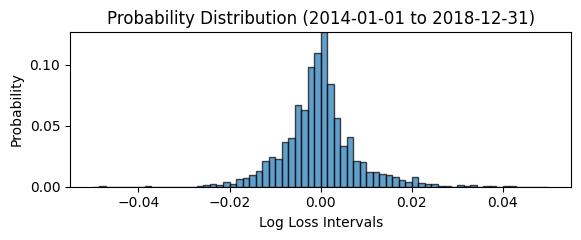} 
    \includegraphics[width=6cm]{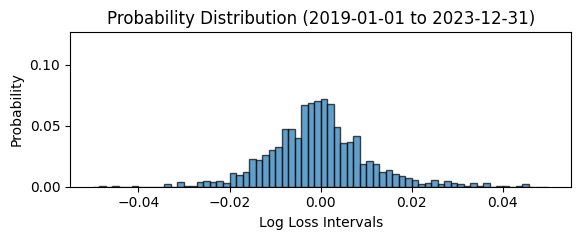} 
    \caption{Two probabilities  for the log loss of agent $1$}\label{fig:RD-01}
\end{figure}  

\begin{figure}[t]
    \centering
    \includegraphics[width=6cm]{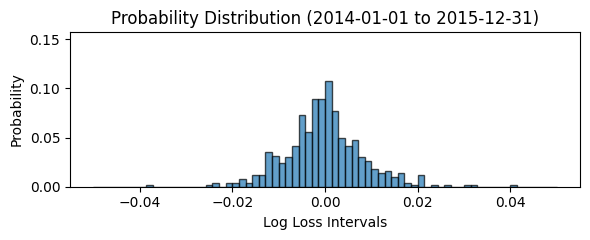} 
    \includegraphics[width=6cm]{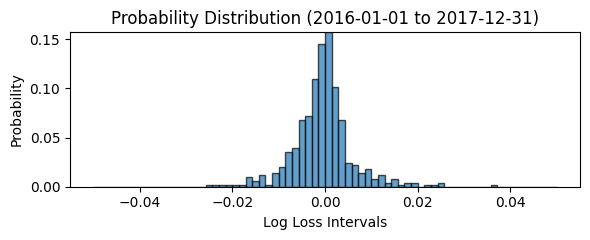} 
     \includegraphics[width=6cm]{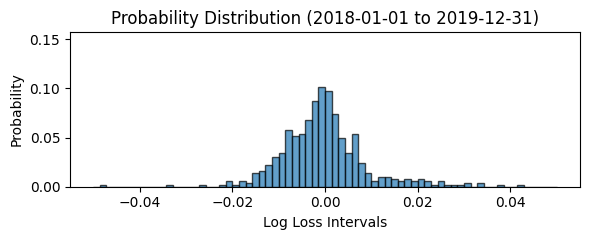} 
    \includegraphics[width=6cm]{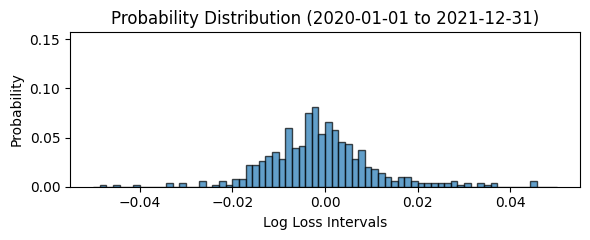} 
    \includegraphics[width=6cm]{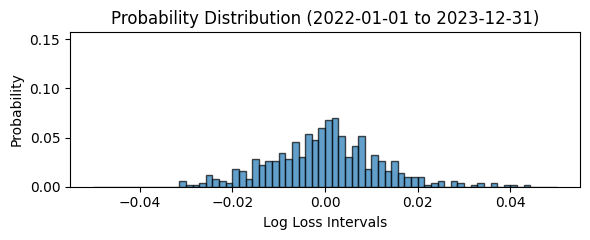} 
    \caption{Five  probabilities for the log loss of agent $2$}\label{fig:RD-02}
\end{figure}

 \begin{figure}[t] 
    \centering
    \includegraphics[width=12cm]{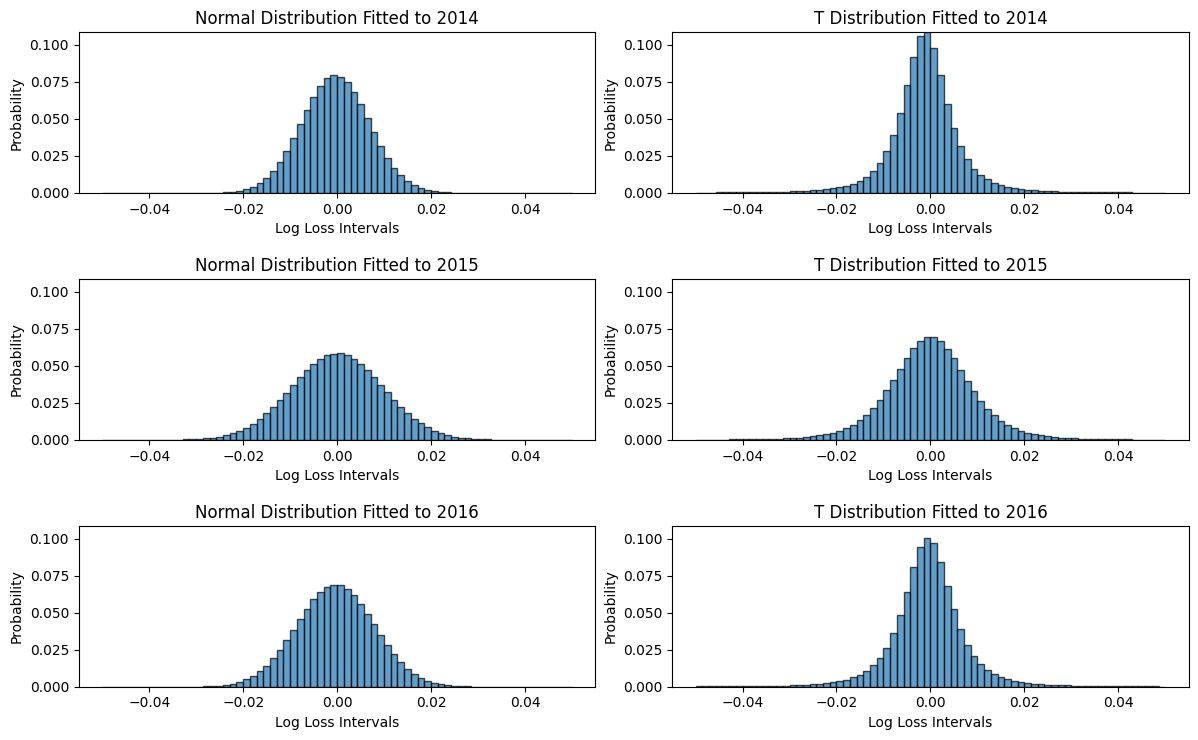} 
    \caption{Probabilities  for the log loss of agent $4$ fitted to normal and t distributions in the first 3 years}\label{fig:RD-normalt}
\end{figure}  

The total loss $X$ to share by the four agents is  the future loss from investing one dollar in the S\&P 500 index, that is, $X=1-\exp(-L)$, which is close to $L$ because the values of $L$ are small.
The  initial positions are assumed to be $X/4$ for each agent.  
To our numerical   method to compute the optimal risk sharing, we approximate the distributions with 700 equispaced discrete grids in $[-0.05,0.05]$ and use the method in Section \ref{sec:62}.  
The optimal  allocations are presented in Figure 
\ref{fig:RD-3}, corresponding to different values of $(\alpha_1,\alpha_2,\alpha_3,\alpha_4)$. Table \ref{tab:example} reports the initial risk
$\sum_{i=1}^4 \sup_{\mathbb Q\in \mathcal P_i} \VaR_{\alpha_i}^{\mathbb Q} (X/4)$
and the post-sharing risk
$\dsquare_{i=1}^4 \sup_{\mathbb Q\in \mathcal P_i} \VaR^{\mathbb Q}_{\alpha_i} (X)$ and the ratio of reduction, similarly to  Section \ref{sec:numerical}.
From the numerical results, we can see that by entering the risk sharing pool, agents are able to reduce their risk by 45\% in the case where $\alpha_1=\dots=\alpha_4=0.01$, which is a common choice of the VaR parameter for market risk.


   \begin{figure}[t]
     \centering      \begin{subfigure}[b]{0.99\textwidth}    \centering   
    \includegraphics[width=12.5cm]{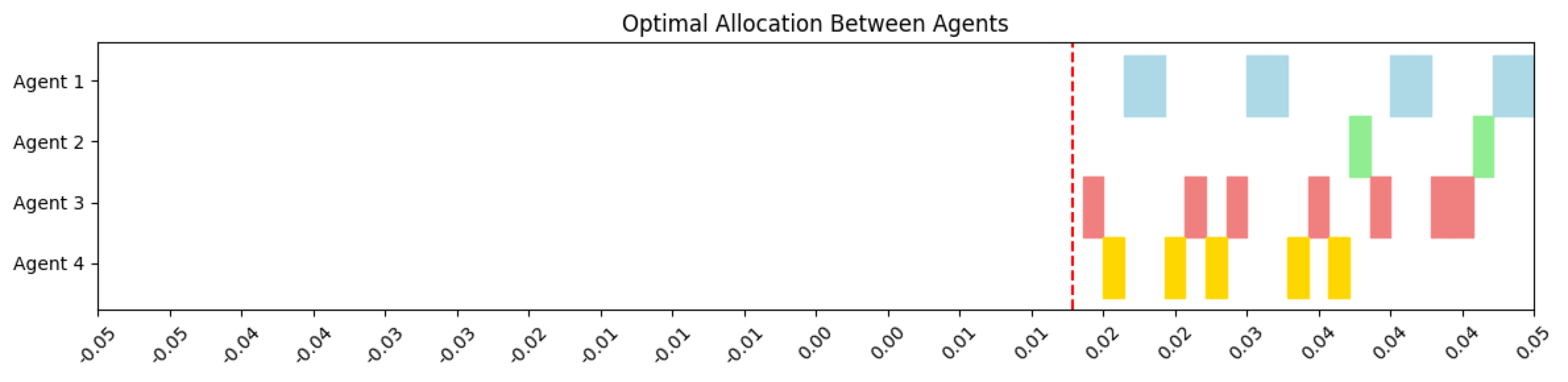} 
    \caption{$(\alpha_1,\ldots,\alpha_4)=(0.01, 0.01, 0.01, 0.01)$.} 
     \end{subfigure}  \\~\\ 
    \centering      \begin{subfigure}[b]{0.99\textwidth}    \centering   
    \includegraphics[width=12.5cm]{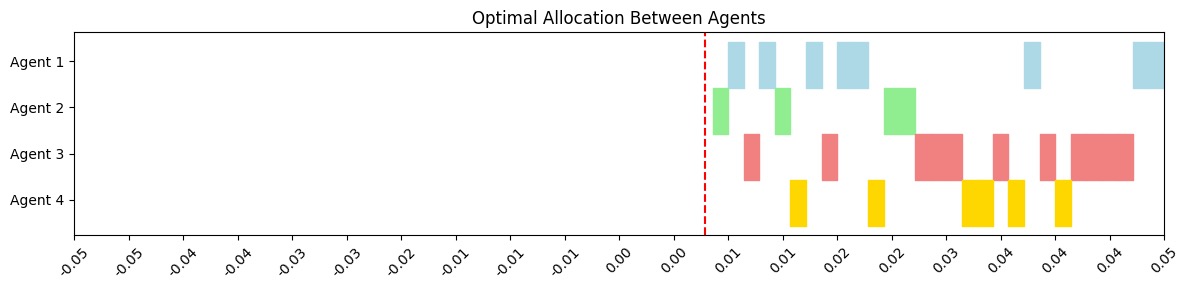} 
    \caption{$(\alpha_1,\ldots,\alpha_4)=(0.05, 0.05, 0.05, 0.05)$.} 
     \end{subfigure}  \\~\\
      \begin{subfigure}[b]{0.99\textwidth}
    \centering
    \includegraphics[width=12.5cm]{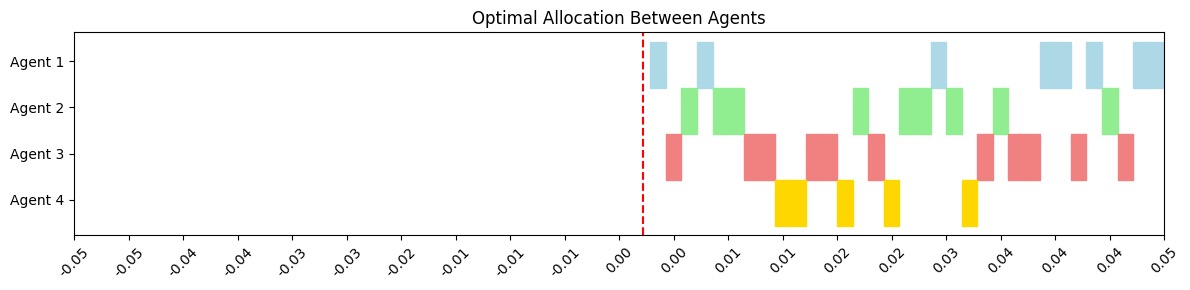}
   \caption{$(\alpha_1,\ldots,\alpha_4)=(0.1, 0.1, 0.1, 0.1)$.}
         \end{subfigure}  
         \caption{Optimal allocation of loss from the S\&P 500 index among four agents, where the space $\Omega$ is the set of values taken by $L$}
    \label{fig:RD-3}
\end{figure}  

\begin{table}[t]
    \begin{center}
    \caption{Risk Comparison before and after risk sharing for S\&P500 index}
    \renewcommand{\arraystretch}{1.5} 
    \begin{tabular}{c|ccc}
       $(\alpha_1,\dots,\alpha_4)$   & Initial risk & Post-sharing risk  &   Reduction  \\ \hline  
          (0.01, 0.01, 0.01, 0.01)   &0.0334 & 0.0184 & 45\%\\ \hline
       (0.05, 0.05, 0.05, 0.05)  & 0.0205 & 0.0085 &  59\%  \\ \hline
         (0.1, 0.1, 0.1, 0.1)     & 0.0152 &  0.0029 & 81\% \\ \hline  
    \end{tabular}\\
    \label{tab:example}
    \end{center}
\end{table}


\section{Conclusion}

\label{sec:conclusion}

Motivated by risk sharing problems for quantile agents under ambiguity, we formally introduced Choquet quantiles, which are interpreted as quantiles under ambiguity. We use only one axiom, ordinality, to characterize Choquet quantiles and find some nice properties enjoyed by Choquet quantiles such as closedness under the sup and inf operations and closedness under the inf-convolution. Those properties help us to fully solve the risk sharing problem for quantiles under heterogeneous beliefs and ambiguity. We further propose and study Choquet  ES, which is a version of ES under ambiguity.  Those two risk measures are useful tools to capture ambiguity and model uncertainty, which would be applicable to other related problems when ambiguity is involved.

\newpage
\appendix

\setcounter{table}{0}
\setcounter{figure}{0}
\setcounter{equation}{0}
\renewcommand{\thetable}{A.\arabic{table}}
\renewcommand{\thefigure}{A.\arabic{figure}}
\renewcommand{\theequation}{A.\arabic{equation}}

\setcounter{theorem}{0}
\setcounter{proposition}{0}
\renewcommand{\thetheorem}{A.\arabic{theorem}}
\renewcommand{\theproposition}{A.\arabic{proposition}}
\setcounter{lemma}{0}
\renewcommand{\thelemma}{A.\arabic{lemma}}

\setcounter{corollary}{0}
\renewcommand{\thecorollary}{A.\arabic{corollary}}

\setcounter{remark}{0}
\renewcommand{\theremark}{A.\arabic{remark}}
\setcounter{definition}{0}
\renewcommand{\thedefinition}{A.\arabic{definition}}

\section{Appendix: Technical proofs}
\label{app:A}
\noindent{\it Proof of Proposition \ref{prop:basic-def}.}
First note that, for $X\in \X$ and a capacity $v$ taking values in $\{0,1\}$, it holds that \begin{align} I_v(X)=\sup\{x\in \R: v(X\ge x)=1\}= \inf\{x\in \R: v(X\ge x)=0\} .\label{eq:CQ}\end{align}
This can be checked by the definition of Choquet quantile.
To show that \eqref{eq:rep1} and \eqref{eq:rep2} hold, it suffices to take $\alpha =1/2$ and $w=v$ and use \eqref{eq:CQ}.
To show that \eqref{eq:rep1} is a Choquet quantile, let $v:\mathcal F\to \{0,1\}$ be defined as  $v(A) =\id_{\{w(A)>1-\alpha \}}$ for $A\in\mathcal F$.
We obtain from \eqref{eq:rep1} that
$$
Q^w_\alpha (X) =  \inf \{ x\in \R : w(X\ge x) \le 1-\alpha\}= \inf \{ x\in \R : v(X\ge x) =0\},$$
which is a Choquet quantile by \eqref{eq:CQ}.
The proof of that \eqref{eq:rep2} is a Choquet quantile is similar. 
\qed\medskip


\noindent{\it Proof of Theorem \ref{th:main}.}
We first check the implication (ii)$\Rightarrow$(i).
For any increasing and continuous function $f$, $x\in \R$ and $\epsilon>0$,
it holds that $  \{f(X)\ge f(x)+\epsilon \} \subseteq \{X\ge x\}\subseteq \{f(X)\ge f(x)\}  $.
Using \eqref{eq:CQ} and the continuity of $f$, we have
\begin{align*}
f(\cR(X))  & =   f(\inf\{x : v( X \ge x)=0\})\\
& =   \inf\{f(x) : v( X \ge x)=0\}\\
&\le   \inf\{f(x) : v( f(X)  \ge f(x))=0\} \\
& =\inf\{x : v( f(X)  \ge x)=0\} = \cR(f(X)).
\end{align*} 
Similarly,  $
f(\cR(X))
\ge   \inf\{f(x) \in \R: v( f(X)  \ge f(x) +\epsilon )=0\}   = \cR(f(X)) - \epsilon. $
Since $\epsilon$ is arbitrary, we have $f(\cR(X))= \cR(f(X))$, and thus ordinality holds.
 Below we show (i)$\Rightarrow$(ii). We first verify three facts.
\begin{enumerate}[(a)]
\item   $\cR(X)$ takes values only in the closure of $\{X(\omega): \omega \in \Omega\}$, denoted by $C_X$. If $\cR(X) \not \in C_X$, then there exists an interval $(s,t)$ such that $\cR(X)\in (s,t)$ and $C_X\cap (s,t)=\varnothing$. It follows that $f(X)=g(X)$ for all increasing and continuous $f$ and $g$ which agree outside $[s,t]$ but do not agree inside $(s,t)$, while $f(\cR(X))\ne g(\cR(X))$ leading to a contradiction to ordinality.
\item $\cR$ is comonotonic additive.
For any two random variables $X$ and $Y$ that are comonotonic, by Denneberg's Lemma (Proposition 4.5 of \cite{D94}), there exist continuous and increasing functions $f$ and $g$
such that $X=f(Z)$ and $Y=g(Z)$, where $Z=X+Y$.
Note that $(f+g)(x)=x$ for $x$ in the range of $Z$.
Therefore, using ordinality of $\cR$ and (a),
\begin{align*} \cR(Z) =  (f+g) (\cR(Z))&= f (\cR(Z)) + g(\cR(Z))   = \cR(f(Z)) +\cR(g(Z))=\cR(X)+ \cR(Y).
\end{align*}
\item  $\cR$ is positively homogeneous; that is, $\cR(\lambda X)=\lambda \cR(X)$ for all $X\in \X$ and $\lambda >0$. This follows directly from ordinality of $\cR$.
\end{enumerate}

Note that  (a) yields $\cR(c)=c$   for any $c\in \R$.
Let $\X_0$ be the set of all simple functions in $\X_{\mathcal A}$, i.e., those which take finitely many values.
Using Proposition 1 of \cite{S86}, (b) and (c) imply that $\cR$ can be represented as a Choquet integral $\cR(X)=\int X\d v$ on $\X_0$
for some     function $v:\mathcal F\to \R$ defined by $v(A)=\cR(\id_A)$ for $A\in \mathcal F$.
Note that $v(\varnothing)=0$ and $v(\Omega)=1$ follow  from $\cR(0)=0$ and $\cR(1)=1$.
Using (a), we know   $v(A)\in \{0,1\}$.
Next we show that $v$ is increasing.
Take $A,B\in \mathcal F$ with  $A\subseteq B$ and $v(A)=1$. We will show $v(B)=1$.
Let $f$ be a continuous increasing function on $\R$ satisfying $f(0)=0$ and $f(1)=f(2)=2$.
Using  (b), we have
$\cR(\id_A+\id_B)=\cR(\id_A)+\cR(\id_B) = v(A) + v(B) \in \{1,2\}$.
Hence, $f(\cR(\id_A+\id_B))=2$.
On the other hand, by (c),
$\cR(f(\id_A+\id_B))=\cR(2 \id_B )=2\cR( \id_B)=2v(B)$. Using ordinality we have $v(B)=1$.
Therefore, $v$ is increasing. This shows that $v$   is a capacity taking values in $\{0,1\}$ and hence $\cR$ is a Choquet quantile on $\X_0$.

Finally, we show that the above representation $\cR(X) = \int X\d v$ also holds on $\X_{\mathcal A}$.
Take any $X\in \X_{\mathcal A}$ and let $X_n= \lfloor nX \rfloor/n$ for each $n\in \N$ where  $ \lfloor x \rfloor$  is the largest integer dominated by $x$.
For each $n$, $X$ and $X_n$ are comonotonic and $\Vert X-X_n\Vert\le 1/n$ where $\Vert \cdot \Vert$ is the supremum norm. Using Denneberg's Lemma, there exist continuous and increasing functions $f_n$ and $g_n$
such that $X=f_n(Z_n)$ and $X_n=g_n(Z_n)$ where $Z_n=X+X_n$.
Note that $|f_n(x)-g_n(x)|\le 1/n$ for $x $ in the closure of  $\{Z_n(\omega),\omega\in\Omega\}$.
Using the above fact together with ordinality and (a), we get
\begin{align*}
| \cR(X)-\cR(X_n)| = |\cR(f_n(Z_n)) - \cR(g_n(Z_n))|   &= |f_n (\cR(Z_n)) - g_n(\cR(Z_n))| \le \frac 1n .
\end{align*}
Moreover,  since Choquet integrals  are $1$-Lipschitz continuous (e.g., Proposition 4.11 of \cite{MM04}), $\Vert X-X_n\Vert\le 1/n$ implies $
| \int X_n \d v -\int X \d v |  \le 1/n $.
Therefore, $|\cR(X)-\int X\d v|\le 2/n$. Sending $n\to\infty$ yields $\cR(X) = \int X\d v$. This completes the proof. 
\qed\medskip

\noindent{\it Proof of Proposition \ref{prop:4}. } Note that (ii) implies (i) is trivial. We next show (i) implies (ii). Let $\mathcal R$ denote one of the numerical representations. Then it follows that
\begin{align}\label{eq:RR}
\mathcal R(X)\leq \mathcal R(Y) \implies \mathcal R(\phi(X))\leq \mathcal R(\phi(Y))
\end{align} for all increasing and continuous function $\phi$. As the preference is increasing, we have $\mathcal R(0)<\mathcal R(1)$. Let $h(x)=\mathcal R(x),~x\in\R$, $\mathcal R(\mathcal X_{\mathcal A})=\{\mathcal R(X): X\in\mathcal X_{\mathcal A}\}$, and $\mathcal R(\R)=\{\mathcal R(x): x\in\R \}$. Following  the argument in the proof of Lemma 2 of \cite{FLW22}, we have $h$ is strictly increasing and $\mathcal R(\mathcal X_{\mathcal A})=\mathcal R(\R)$. Define $\mathcal R'(X) = h^{-1}(\mathcal R(X))$ for $X\in \X_{\mathcal A}$, where $h^{-1}(x)=\inf\{y: h(y)\geq x\}$ is continuous and increasing over $(h(-\infty),h(\infty))$.  We can verify that $ \mathcal R'$ satisfies
$\mathcal R'(c)= h^{-1} (\mathcal R(c)) = h^{-1} (h(c))=c ~\text{for all}~ c\in \R.$ Hence, 
 $\mathcal R'(\mathcal R'(X))=\mathcal R'(X) $ for $X\in\X_{\mathcal A}$. By \eqref{eq:RR}, we have  $\mathcal R(X)=\mathcal R(Y)$ implies $\mathcal R(\phi(X))= \mathcal R(\phi(Y))$ and therefore,
 $\mathcal R'(X)=\mathcal R'(Y)$ implies $\mathcal R'(\phi(X))= \mathcal R'(\phi(Y))$.
Using the fact  $\mathcal R'(\mathcal R'(X))=\mathcal R'(X) $ for any $X\in \X_{\mathcal A}$, we have 
 $\mathcal R'(\phi(\mathcal R'(X)))=\mathcal R'(\phi(X))$. By  $\mathcal R'(c)= c $, $c\in\R$, the above equation  can be rewritten as
 $\phi(\mathcal R'(X)) =\mathcal R'(\phi(X)) $ for all increasing and continuous $\phi$. Consequently, $\mathcal R'$ is the desired numerical representation.
\qed\medskip

\noindent{\it Proof of Proposition \ref{pro:supclose}.} 
 Recall that $\overline w=\sup_{w\in W} w$ and $\underline{w}=\inf_{w\in W} w$. 
It follows that 
|
 \begin{align*}
 \sup_{w\in W} Q^w_\alpha(X) &=   \sup_{w\in W} \inf \{ x\in \R : w(X\ge x) \le 1-\alpha\}
 \\ &=  \inf \bigcap_{w\in W} \{ x\in \R : w(X\ge x) \le 1-\alpha\}
 \\ &=    \inf \{ x\in \R : w(X\ge x) \le 1-\alpha \mbox{~for all $w\in W$} \}
 \\&=    \inf \left\{ x\in \R : \sup_{w\in W} w(X\ge x) \le 1-\alpha\right\} =  Q^{\overline w}_\alpha(X).
 \end{align*}
Moreover,  we have that 
\begin{align*}
\inf_{w\in W} \overline{Q}^w_\alpha(X) &=   \inf_{w\in W} \inf \{ x\in \R : w(X\ge x)<1-\alpha\}
\\ &=  \inf \bigcup_{w\in W} \{ x\in \R : w(X\ge x)< 1-\alpha\}
\\ &=    \inf \{ x\in \R : w(X\ge x)< 1-\alpha \mbox{~for some $w\in W$} \}
\\&=    \inf \left\{ x\in \R : \inf_{w\in W} w(X\ge x)<1-\alpha\right\} =  \overline{Q}^{\underline{w}}_\alpha(X).
\end{align*}
The other statement on $v$ is similar. 
This completes the proof.  \qed\medskip

\noindent{\it Proof of Proposition \ref{prop:7}.}
It follows from Theorem 4.94 of \cite{FS16} that $I_v$ is convex if and only if $v$ is submodular.  As $v$ only takes values $0$ and $1$, we have $v(A\cup B)+v(A\cap B)\leq v(A)+v(B)$ is equivalent to the closeness of $\mathcal C_v$ under union.
\qed\medskip

\noindent{\it Proof of Proposition \ref{prop:8}.}  
The ``if" part is obvious. Next, we show the ``only if" part. 
 By Theorem 4.94 of \cite{FS16},  the convexity of $I_v$ implies that $v$ is submodular.
If $v(A)=1$, then for any $B\subseteq A$, using the fact that $v$ is submodular, we have $v(A)\leq v(B)+v(A\setminus B)$. This implies  either $v(B)=1$ or $v(A\setminus B)=1$. For the three spaces,  we can continue this process to generate a decreasing sequence $\{A_n,~n\in\mathbb N\}$ such that $v(A_n)=1$ and $\bigcap_{n=1}^\infty A_n=\varnothing$ or $\{x\}$ for some $x\in A$. Note that if $\bigcap_{n=1}^\infty A_n=\varnothing$, then the continuity of $v$ implies $v(A_n)\to 0$, which contradicts the fact that $v(A_n)=1$. Hence, $\bigcap_{n=1}^\infty A_n=\{x\}$ for some $x\in A$. It follows from Theorem 4.2 of \cite{MM04} and the continuity of $v$ that $v(A_n)\to v(A)$ as $n\to\infty$ if $A_n\downarrow A$. Hence, we have $v(\{x\})=\lim_{n\to\infty} v(A_n)=1$. Let $D=\{x\in \Omega: v(\{x\})=1\}$. Then $D$ consists of finite elements. If not, for distinct $x_n\in D$ with $n\geq 1$, let $A_n=\{x_m: m\geq n\}$. Using the fact that $A_n \downarrow \varnothing$ and the continuity of $v$, we have $v(A_n)\to 0$ as $n\to\infty$, leading to a contradiction. Hence, $D$ consists of finite elements. Moreover, $D\in \mathcal F$ and $v(D^c)=0$. This implies $I_v(X)=\max_{x\in D} X(x)$. We complete the proof.
\qed\medskip

\noindent{\it Proof of Proposition \ref{prop:2}.}
First note that  $Q^w_\alpha(X)\in\R$ as $X$ is bounded.  By definition, we have for any $\epsilon>0$, $ Q^w_\alpha(X)-\epsilon \not\in \{x:  w (X\ge x) \le 1-\alpha\}$, and thus, 
$s:=w(X\geq Q^w_\alpha(X)-\epsilon)>1-\alpha$. Then for any $\beta \in (1- s, \alpha)$, it holds that $w(X\geq Q^w_\alpha(X)-\epsilon)=s>1-\beta$, implying $Q^w_\beta(X)\geq Q^w_\alpha(X)-\epsilon$. Note that $Q^w_\alpha(X)$ is increasing in $\alpha\in (0,1]$. Hence $Q^w_\beta(X)\leq Q^w_\alpha(X)$. Thus we have $|Q^w_\beta(X)-Q^w_\alpha(X)|\leq \epsilon$. This implies that $Q^w_\alpha(X)$ is left-continuous on $(0,1]$. One can similarly show the right continuity of $\overline{Q}^w_\alpha(X)$ on $[0,1)$. The details here are omitted.
\qed\medskip

\noindent{\it Proof of Theorem \ref{thm:inf}. } (i) 
First suppose $\mathcal C_v\ne \mathcal F$. 
 We start with the case  $X\ge 0$.  For $i\in [n]$,  take $A_i\in \mathcal C_{v_i}$  and let  $X_i=X\id_{A_i}$ and $B=\bigcup_{i=1}^n A_i \in \mathcal C_v$. Note that the monotonicity of $I_{v_i}$ implies that for all  $Z, Y\in\X $, 
 $$I_{v_i}(Z)\le I_{v_i}(Z-Y+\sup Y)=I_{v_i}(Z-Y)+ \sup Y,~~i\in [n].$$  Hence, we have
 $$ \dsquare_{i=1}^n I_{v_i} (X)
 \le \sum_{i=1}^n I_{v_i} (X_i) + \sup\left ( X -\sum_{i=1}^n X \id_{A_i}\right )
\le \sum_{i=1}^n I_{v_i} (X_i) + \sup(X   \id_{B^c}).$$
Direct computation gives 
 $I_{v_i} (X_i)  = 0$ for all $ i\in [n]$, which  implies $$\dsquare_{i=1}^n I_{v_i} (X)\le \sup(X\id_{B^c})=\inf\{x\in\R: \{X>x\}\subseteq B\}.$$
 Consequently, by the arbitrariness of $B\in \mathcal C_v $, we have
\begin{align*}
\dsquare_{i=1}^n I_{v_i} (X)
  \le 
  \inf_{B\in \mathcal C_v} \inf\{x\in\R: \{X>x\}\subseteq B\}=\inf\{x\in\R: \{X>x\}\in \mathcal C_v\} = I_v(X). 
 \end{align*}
  The above inequality holds for general $X\in \X$ by   the translation invariance of $I_v. $
 
 Next, we show the inverse inequality. For $(X_1,\dots,X_n)\in \mathbb A_n(X)$, let $c=\sum_{i=1}^n I_{v_i} (X_i)$ and $c_i=I_{v_i} (X_i)+\frac{\epsilon}{n}$ for some $\epsilon>0$.  Note that $I_{v_i}(X_i-c_i)<0$ for all $i\in [n]$. This implies
 $\{X_i\geq c_i\}\in \mathcal C_{v_i}$  for all $i\in [n]$. Hence, we have
 $\{X\geq c+\epsilon\}\subseteq \bigcup_{i=1}^n\{X_i\geq c_i\}\in \mathcal C_{v}$. This implies $\{X\geq c+\epsilon\}\in \mathcal C_{v}$ and
 $I_v(X)\le c+\epsilon$.  Letting $\epsilon\to 0$, we have $I_v(X)\le \sum_{i=1}^n I_{v_i} (X_i)$, implying $I_v(X)\le \dsquare_{i=1}^n I_{v_i} (X)$. Hence, the desired equality holds.
 
  Finally, we consider the case $\mathcal C_v= \mathcal F$. Note that there exist disjoint $A_i\in \mathcal C_{v_i}$, $i\in [n],$ such that $\bigcup_{i=1}^nA_i=\Omega$. Letting  $X_i=(X+c)\id_{A_i}$, $i\in [n],$ for some $c>0$ such that $X+c\geq 0$, we have $\dsquare_{i=1}^n I_{v_i} (X+c)
 \le \sum_{i=1}^n I_{v_i} (X_i)=0$, implying $\dsquare_{i=1}^n I_{v_i} (X)\leq -c$. Hence, $\dsquare_{i=1}^n I_{v_i} (X)=-\infty$ by letting $c\to \infty$.
 
 (ii) Direct computation gives $I_{v_i}(X_i^*)\leq {x^*}/{n}$ for $i\in [n]$. Hence, $\sum_{i=1}^n I_{v_i}(X_i^*)\leq x^*$, implying that $(X_1^*,\dots, X_n^*)$ is an optimal allocation. This completes the proof.
 \qed\medskip

\noindent{\it Proof of Proposition} \ref{prop:invariance}. The equation follows from 
 $$
\dsquare_{i=1}^n I_{v_i} ( \phi(X))=I_v (\phi (X))=\phi( I_v (X))=\phi\left(\dsquare_{i=1}^n I_{v_i} ( X)\right),
$$
where the first and last equalities follow from (i) of Theorem \ref{thm:inf} and the second equality follows from the oridinality of Choquet quantiles by Theorem \ref{th:main}. The optimal allocation follows immediately by calculating $\sum_{i=1}^n I_{v_i} ( \phi(Y^*_i))$ which achieves the optimal value. \qed\medskip

  \noindent{\it Proof of Corollary \ref{co:6}.} Let $\mathcal C_{v_i}=\{A\in\mathcal F: \sup_{\mathbb Q\in \mathcal P_i} \mathbb Q(A)\leq \alpha_i\}$ for $i\in[n]$. Then it follows from Proposition \ref{pro:supclose} that $\sup_{\mathbb Q\in \mathcal P_i} \VaR^{\mathbb Q}_{\alpha_i}=I_{v_i},~i\in [n]$. Applying Theorem \ref{thm:inf}, we immediately obtain the desired conclusion.
 \qed\medskip

  \noindent{\it Proof of Corollary \ref{co:4}.} Applying Corollary 2 of \cite{ELW18}, we have
     \begin{align*}
    \dsquare_{i=1}^n\VaR_{\alpha_i}^{g_i\circ \mathbb Q}(X)=\dsquare_{i=1}^n\VaR_{g_i^{-1}(\alpha_i)}^{\mathbb Q}(X)=\VaR_{\sum_{i=1}^n g_i^{-1}(\alpha_i)}^{\mathbb Q}(X).
 \end{align*}
This completes the proof. \qed\medskip

\noindent{\it Proof of Proposition} \ref{prop:9}. We only give the proof for the case of atomless $(\Omega,\mathcal F,\p) $ and the second case can be shown similarly.  
  Denote by $\mathcal Z_i=\{{\rm d}\mathbb Q/{\rm d}\mathbb \p:  \mathbb Q\in \mathcal P_i\}$, $i\in [n]$. In light of Lemma 4.89 of \cite{FS16}, there exists $U\sim {\rm U}[0,1]$ under $\mathbb P$ such that $U$ and  $X, -Z_1, Z_2$ are all comonotonic for all $Z_1\in\mathcal Z_1$ and $Z_2\in \bigcup_{i=2}^n\mathcal Z_i$. 
 For any disjoint $A_1,\ldots,A_n$ satisfying 
$  \bigcup_{i=1}^n A_i  = \{X>x^*\}$ and 
$$
\sup_{\mathbb Q\in \mathcal P_i} \mathbb Q(A_i)  = \sup_{Z\in \mathcal Z_i} \int_{A_i}Z\d \p \le \alpha_i,~~i\in [n],
$$
 let $A_1'=\{1-\mathbb P(A_1)<U<1\}$ and $A_i'=(A_i\setminus A_1')\cup B_i,~i=2,\dots, n$, where $B_2,\dots, B_n$ are disjoint satisfying $\bigcup_{i=2}^n B_i=A_1\setminus A_1'$ and $\mathbb P(B_i)=\mathbb P(A_i\cap A_1')$.  Then we have $\bigcup_{i=1}^n A_i'=\bigcup_{i=1}^n A_i=\{X>x^*\}$. By the construction of $A_1'$ and  the countermonotonicity of $U$ 
 and $Z\in\mathcal Z_1$, we have $\int_{A_1'}Z\d \p \leq \int_{A_1}Z\d \p$ for any $Z\in\mathcal Z_1$. Moreover, it follows from the definition of $A_i'$ and the comonotonicity of $U$ and $Z \in\mathcal Z_2$ that $$\int_{A_i'}Z\d \p=\int_{A_i\setminus A_1'}Z\d \p+\int_{B_i}Z\d \p\leq \int_{A_i\setminus A_1'}Z\d \p+\int_{A_i\cap A_1'}Z\d \p=\int_{A_i}Z\d \p$$ for all $Z\in \bigcup_{i=2}^n\mathcal Z_i$.
 Hence, we have
 $$
 \sup_{Z\in \mathcal Z_i} \int_{A_i'}Z\d \p \leq \sup_{Z\in \mathcal Z_i} \int_{A_i}Z\d \p \le \alpha_i,~~i\in [n].
$$
 Let $A_1^*=A_1', \dots, A_{n-1}^*=A_{n-1}'$ and $A_n^*=\left(\bigcup_{i=1}^{n-1} A_i^*\right)^c$. Then $(A_1^*,\dots, A_n^*)\in \Pi_n$ is the optimal composition satisfying $\id_{A_1^*}-\id_{\bigcup_{i=2}^m A_i^*} $ and $X$ are comonotonic. This completes the proof.
 \qed\medskip

\noindent{\it Proof of Proposition \ref{prop:ESC}.} 
 It follows from Theorem 4.94 of \cite{FS16} that $I_{w_{\alpha }}$ is a coherent risk measure if and only if $w_{\alpha }$ is submodular.
\qed\medskip

\noindent{\it Proof of Theorem \ref{th:min}.}
Let $g(x)=x+\frac{1}{ \alpha}\int(X-x)_+\d w=x+ \frac{1}{ \alpha}\int_{x}^{\infty} w(X>y)\d y$.
Note that $g$ is a continuous and convex function over $\R$ and $\lim_{x\to\pm\infty}g(x)=\infty$. Hence the minimizers of $g$ exist as a finite interval or singleton. Let $h^+$ ($h^-$) be the right (left)-continuous version of the function  $x\mapsto w(X>x)$. Then the minimizers of $g$ satisfy  $g_+'(x)=1-\frac{h^+(x)}{\alpha}\geq 0$ and $g_-'(x)=1-\frac{h^-(x)}{\alpha}\leq 0$, which is equivalent to $h^+(x)\leq \alpha\leq h^-(x)$. Note that $h^+(x)\leq \alpha$ is equivalent to $x\geq \VaR_{\alpha}^w(X)$ and $h^-(x)\geq \alpha$ is equivalent to $x\leq \overline{\VaR}_{\alpha}^w(X)$. Hence we obtain the formula for Choquet quantiles.

 We next derive the formula for Choquet ES. For $x=\VaR_{\alpha}^w(X)$, it follows that
\begin{align*}
g(x)&=x+ \frac{1}{ \alpha} \int_{x}^{\infty} \int_{0}^{1} \id_{\{w(X\ge y)>t\}}\d t\d y \\
&=x+ \frac{1}{ \alpha}  \int_{0}^{1}\int_{x}^{\infty} \id_{\{w(X\ge y)>t\}}\d y \d t\\
&=x+ \frac{1}{ \alpha} \int_{0}^{1}\int_{x}^{\infty} \id_{\{y<\VaR_{t}^w(X)\}}\d y\d t\\
&=x+ \frac{1}{ \alpha} \int_{0}^{\alpha}   (\VaR_{t}^w(X) -x)_+ \d t =\frac{1}{ \alpha}\int_{0}^{\alpha}  \VaR_{t}^w(X)\d t =\ES^w_\alpha(X),
\end{align*}
where the third equality follows from the fact that $w(X\ge y)>t$ implies that $y\le \VaR_{t}^w(X)$,  $y<\VaR_{t}^w(X)$ implies that $w(X\ge y)>t$, and $\int_{x}^{\infty} \id_{\{y<\VaR_{t}^w(X)\}}\d y = \int_{x}^{\infty} \id_{\{y\le \VaR_{t}^w(X)\}}\d y$.
We complete the proof.
\qed\medskip

\noindent{\it Proof of Proposition \ref{prop:ESS}.}
We first note that for any submodular capacity $u$, the mapping $I_{u}$ admits a representation as the supremum of $\E^{\mathbb Q}$ over all finitely additive capacities $\mathbb Q\le u$; see Theorem 4.94 of \cite{FS16}. Therefore, it suffices to show that these finitely additive capacities can be chosen as countably additive. This boils down to checking continuity in the form $I_{u}(X_n)\downarrow I_{u}(X)$ for $X_n\downarrow X$ pointwise; see e.g., Theorem 4.22 of \cite{FS16}. We check this below, where $u=w_\alpha$ so that by Proposition \ref{prop:ESC}, $\ES_\alpha^w=I_{u}$.

Note that the continuity of $w$ implies the continuity of $w_{\alpha}$.
By   Theorem 4.2 of \cite{MM04}, this implies $\lim_{n\to\infty} w_{\alpha}(A_n)=w_{\alpha}(A)$ if $A_n\downarrow A$.
For $X_n\downarrow X$ pointwise with $X\geq 0$, we have $\{X_n\geq x\}\downarrow \{X\geq x\}$ for all $x\in\R$. Using the above conclusion and the monotone convergence theorem,  we have
\begin{align*}\lim_{n\to\infty}\ES_\alpha^w(X_n)&=\lim_{n\to\infty}\int_{0}^{\infty} w_{\alpha}(X_n\geq x) \d x=\int_{0}^{\infty} \lim_{n\to\infty}w_{\alpha}(X_n\geq x) \d x\\
&=\int_{0}^{\infty}w_{\alpha}(X\geq x) \d x=\ES_\alpha^w(X).
\end{align*}
The above conclusion also holds for general $X\in\X$ by using the fact that $\ES_\alpha^w(X+c)=\ES_\alpha^w(X)+c$ for $c\in\R$.
\qed\medskip

\noindent{\it Proof of Proposition \ref{prop:coherentchoquet}.}  Suppose there exists $w\in \mathcal S$ such that $w\leq \bigwedge_{i=1}^nw_i$. Using the monotonicity of the Choquet integral with respect to the capacity $w$ and the subadditivity of $I_{w}$, we have that for all $(X_1,\dots, X_n)\in \mathbb{A}_n(X)$, 
$$\sum_{i=1}^{n} I_{w_i}(X_i)\geq \sum_{i=1}^{n} I_{w}(X_i)\geq  I_{w}(X).$$
 Hence, $\dsquare_{i=1}^n I_{w_i}(X)\geq I_{w}(X)$, which further implies $\dsquare_{i=1}^n I_{w_i}\geq \sup_{w\in \mathcal S,~w\leq \bigwedge_{i=1}^nw_i}I_{w}>-\infty$. We next consider the other direction. Applying Theorem 3.6 of \cite{BE05}, we have \begin{align}\label{inf:choquet}\dsquare_{i=1}^n I_{w_i}(X)=\sup_{{\mathbb Q}\in \mathcal P^*; 
  {\mathbb Q}\leq \bigwedge_{i=1}^nw_i} \E^{\mathbb Q}(X) \leq \sup_{w\in \mathcal S; w\leq \bigwedge_{i=1}^nw_i}I_{w}(X),
\end{align}
where $\mathcal P^*$ is the collection of all finitely additive capacities on $(\Omega,\mathcal F)$ and the   inequality holds due to $\mathcal P^*\subseteq \mathcal S$. 

Note that the equality \eqref{inf:choquet} together with Theorem 3.6 of \cite{BE05} implies that if $\dsquare_{i=1}^n I_{w_i}(0)>-\infty$, then there exists ${\mathbb Q}\in \mathcal P_1$ such that ${\mathbb Q}\leq \bigwedge_{i=1}^nw_i$.
Hence, if there is no $w\in \mathcal S$ such that $w\leq \bigwedge_{i=1}^nw_i$, then $\dsquare_{i=1}^n I_{w_i}(0)=-\infty$, which implies $\dsquare_{i=1}^n I_{w_i}(X)=-\infty$ for all $X\in\X$. 
\qed\medskip

\noindent{\it Proof of Corollary \ref{coro:ES-inf}.}
This follows from Propositions \ref{prop:ESC} and \ref{prop:coherentchoquet}. 
\qed\medskip



\section{Additional numerical details and figures}
\label{app:numerical}

The probabilities in Examples \ref{ex:3}--\ref{ex:4-1}  are defined as 
$$
\mathcal P_i =\left\{ \mathbb Q_{k}^{(i)}: \frac{{\rm d} \mathbb Q_{k}^{(i)}}{{\rm d} \mathbb \p}=\sum_{j=1}^{20} p_{kj}^{(i)} \id_{B_j},~ k\in [3]  \right\},~~i\in [5],
$$
 where $p_{k}^i = (p_{k1}^i,\ldots,p_{km}^i)$ for $k\in [3]$, $i\in [5]$ and $m=20$ are randomly generated (uniformly on the simplex) using \texttt{Python} with seed number 42. We report simulated probabilities $p_{k}^i$ for $k\in [3]$ and $i\in [5]$  in Figure \ref{fig:pvalues} for  reproduction. 

 \begin{figure}[t] 
    \centering
    \includegraphics[width=6cm]{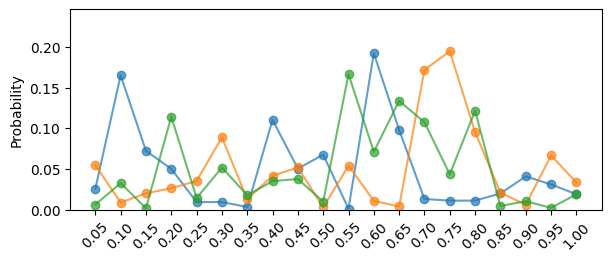} 
    \includegraphics[width=6cm]{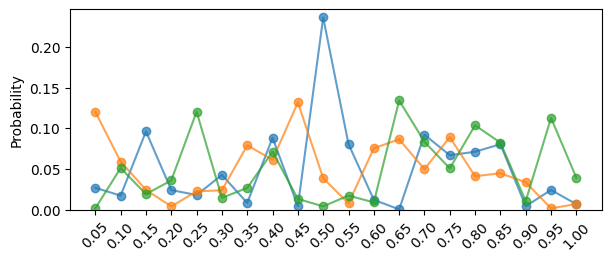} 
     \includegraphics[width=6cm]{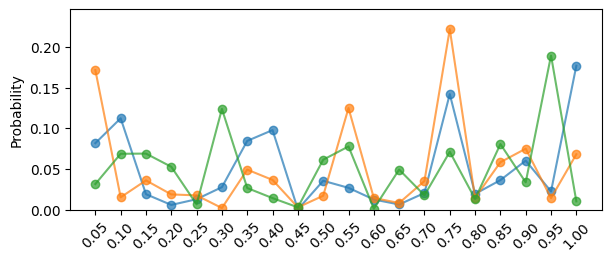} 
    \includegraphics[width=6cm]{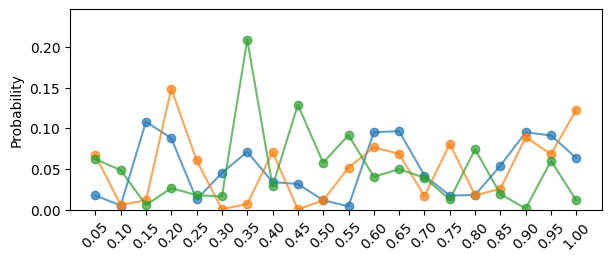}
     \includegraphics[width=6cm]{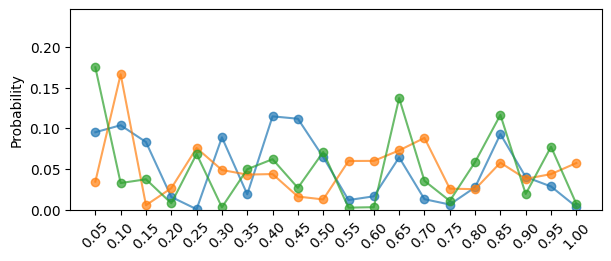}  
    \caption{The  probability vectors for probabilities in $\mathcal P_i$ for $i\in [5]$ in Section \ref{sec:numerical}}\label{fig:pvalues}
\end{figure}

 The optimal allocation among five agents in the atomless setting in Example \ref{ex:3} and in the discrete setting in Example \ref{ex:4-1} are reported in Figures \ref{fig:22} and \ref{fig:2}, respectively.
   \begin{figure}[t]
    \centering      \begin{subfigure}[b]{0.99\textwidth}    \centering
    \includegraphics[width=12.5cm]{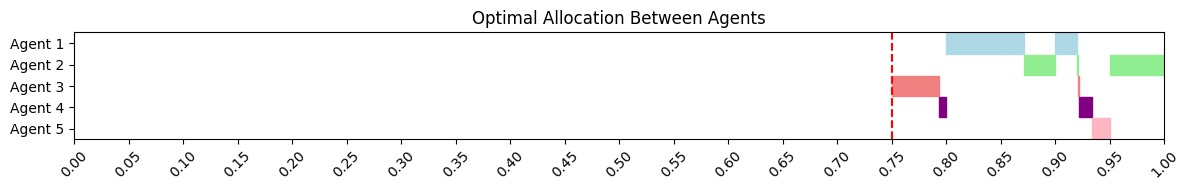}   \caption{$(\alpha_1,\dots,\alpha_5) = (0.1, 0.1, 0.05, 0.05, 0.05)$} 
     \label{fig:21}
    \end{subfigure} \\~\\
         \begin{subfigure}[b]{0.99\textwidth}     
    \centering
    \includegraphics[width=12.5cm]{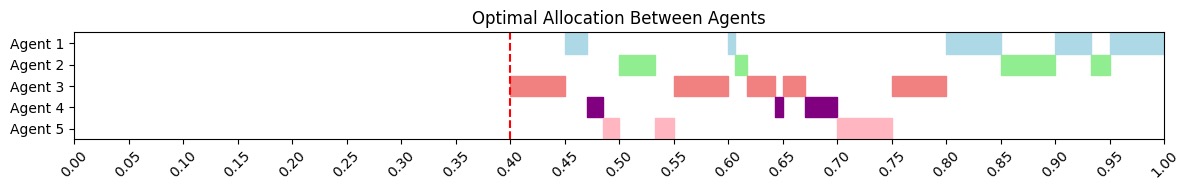}
  \caption{$(\alpha_1,\dots,\alpha_5) = (0.05, 0.05, 0.025, 0.025, 0.025)$}\label{fig:31-atomless}
    \end{subfigure}
    \caption{Optimal allocation among five agents in the atomless setting in Example \ref{ex:3}}
    \label{fig:22}
\end{figure}

  \begin{figure}[t]
    \centering
    \includegraphics[width=12.5cm]{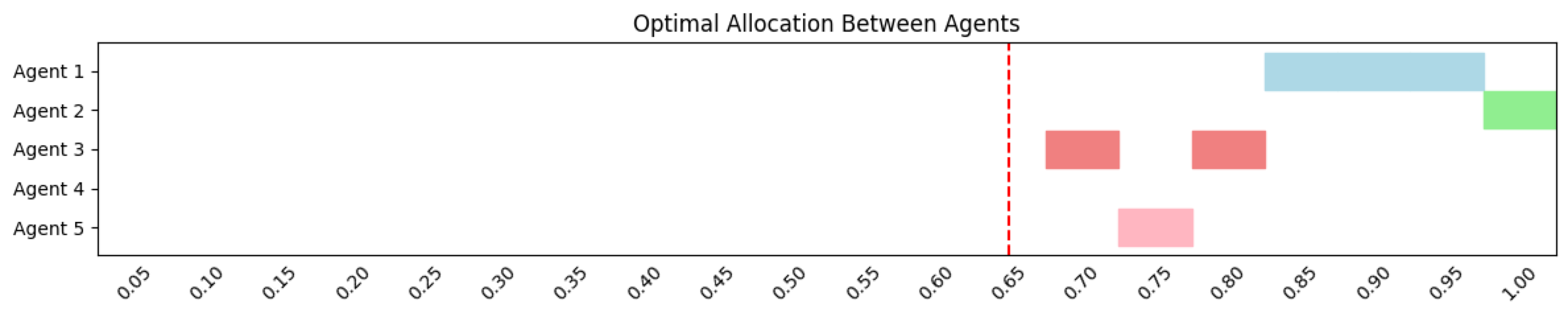} 
    \caption{Optimal allocation among five agents in the discrete setting in Example \ref{ex:4-1}, where $(\alpha_1,\dots,\alpha_5) = (0.1, 0.1, 0.05, 0.05, 0.05)$}
    \label{fig:2}
\end{figure}

The 10 distributions in ambiguity set  $\mathcal P_3$ and the 20 distributions in ambiguity set  $\mathcal P_4$ in Section \ref{sec:realdata} are given in Figures \ref{fig:RD-SAA-1year} and \ref{fig:RD-normalt-all}, respectively.
\begin{figure}[t] 
    \centering
    \includegraphics[width=12cm]{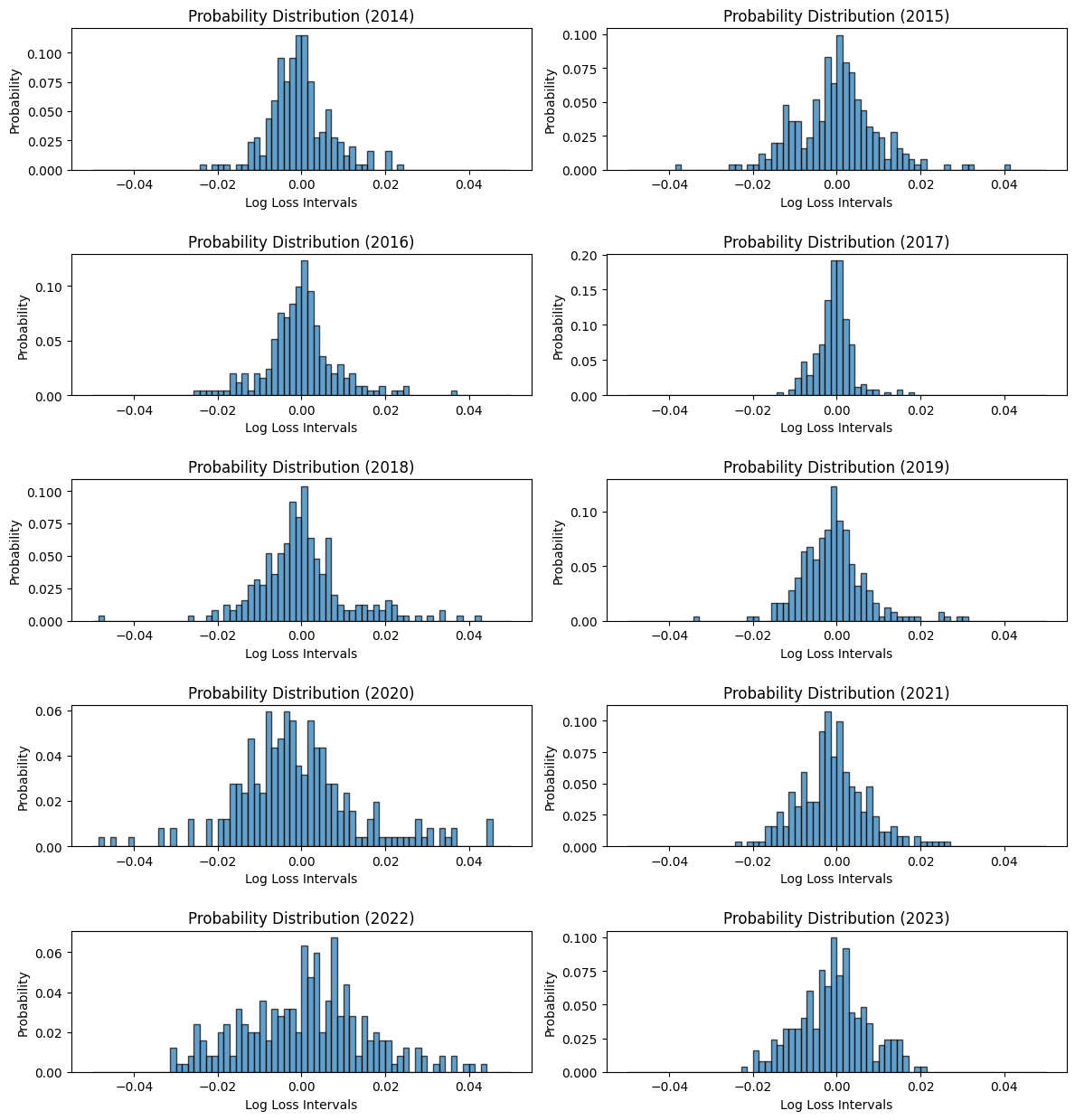} 
    \caption{10 probabilities  for the log loss of agent $3$}\label{fig:RD-SAA-1year}
\end{figure}  

\begin{figure}[t] 
    \centering
    \includegraphics[width=12cm]{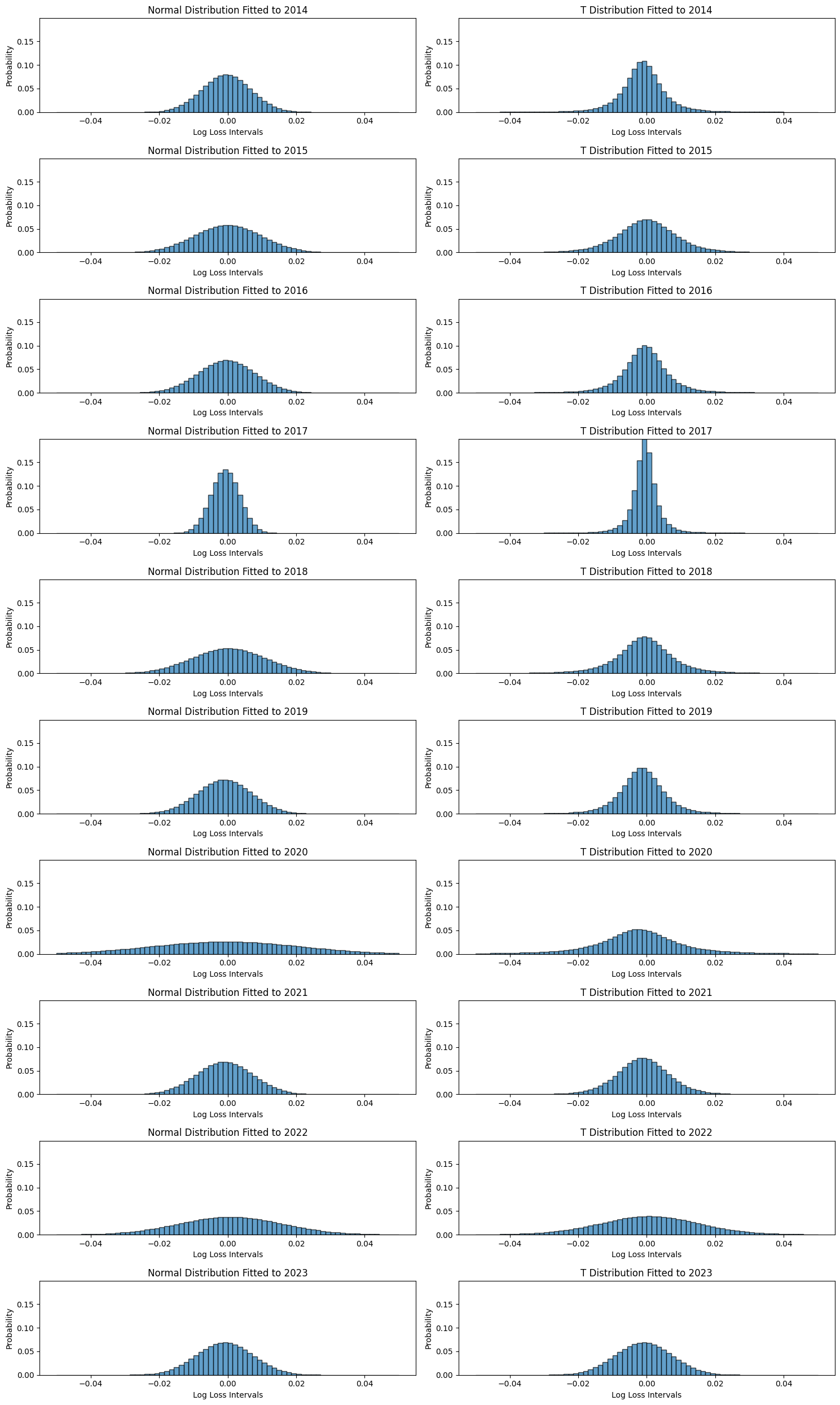} 
    \caption{20 probabilities  for the log loss of agent $4$ fitted to normal and t distributions}\label{fig:RD-normalt-all}
\end{figure}


\begin{thebibliography}{}



\bibitem[\protect\citeauthoryear{Adrian and Brunnermeier}{2016}]{AB16} Adrian, T. and Brunnermeier, M. K. (2016). CoVaR. \emph{American Economic Review}, \textbf{106}(7), 1705--1741.


\bibitem[\protect\citeauthoryear{Artzner et al.}{Artzner et al.}{1999}]{ADEH99}
{Artzner, P., Delbaen, F., Eber, J.-M. and Heath, D.} (1999). Coherent measures of risk. \emph{Mathematical Finance}, \textbf{9}(3), 203--228.





\bibitem[\protect\citeauthoryear{Barrieu and El Karoui}{Barrieu and El Karoui}{2005}]{BE05}
Barrieu, P. and El Karoui, N. (2005). Inf-convolution of risk measures and optimal risk transfer. \emph{Finance and
Stochastics}, \textbf{9}, 269–-298.



\bibitem[\protect\citeauthoryear{Basak and Shapiro}{2001}]{BS01} Basak, S. and Shapiro, A. (2001). Value-at-Risk based risk management: Optimal policies and asset prices. \emph{Review of Financial Studies}, \textbf{14}(2), 371--405.


  \bibitem[\protect\citeauthoryear{{Basel Committee on Banking
  Supervision}}{{BCBS}}{2016}]{BASEL16}
{BCBS} (2016).
  {\em  Minimum Capital Requirements for Market Risk.  January 2016.}
 Basel Committee on Banking
  Supervision. Basel: Bank for International Settlements.  Document number d352. 
  
    \bibitem[\protect\citeauthoryear{{Basel Committee on Banking
  Supervision}}{{BCBS}}{2019}]{BASEL19}
{BCBS} (2019).
  {\em  Minimum Capital Requirements for Market Risk.  February 2019.}
 Basel Committee on Banking
  Supervision. Basel: Bank for International Settlements. Document number d457. 
\bibitem[\protect\citeauthoryear{Cerreia-Vioglio et al.}{Cerreia-Vioglio et al.}{2021}]{CHMM21}
Cerreia-Vioglio, S., Hansen, L. P., Maccheroni, F. and Marinacci, M. (2021). Making decisions
under model misspecification. SSRN : 3666424.
\bibitem[\protect\citeauthoryear{Chambers}{Chambers}{2007}]{C07}
Chambers, C. P. (2007). Ordinal aggregation and quantiles. \emph{Journal of Economic Theory}, \textbf{137}(1), 416--431.

\bibitem[\protect\citeauthoryear{de Castro and Galvao}{de Castro and Galvao}{2019}]{CG19}
 de Castro, L. and Galvao, A. F. (2019).  Dynamic quantile models of rational behavior.
\emph{Econometrica}, \textbf{87}, 1893--1939.

\bibitem[\protect\citeauthoryear{de Castro and Galvao}{de Castro and Galvao}{2022}]{CG22}
 de Castro, L. and Galvao, A. F. (2022).  Static and dynamic quantile preferences.
\emph{Economic Theory}, \textbf{73}, 747--779.

\bibitem[\protect\citeauthoryear{de Castro et al.}{de Castro et al.}{2022}]{CGNQ22}
  de Castro, L., Galvao, A. F., Noussair, C. and Qiao, L. (2022). Do people maximize quantiles?   \emph{Games and Economic Behavior}, \textbf{132}, 22--40.

  
\bibitem[\protect\citeauthoryear{Delbaen}{2012}]{D12}
Delbaen, F. (2012).  \emph{Monetary Utility Functions}. Osaka University Press, Osaka.

\bibitem[\protect\citeauthoryear{Denneberg}{1994}]{D94}
{Denneberg, D.} (1994).
{\em Non-additive Measure and Integral.}
{Springer Science \& Business Media.}


\bibitem[\protect\citeauthoryear{El Ghaoui et al.}{2003}]{GOO03}
El Ghaoui, L. M., Oks, M. and Oustry, F. (2003). Worst-case value-at-risk and robust portfolio optimization. \emph{Operations Research}, \textbf{51}, 543--556.


\bibitem[\protect\citeauthoryear{Ellsberg}{1961}]{E61}
Ellsberg, D. (1961). Risk, ambiguity and the Savage axioms. \emph{The Quarterly Journal of Economics}, \textbf{75}(4), 643--669.
\bibitem[\protect\citeauthoryear{Embrechts et al.}{2018}]{ELW18}
Embrechts, P., Liu, H. and Wang, R. (2018). Quantile-based risk sharing. \emph{Operations Research}, \textbf{66}(4), 936--949.



\bibitem[\protect\citeauthoryear{Embrechts et al.}{2020}]{ELMW20}
Embrechts, P., Liu, H., Mao, T. and Wang, R. (2020). Quantile-based risk sharing with heterogeneous beliefs. \emph{Mathematical Programming Series B}, \textbf{181}(2), 319--347.
\bibitem[\protect\citeauthoryear{Embrechts et al.}{2014}]{EPRW14}
 Embrechts, P., Puccetti, G., R\"{u}schendorf, L., Wang, R. and Beleraj, A. (2014). An academic response to Basel
3.5. \emph{Risks}, \textbf{2}(1), 25--48.

\bibitem[\protect\citeauthoryear{Embrechts et al.}{Embrechts et al.}{2022}]{ESW22} {Embrechts, P., Schied, A. and Wang, R.} (2022). Robustness in the optimization of risk measures. \emph{Operations Research}, \textbf{70}(1), 95--110. 


\bibitem[\protect\citeauthoryear{Emmer et al.}{Emmer et al.}{2015}]{EKT15}
{Emmer, S., Kratz, M. and Tasche, D.} (2015). What is the best risk measure in practice? A comparison of standard measures. \emph{Journal of Risk}, \textbf{18}(2), 31--60.

 \bibitem[\protect\citeauthoryear{FadinaLiu}{Fadina et al.}{2023}]{FLW22}
Fadina, T., Liu, P., and Wang, R. (2023). One axiom to rule them all: A minimalist axiomatization of quantiles. \emph{SIAM Journal on Financial Mathematics}, \textbf{14}(2), 644--662.
\bibitem[\protect\citeauthoryear{Filipovi{\'c} and Svindland}{2008}]{FS08}
Filipovi{\'c}, D. and Svindland, G. (2008). Optimal capital and risk allocations for law- and cash-invariant convex functions. {\it Finance and Stochastics}, {\bf 12}, 423--439.
 \bibitem[\protect\citeauthoryear{F\"ollmer and Schied}{F\"ollmer and Schied}{2016}]{FS16} F\"ollmer, H.~and Schied, A.~(2016). \emph{Stochastic Finance. An Introduction in Discrete Time}. Fourth Edition.  {Walter de Gruyter, Berlin}.

\bibitem[\protect\citeauthoryear{Gilboa and Schmeidler}{Gilboa and Schmeidler}{1989}]{GS89}
Gilboa, I., and Schmeidler, D. (1989). Maxmin expected utility with a non-unique prior. \emph{Journal of Mathematical Economics}, \textbf{18}(2), 141--153. 
\bibitem[\protect\citeauthoryear{Ghirardato et al.}{Ghirardato et al.}{2004}]{GMM04}
Ghirardato, P., Maccheroni, F., and Marinacci, M. (2004). Differentiating ambiguity and ambiguity attitude.
\emph{Journal of  Economic Theory}, \textbf{118}(2), 133--173. 

\bibitem[\protect\citeauthoryear{Hansen and Sargent}{Hansen and Sargent}{2001}]{HS01}
Hansen, L. P. and Sargent, T. J. (2001). Robust control and model uncertainty. \emph{American Economic
Review}, \textbf{91}(2), 60--66.
\bibitem[\protect\citeauthoryear{He and Zhou}{He and Zhou}{2011}]{HZ11}
He, X. D. and Zhou, X. Y. (2011). Portfolio choice via quantiles. \emph{Mathematical Finance}, \textbf{21}(2), 203--231.


\bibitem[\protect\citeauthoryear{Jouini et al.}{2008}]{JST08}
Jouini, E., Schachermayer, W. and Touzi, N. (2008). Optimal risk sharing for law invariant monetary utility functions. \emph{Mathematical Finance}, \textbf{18}(2), 269--292.



\bibitem[\protect\citeauthoryear{Lauzier et al.}{Lauzier et al.}{2024}]{LLW24}
Lauzier, J.-G., Lin, L. and Wang, R. (2024). Optimal sharing, equilibria, and welfare without risk aversion. \textit{arXiv}: 2401.03328.



\bibitem[\protect\citeauthoryear{Liebrich}{Liebrich}{2024}]{L24}
Liebrich, F.B. (2024) Risk sharing under heterogeneous beliefs without convexity. \emph{Finance and Stochastics}, \textbf{28}, 999--1033.

\bibitem[\protect\citeauthoryear{Liu et al.}{2022}]{LMWW22}
Liu, F., Mao, T., Wang, R. and Wei, L. (2022). Inf-convolution, optimal allocations, and model uncertainty for tail risk measures. \emph{Mathematics of Operations Research}, \textbf{47}(3), 2494--2519.

\bibitem[\protect\citeauthoryear{Liu et al.}{Liu et al.}{2024}]{LTW24}
Liu, P., Tsanakas, A. and Wei, Y. (2024). Risk sharing with Lambda value-at-risk under heterogeneous beliefs. 	\emph{arXiv}: 2408.03147.

\bibitem[\protect\citeauthoryear{Marinacci}{Marinacci}{1995}]{M95}
Marinacci, M. (1995). Games and decisions under ambiguity. \emph{PhD Thesis}, Northwestern University, 9614794.


\bibitem[\protect\citeauthoryear{Marinacci and Montrucchio}{Marinacci and Montrucchio}{2004}]{MM04}
{Marinacci, M. and Montrucchio L.} (2004). Introduction to the mathematics of ambiguity.  In \emph{Uncertainty in Economic Theory}, I. Gilboa, Ed.  Routledge, New York, NY, USA. 46--107.

\bibitem[\protect\citeauthoryear{Marinacci et al.}{Marinacci et al.}{2006}]{MMR06}
Maccheroni, F., Marinacci, M. and Rustichini, A. (2006). Ambiguity aversion, robustness, and the
variational representation of preferences. \emph{Econometrica}, \textbf{74}(6), 1447--1498.
\bibitem[\protect\citeauthoryear{McNeil et al.}{McNeil et al.}{2015}]{MFE15}
{McNeil, A. J., Frey, R. and Embrechts, P.} (2015). \emph{Quantitative
Risk Management: Concepts, Techniques and Tools}. Revised Edition.  Princeton, NJ:
Princeton University Press.


\bibitem[\protect\citeauthoryear{Natarajan et al.}{Natarajan et al.}{2008}]{NPS08}
Natarajan, K., Pachamanova, D. and Sim, M. (2008). Incorporating asymmetric distributional information in robust value-at-risk optimization. \emph{Management Science}, \textbf{54}(3), 573--585. 

\bibitem[\protect\citeauthoryear{Quiggin}{1982}]{Qui82}
Quiggin, J. (1982). A theory of anticipated utility. \emph{Journal of Economic Behavior \& Organization}, \textbf{3}(4), 323--343.


\bibitem[\protect\citeauthoryear{Rockafellar and  Uryasev}{Rockafellar and  Uryasev}{2000}]{RU00}
Rockafellar, R. T. and Uryasev, S. (2000). Optimization of conditional value-at-risk. \emph{Journal of Risk}, \textbf{2}, 21--42.

\bibitem[\protect\citeauthoryear{Rockafellar and Uryasev}{Rockafellar and Uryasev}{2002}]{RU02}
{Rockafellar, R. T. and Uryasev, S.} (2002). Conditional value-at-risk for general loss distributions. \emph{Journal of
Banking and Finance}, \textbf{26}(7), 1443--1471.
\bibitem[\protect\citeauthoryear{Rostek}{Rostek}{2010}]{R10}
 Rostek, M. (2010).
Quantile maximization in
decision theory. \emph{Review of Economic Studies}, \textbf{77}, 339--371. 
\bibitem[\protect\citeauthoryear{Schmeidler}{Schmeidler}{1986}]{S86}
{Schmeidler, D.} (1986). Integral representation without additivity.
\emph{Proceedings of the
American Mathematical Society}, \textbf{97}(2), 255--261.
\bibitem[\protect\citeauthoryear{Schmeidler}{Schmeidler}{1989}]{S89}
Schmeidler, D. (1989). Subjective probability and expected utility without additivity. \emph{Econometrica}, \textbf{57}(3),
571--587.

\bibitem[\protect\citeauthoryear{Tversky and Kahneman}{1992}]{TK92}
Tversky, A. and Kahneman, D. (1992). Advances in prospect theory: Cumulative representation of Uncertainty. \emph{Journal of Risk and Uncertainty}, \textbf{5}(4), 297--323.





\bibitem[\protect\citeauthoryear{Wang and Ziegel}{2021}]{WZ21}
 Wang, R., and Ziegel, J. (2021). Scenario-based risk
  evaluation.  \emph{Finance and Stochastics}, \textbf{25}, 725--756. 


\bibitem[\protect\citeauthoryear{Wang and Zitikis}{2021}]{WZ21b}
 Wang, R. and Zitikis, R. (2021). An axiomatic foundation for the Expected Shortfall. \emph{Management Science}, \textbf{67}, 1413--1429.

 


\bibitem[\protect\citeauthoryear{Yaari}{Yaari}{1987}]{Y87}
{Yaari, M. E.} (1987). The dual theory of choice under risk. \emph{Econometrica}, \textbf{55}(1), 95--115.



\bibitem[\protect\citeauthoryear{Zhu and Fukushima}{2009}]{ZF09}
Zhu, S. and Fukushima, M. (2009). Worst-case conditional value-at-risk with application to robust
portfolio management. \emph{Operations Research}, \textbf{57}(5), 1155-1168.


\end{thebibliography}
\end{document}